\documentclass[11pt]{article}
\usepackage{fullpage}
\usepackage[hyphens]{url}
\usepackage{amsmath,amssymb,amsthm,mathtools}
\usepackage{dsfont}
\usepackage{enumitem}
\usepackage{booktabs}
\usepackage{array}
\usepackage{microtype}
\usepackage{caption}
\usepackage{xcolor}
\usepackage{thmtools}
\usepackage{thm-restate}
\usepackage[
  date=year,
  eprint=false,
  doi=false,
  isbn=false,
  backend=biber,
  giveninits=true,
  uniquename=init,
  maxcitenames=2,
  maxbibnames=99,
  natbib=true,
  url=false,
  sorting=ynt,
  style=apa,
  citestyle=authoryear-comp,
  backref=true,
  uniquelist=false
]{biblatex}

\DeclareSourcemap{
  \maps[datatype=bibtex]{
    \map{
      \step[fieldset=editor, null]
      \step[fieldset=series, null]
      \step[fieldset=language, null]
      \step[fieldset=address, null]
      \step[fieldset=location, null]
      \step[fieldset=month, null]
      \step[fieldset=annote, null]
    }
  }
}
\DefineBibliographyStrings{english}{%
  backrefpage  = {cited on page},
  backrefpages = {cited on pages},
}
\DefineBibliographyStrings{american}{%
  backrefpage  = {cited on page},
  backrefpages = {cited on pages},
}
\DefineBibliographyStrings{american-apa}{%
  backrefpage  = {cited on page},
  backrefpages = {cited on pages},
}
\renewbibmacro*{pageref}{%
  \iflistundef{pageref}
    {}
    {\printtext[parens]{%
       \midsentence%
       \ifnumgreater{\value{pageref}}{1}
         {\bibstring{backrefpages}\ppspace}
         {\bibstring{backrefpage}\ppspace}%
       \printlist[pageref][-\value{listtotal}]{pageref}}}}
\usepackage[colorlinks=true,linkcolor=blue!45!black,citecolor=blue!45!black,urlcolor=blue!55!black]{hyperref}
\usepackage[nameinlink,capitalize]{cleveref}
\hypersetup{
  pdftitle={Sensitivity and Differential Privacy in Metric Voting with Distortion below Three},
  pdfauthor={Shinsaku Sakaue, Kaito Fujii, Soh Kumabe, Yuichi Yoshida}
}
\newtheorem{theorem}{Theorem}[section]
\newtheorem{lemma}[theorem]{Lemma}
\newtheorem{proposition}[theorem]{Proposition}
\newtheorem{corollary}[theorem]{Corollary}

\theoremstyle{definition}

\theoremstyle{remark}
\newtheorem{remark}[theorem]{Remark}

\crefname{theorem}{Theorem}{Theorems}
\crefname{lemma}{Lemma}{Lemmas}
\crefname{proposition}{Proposition}{Propositions}
\crefname{corollary}{Corollary}{Corollaries}
\crefname{claim}{Claim}{Claims}
\crefname{definition}{Definition}{Definitions}
\crefname{example}{Example}{Examples}
\crefname{remark}{Remark}{Remarks}
\crefname{appendix}{Appendix}{Appendices}
\Crefname{theorem}{Theorem}{Theorems}
\Crefname{lemma}{Lemma}{Lemmas}
\Crefname{proposition}{Proposition}{Propositions}
\Crefname{corollary}{Corollary}{Corollaries}
\Crefname{claim}{Claim}{Claims}
\Crefname{definition}{Definition}{Definitions}
\Crefname{example}{Example}{Examples}
\Crefname{remark}{Remark}{Remarks}
\Crefname{appendix}{Appendix}{Appendices}
\newcommand{\Cands}{\mathcal C}
\newcommand{\Voters}{\mathcal V}
\newcommand{\Lists}{\mathcal L}
\newcommand{\X}{\mathcal X}
\newcommand{\Xsn}{\mathcal X^{\mathsf{sn}}}
\newcommand{\E}{\mathbb E}
\newcommand{\Prob}{\mathbb P}
\newcommand{\SC}{\operatorname{SC}}
\newcommand{\OPT}{\operatorname{OPT}}
\newcommand{\Dist}{\operatorname{Dist}}
\newcommand{\AS}{\operatorname{AS}}
\newcommand{\Sens}{\operatorname{Sens}}
\newcommand{\RD}{\operatorname{RD}}
\newcommand{\SQ}{\operatorname{SQ}}

\let\set\relax
\DeclarePairedDelimiter{\set}{\{}{\}}
\let\Set\relax
\DeclarePairedDelimiterX{\Set}[2]{\{}{\}}{\,{#1}\,:\,{#2}\,}
\let\abs\relax
\DeclarePairedDelimiter{\abs}{|}{|}
\DeclareMathOperator*{\Supp}{\mathrm{Supp}}
\newcommand{\argmin}{\operatorname*{argmin}}
\newcommand{\consistent}{\rhd}
\newcommand*\diff{\mathop{}\!\mathrm{d}}
\newcommand{\W}{\mathrm W}
\newcommand{\e}{\mathrm e}
\newif\ifmainbodylabels
\mainbodylabelstrue

\newcommand{\mainbodyequationlabel}[1]{\ifmainbodylabels\label{#1}\else\notag\fi}
\title{Sensitivity and Differential Privacy in Metric Voting \\ with Distortion below Three}
\author{%
Shinsaku Sakaue%
\thanks{CyberAgent, Tokyo, Japan; National Institute of Informatics, Tokyo, Japan; Center for Advanced Intelligence Project, RIKEN, Tokyo, Japan. Email: \texttt{shinsaku.sakaue@gmail.com}.}%
\and
Kaito Fujii%
\thanks{Kyoto University, Kyoto, Japan. Email: \texttt{fujii@i.kyoto-u.ac.jp}.}%
\and
Soh Kumabe%
\thanks{CyberAgent, Tokyo, Japan. Email: \texttt{kumabe\_soh@cyberagent.co.jp}.}%
\and
Yuichi Yoshida%
\thanks{National Institute of Informatics, Tokyo, Japan. Email: \texttt{yyoshida@nii.ac.jp}.}%
}
\date{}
\begin{document}
\maketitle
\begin{abstract}
Voting rules aggregate individual preferences into collective decisions, but the rankings they receive contain only ordinal information.
The metric distortion framework studies ordinal voting rules in settings where voters and candidates are embedded in an unknown metric space.  Deterministic rules have optimal worst-case distortion $3$, while recent randomized rules break the $3$ barrier.  We study whether such improvements can coexist with low worst-case sensitivity with respect to the Wasserstein distance of lotteries under one-voter deletion and approximate differential privacy under one-voter replacement.  On the sensitivity side, we give a randomized rule with distortion at most $3-\varepsilon$ for an absolute constant $\varepsilon>0$ and, for $m$ candidates and $n$ voters, a worst-case sensitivity bound of $O((\log m+1)/n)$.  On the privacy side, for every $\delta\in(0,1)$ and all $n$ above an absolute constant, we construct a variant rule whose mechanism releasing a single sampled winner has distortion at most $3-\varepsilon$ and is $(O((\log m+\log(1/\delta)+1)/n),\delta)$-differentially private.  Both constructions use the same family of Gibbs distributions over constant-size candidate lists, with only the temperature parameter differing between the sensitivity and differential-privacy guarantees.  Our analysis builds on the biased-metric viewpoint behind the recent improvement over the $3$ barrier and proves a stability property for the biased-metric ratio.
\end{abstract}
\section{Introduction}
Voting rules aggregate individual preferences into collective decisions.  In single-winner elections, a rule maps the voters' rankings to one candidate or to a lottery over candidates.  Since rankings reveal only relative preferences, a basic question is how well an ordinal rule can approximate a socially efficient outcome when the underlying utilities or costs are hidden.  The distortion framework of \citet{ProcacciaRosenschein2006} formalizes this information loss by comparing the selected outcome with the optimum under the hidden utilities or costs; its metric version, developed by \citet{ABEPS2018}, specializes the comparison to metric costs.  There is a set of voters and a set of candidates embedded in a common metric space, and the social objective is to choose a candidate of small total distance to the voters.  The rule, however, sees only the ordinal rankings induced by the hidden metric.  The worst-case approximation factor under this information constraint is the rule's \emph{metric distortion}.  A long line of work, surveyed for example by \citet{AFSV2021}, studied lower bounds, fairness properties, deterministic constructions, and limited-information variants~\citep{GoelKrishnaswamyMunagala2017,MunagalaWang2019,AnagnostidesFotakisPatsilinakos2022,EbadianKahngPetersShah2022}.  It culminated in the resolution of the deterministic case: distortion $3$ is achievable by \citet{GHS2020} and optimal by the lower bound of \citet{ABEPS2018}.  Randomization improves the landscape.  With $n$ voters, Random Dictatorship has distortion $3-2/n$~\citep{AnshelevichPostl2017}; lower bounds above $2$ for randomized rules and structural progress were obtained by \citet{CharikarRamakrishnan2022} and \citet{PulyassarySwamy2021}.  \citet{CRWW2024} then broke the $3$ barrier for randomized rules by combining Maximal Lotteries with complementary rules, obtaining distortion at most $2.753$.  More recently, \citet{Cai2026} showed that bounded randomness already suffices to beat $3$: uniformly sampling the winner from a deterministically chosen constant-size multiset of candidates achieves distortion $3-\varepsilon$ for an absolute constant $\varepsilon>0$.
\begin{table}[t]
\centering
\small
\setlength{\tabcolsep}{4pt}
\renewcommand{\arraystretch}{1.15}
\begin{tabular}{p{0.25\textwidth}p{0.1\textwidth}p{0.26\textwidth}p{0.3\textwidth}}
\toprule
Statement & Distortion & Sensitivity & $(\varepsilon_{\mathsf{DP}},\delta)$-differential privacy \\
\midrule
Proposed Gibbs rule $Q^{(1)}$ & $\le 3-\varepsilon_{\mathsf{dist}}$ & $O((\log m+1)/n)$ & --- \\
Proposed Gibbs rule $Q_\delta$ & $\le 3-\varepsilon_{\mathsf{dist}}$ & $O((\log m+\log(1/\delta)+1)/n)$ & $(O((\log m+\log(1/\delta)+1)/n),\delta)$ \\
Random Dictatorship & $\le 3-2/n$ & $O(1/n)$ only on average & $(0,1/n)$ \\
Kempe's randomized rule & $\le 3-2/m$ & $O(1/n)$ only on average & $(0,\min\{1,9/n\})$ \\
Deterministic lower bounds & finite & $\Omega(1)$ worst-case for two candidates & $\delta\ge1$ (no nontrivial guarantee) \\
Universal lower bounds & finite & $\Omega(1/n)$ for two candidates & $\delta=\Omega(1/n)$ if $\varepsilon_{\mathsf{DP}}=O(1/n)$ \\
\bottomrule
\end{tabular}
\caption{Main guarantees, baselines, and lower bounds.  Here, $\varepsilon_{\mathsf{dist}}>0$ is an absolute distortion gap. Privacy entries concern the released winner, not the full lottery; the formal $u=1$ privacy guarantee for $Q^{(1)}$ has only the constant value $\delta=\e^{-1}$ and is omitted.  The baseline rules have $O(1/n)$ average sensitivity, but their worst-case sensitivity is $\Omega(1)$; their distortion bounds are size-fixed and their improvements over $3$ vanish with $n$ or $m$.
Lower bounds assume finite distortion.  For a subclass of deterministic rules, an $\Omega(1)$ average-sensitivity lower bound also holds.
Details of baseline results and lower bounds are in \Cref{app:baselines,app:universal-lower}.}
\label{tab:summary-results}
\end{table}
Alongside metric distortion, low sensitivity and privacy are important desiderata for voting rules.
On the privacy side, differential privacy~\citep{DworkMcSherryNissimSmith2006,DworkRoth2014} has been widely studied in various social choice settings~\citep{LiuLuXiaZikas2020,Torra2020,AlabiGhaziKumarManurangsi2022,TaoChenXuShi2022,LiLiuXiaCaoWang2023}.
For sensitivity, we ask how much the candidate lottery can change when a single voter is absent.
Worst-case sensitivity (or simply sensitivity for short), studied for randomized matching algorithms by \citet{YoshidaZhou2021}, measures the movement of an output distribution after a deletion of one input element using the Wasserstein distance.  We adapt this idea to metric voting to quantify changes in the candidate lottery.  Beyond its intrinsic interest, sensitivity in voting has another motivation in machine learning.
Recent work on benchmarking uses social-choice ideas to aggregate multi-task model evaluations, viewing models as candidates and benchmark tasks as voters~\citep{RofinEtAl2023,ZhangHardt2024}.  This perspective, although indirect, motivates a relevant sensitivity question: how much can the aggregate distribution change when one benchmark task is absent? For further discussion of related work, see \Cref{app:related-work}.

The above background motivates the following question:
\begin{quote}
\textit{Can randomized voting rules keep distortion below $3$ while satisfying low sensitivity and differential privacy?
   }
\end{quote}
\subsection{Our Contributions}
\paragraph{Main results.}  Let $m$ and $n$ denote the number of candidates and voters, respectively.  Our sensitivity result (\Cref{thm:main}) constructs a randomized ordinal rule $Q$ with distortion below $3$ by an absolute constant and worst-case sensitivity $O((\log m+1)/n)$ under one-voter deletion.  The sensitivity bound extends to multiple voter deletions, giving $O(k(\log m+1)/n)$ when $k\le n/2$ voters are deleted; see \Cref{cor:k-deletion-worst-case-sensitivity}.  For privacy, \Cref{thm:main-privacy} shows that, for every $\delta\in(0,1)$ and all sufficiently large $n$, there is a randomized ordinal rule $Q_\delta$ with distortion below $3$ by an absolute constant such that sampling and releasing a winner from its candidate lottery is $(O((\log m+\log(1/\delta)+1)/n),\delta)$-differentially private under replacement of one voter's ranking; specifically, setting $\delta=1/n$ gives $(O((\log m+\log n+1)/n),1/n)$-differential privacy.  The same rule $Q_\delta$ satisfies the corresponding sensitivity bound of $O((\log m+\log(1/\delta)+1)/n)$ by our general result in \Cref{thm:gibbs-over-lists}.
\Cref{tab:summary-results} summarizes these guarantees together with the baseline comparisons and lower bounds discussed below.
The construction is informational rather than computationally efficient; see \Cref{app:limitations} for limitations and open directions.

\paragraph{Lower bounds.}  To complement the main upper bounds, \Cref{app:universal-lower} gives the lower bounds summarized in \Cref{tab:summary-results}.
Every deterministic finite-distortion rule has $\Omega(1)$ worst-case sensitivity already on two-candidate profiles.  For a subclass of standard deterministic rules, an $\Omega(1)$ bound also holds for average sensitivity.  For differential privacy, no deterministic finite-distortion rule can satisfy any nontrivial guarantee.
For arbitrary randomized rules, we prove universal lower bounds: every finite-distortion rule has $\Omega(1/n)$ worst-case sensitivity (and $\Omega(1/n)$ average sensitivity as well), and when $\varepsilon_{\mathsf{DP}}=O(1/n)$, approximate differential privacy requires $\delta=\Omega(1/n)$.  Thus, the one-voter scale $1/n$ is unavoidable, and our upper bounds match this scale up to logarithmic factors; the necessity of the $\log m$ factor remains open (see \Cref{app:limitations}).

\paragraph{Baseline comparisons.}
For comparison, we analyze sensitivity and differential privacy for two existing randomized rules: Random Dictatorship, which has distortion $3-2/n$~\citep{AnshelevichPostl2017}, and the randomized rule of \citet{Kempe2020}, which has distortion $3-2/m$.  These distortion improvements over $3$ vanish as $n$ or $m$ grows.  Both rules satisfy an $O(1/n)$ bound for the weaker average sensitivity notion systematically studied by \citet{VarmaYoshida}, but their worst-case sensitivity is $\Omega(1)$.
For privacy, those rules are $(0,O(1/n))$-differentially private.  See \Cref{app:baselines} for details.  These comparisons motivate the question addressed by our main results: whether an absolute improvement over $3$ can coexist with worst-case sensitivity and approximate differential privacy close to the one-voter scale of $1/n$.
\subsection{Technical Overview}
Our construction starts from the bounded-randomness theorem of \citet{Cai2026}, used here in the list form stated in \Cref{prop:br-uniform-list}: there is an absolute constant $K$ such that every preference profile admits a $K$-list, an ordered list of $K$ candidates with repetitions allowed, whose induced uniform lottery has distortion strictly below $3$.  Through the existing biased-metric ratio characterization (\Cref{prop:biased-ratio}), this is equivalent to the existence of a $K$-list $L$ whose induced uniform lottery $p_L$ has normalized excess-cost value $\lambda_\sigma(L)\le 1-\gamma$ for an absolute constant $\gamma>0$.  For a fixed nonnegative vector~$x$ inducing a biased metric, the quantity has the form
\[
   \lambda_\sigma(L;x)
   =
   \frac{L_\sigma(p_L;x)}{R_\sigma(x)},
\]
and $\lambda_\sigma(L)$ is the supremum over $x$.  Here $L_\sigma(p_L;x)$ represents the excess-cost contribution of the induced uniform lottery, and $R_\sigma(x)$ is the normalization term determined by the voters' rankings $\sigma$.  The argument of \citet{Cai2026} gives a low-ratio list pointwise for each profile, but it does not by itself provide a rule with controlled sensitivity that selects one such list for each profile.  Rather than selecting a single low-ratio list at a profile, our rule randomizes over all $K$-lists and assigns each list $L$ a Gibbs weight $\exp(-\eta \lambda_\sigma(L))$, as formalized in \Cref{sec:gibbs}.  The inverse temperature $\eta$ is chosen on the order of $\log m^K \simeq K\log m$, so that the existence of a list $L$ with $\lambda_\sigma(L)\le 1-\gamma$ translates to a distortion guarantee below $3$ for the resulting lottery induced by the Gibbs distribution.

For the sensitivity and privacy analyses, the main task is to control how the ratio value $\lambda_\sigma(L)$ changes when one voter is deleted or one voter's ranking is replaced.  The analysis proceeds by focusing on sublevels of this ratio value; the contribution of lists with high ratios is controlled by Gibbs tail bounds.  Using the structure of the biased-metric ratio, we obtain a coverage inequality of the form $L_\sigma(p_L;x)+p_L(c)R_\sigma(x)\ge p_L(c)/2$ for some candidate $c$ in the support of $p_L$.  Since $p_L$ is obtained by drawing uniformly from the $K$ entries of $L$, the relevant mass $p_L(c)$ is at least $1/K$.  On a sublevel where $\lambda_\sigma(L;x)\le D$, combining the coverage inequality with the ratio bound implies a lower bound on the denominator $R_\sigma(x)$; see \cref{lem:denominator-lower-bound} for details.  This denominator bound is the crux of our stability analysis.
With it in hand, we show that deleting one voter or replacing one voter's ranking perturbs the $\lambda_\sigma(L)$ value by only $O(1/n)$ on these low-ratio sublevels, as shown using \Cref{prop:local-profile-ratio-change}.  This approach may also be useful for studying the biased-metric framework beyond the sensitivity and privacy applications considered here.

The remaining analysis uses this perturbation bound for the Gibbs rule.  For sensitivity, \Cref{prop:local-profile-ratio-change} gives the bound under a fixed voter deletion needed to control the change of the Gibbs distribution over lists; a transport argument then converts this list-level stability into the sensitivity bound in \Cref{thm:gibbs-over-lists}.  For differential privacy, the same proposition yields the bound under replacement of one voter's ranking on the low-ratio sublevels; combining this bound with the Gibbs tail bounds outside those sublevels gives \Cref{thm:gibbs-privacy}.
\section{Preliminaries}\label{sec:prelim}
Let $\Cands$ be the set of candidates with $|\Cands|=m$ and $\Voters$ the set of voters with $|\Voters|=n$.  Preferences are strict linear orders; ties in an underlying metric, if any, are broken arbitrarily before the profile is given to the rule.  We write $a\succ_v b$ if voter $v$ ranks $a$ above $b$.  An election instance is a tuple $\mathcal E=(\Voters,\Cands,\sigma)$, where $\sigma=(\sigma_v)_{v\in\Voters}$ is the reported preference profile.  The hidden metric information is represented by a possibly non-injective embedding $\phi\colon\Voters\cup\Cands\to(\mathcal M,d_{\mathcal M})$ into a metric space.  We write $d(x,y)=d_{\mathcal M}(\phi(x),\phi(y))$, so $d$ is a pseudometric on the labeled voters and candidates.  We write $d\consistent\mathcal E$ if this pseudometric is consistent with the election instance, meaning that $a\succ_v b$ implies $d(v,a)\le d(v,b)$ for every voter $v$ and candidates $a,b$.
When $\Voters$ and $\Cands$ are fixed, we identify the election instance with its profile and also write $d\consistent\sigma$.
For a candidate $c$ and a lottery $p\in\triangle(\Cands)$, define the average social costs by
\[
   \SC_{d,\sigma}(c)=\frac1n\sum_{v\in\Voters}d(v,c),
   \quad
   \SC_{d,\sigma}(p)=\sum_{c\in\Cands}p(c)\SC_{d,\sigma}(c),
\]
and let $\OPT_{d,\sigma}=\min_{c\in\Cands}\SC_{d,\sigma}(c)$.
When the profile is clear, we write simply $\SC_d$ and $\OPT_d$.

A randomized voting rule $F$ maps each election instance $\mathcal E$ to a lottery over its candidates.  For $\mathcal E=(\Voters,\Cands,\sigma)$, we write this lottery as $F(\sigma)\in\triangle(\Cands)$.  Its distortion is defined as
\[
   \Dist(F)
   =
   \sup_{\mathcal E=(\Voters,\Cands,\sigma)}\sup_{d:d\consistent\mathcal E}
   \frac{\SC_{d,\sigma}(F(\sigma))}{\OPT_{d,\sigma}},
\]
where a ratio with denominator $0$ is interpreted in the extended sense: it is $0$ if the numerator is $0$ and $+\infty$ otherwise.  This convention keeps zero-optimum instances in the distortion definition and forces every finite-distortion rule to have zero rule cost whenever $\OPT_{d,\sigma}=0$.  The biased-metric ratio formulation in \cref{sec:biased-ratio} uses the same extended-ratio convention.
\subsection{Sensitivity}\label{subsec:sensitivity}
We use metric-normalized worst-case sensitivity under one-voter deletion. For two lotteries $p,q\in\triangle(\Cands)$, let $\W_d(p,q)$ denote their 1-Wasserstein distance with respect to the restriction of $d$ to the candidates, defined by
\[
   \W_d(p,q)
   =
   \min_{\pi\in\Gamma(p,q)}\sum_{a,b\in\Cands}\pi(a,b)d(a,b),
\]
where $\Gamma(p,q)$ is the set of couplings of $p$ and $q$.
If $\sigma^{-v}$ is the profile obtained by deleting voter $v$, then for fixed numbers $m$ of candidates and $n\ge2$ of voters, we define the size-fixed worst-case sensitivity by
\[
   \Sens_{m,n}(F)
   =\!\!\!\!\!\!\!
   \sup_{\substack{\mathcal E=(\Voters,\Cands,\sigma),\ |\Cands|=m,\\  |\Voters|=n,\ d\consistent\mathcal E,\ \OPT_{d,\sigma}>0}}
   \!\!\!
   \max_{v\in\Voters}
   \frac{\W_d(F(\sigma),F(\sigma^{-v}))}{\OPT_{d,\sigma}}.
\]
The supremum ranges over all election instances of the given size and all consistent pseudometrics.  The assumption $n\ge2$ ensures that each profile after deletion has at least one voter.  We use the same pseudometric $d$, restricted to $\Cands$, to measure the Wasserstein distance between $F(\sigma)$ and $F(\sigma^{-v})$; only the profile input to the rule changes.
The normalization by $\OPT_{d,\sigma}$ makes the Wasserstein term scale invariant, since both the numerator and denominator scale linearly with $d$.  The condition $\OPT_{d,\sigma}>0$ excludes degenerate normalized terms of the form $0/0$.  For finite-distortion rules, the excluded terms have zero numerator and equal $0$ under the extended-ratio convention; see \Cref{app:normalization-conventions}.  We use the pre-deletion optimum because $\OPT_{d,\sigma^{-v}}$ can be zero even when $\OPT_{d,\sigma}>0$.
Average sensitivity $\AS_{m,n}(F)$, defined formally in \Cref{app:baselines}, replaces the maximum by the average over deleted voters, and therefore satisfies $\AS_{m,n}(F)\le\Sens_{m,n}(F)$.
\subsection{Differential Privacy}\label{subsec:privacy}
We say two profiles $\sigma$ and $\tau$ on the same voter and candidate sets are \emph{replacement-neighboring} if they differ in the reported ranking of exactly one voter.  A randomized voting rule $F$ determines a distribution $F(\sigma)\in\triangle(\Cands)$, and the associated released-winner mechanism draws a single candidate $Y_\sigma\sim F(\sigma)$.  For $\varepsilon,\delta\ge0$, the released-winner mechanism is $(\varepsilon,\delta)$-differentially private under replacement adjacency if, for every pair of replacement-neighboring profiles $\sigma,\tau$ and every event $S\subseteq\Cands$, the released winner satisfies
\[
   \Prob[Y_\sigma\in S]
   \le
   \e^\varepsilon\Prob[Y_\tau\in S]+\delta.
\]
The case of $\delta=0$ is pure differential privacy.  The definition follows the standard formulation \citep{DworkMcSherryNissimSmith2006,DworkRoth2014} applied to the released candidate rather than to the full lottery $F(\sigma)$.
Note that the mechanism receives the full reported profile, and only the sampled winner is released.\footnote{
   This is distinct from ballot secrecy, which aims to prevent linking an identified voter to the content of that voter's ballot, including by election officials or other observers.
}
\subsection{\texorpdfstring{$K$-Lists}{K-Lists} and Induced Lotteries}
Throughout this paper, for $K\in\mathbb N_{\ge1}$, a $K$-list is an ordered $K$-tuple $L=(c_1,\ldots,c_K)\in\Cands^K$.  All sums over such lists count ordered tuples in $\Cands^K$.  The list induces the uniform lottery
\[
   p_L(c)=\frac1K\abs*{\Set*{j}{c_j=c}}.
\]
Let $\Lists_K=\Cands^K$ denote the set of all $K$-lists, so $|\Lists_K|=m^K$.  Different ordered tuples may induce the same lottery, but they remain distinct list states; equivalently, we count lists rather than induced lotteries.

For a fixed profile $\sigma$ and lottery $p$, define the worst-case distortion of $p$ on profile $\sigma$ by
\[
   \rho_\sigma(p)
   =
   \sup_{d:d\consistent\sigma}
   \frac{\SC_{d,\sigma}(p)}{\OPT_{d,\sigma}}.
\]
For a list $L$, write $\rho_\sigma(L)=\rho_\sigma(p_L)$.
\section{Biased-Metric Framework for Lists}\label{sec:biased-ratio}
Biased metrics were introduced by \citet{CharikarRamakrishnan2022}; the ratio characterization used here is due to \citet{CRWW2024} and is also used by \citet{Cai2026} in their bounded-randomness result.  We explain this notion, state the bounded-randomness result used by our construction, and provide a support-normalized reduction for list lotteries.

For a fixed profile $\sigma$ and lottery $p$, the biased-metric characterization represents the worst-case distortion as $\rho_\sigma(p)=1+2\lambda_\sigma(p)$, where $\lambda_\sigma(p)$ is the supremum of the fixed-vector ratios defined below.  For lists, we write $\lambda_\sigma(L)=\lambda_\sigma(p_L)$.

To define this ratio, consider the normalized set of vectors
\[
   \X=\Set*{x\in[0,1]^{\Cands}}{\min_c x_c=0,\ \max_c x_c=1}.
\]
In the biased-metric interpretation, $x_j$ represents the distance from an optimal candidate to candidate $j$, after scaling.  The normalization loses no generality because the ratios are scale invariant; it fixes the lowest candidate level at $0$ and the highest at $1$.  For $x\in\X$ and $t\ge0$, define the lower level set
\[
   I_t(x)=\Set*{j\in\Cands}{x_j\le t}.
\]
Thus $I_t(x)$ contains the candidates within distance level $t$ of the optimum.

We now define the quantities that depend on the reported profile.  For a subset $A\subseteq\Cands$ and candidate $j$, let
\[
   s_\sigma(A\succ j)
   =
   \frac1n\abs*{\Set*{v\in\Voters}{\forall a\in A,\ a\succ_v j}},
\]
where $a\succ_v j$ denotes strict preference in voter $v$'s reported linear order.  Thus $s_\sigma(A\succ j)$ is the fraction of voters who rank every candidate in $A$ above $j$.  If $A=\emptyset$, the condition is vacuous and $s_\sigma(\emptyset\succ j)=1$.  If $j\in A$, then $s_\sigma(A\succ j)=0$, since no candidate is strictly above itself.

For a lottery $p\in\triangle(\Cands)$, set
\[
   \ell_{\sigma,p,x}(t)
   =
   \sum_{j\notin I_t(x)}s_\sigma(I_t(x)\succ j)p(j).
\]
At level $t$, this quantity looks only at lottery mass on candidates still outside the lower level set and asks what fraction of voters rank the entire lower level set above such a candidate.  In this sense $\ell_{\sigma,p,x}(t)$ is the lottery-dependent threshold term used in the ratio numerator below.

For each voter $v$, define
\[
   g_v(x)=\max_{a\succ_v b}(x_a\!-\!x_b)_+,
   \quad
   r_{\sigma,x}(t)
   =\frac1n\abs*{\Set*{v\in\Voters\!}{\!g_v(x)>t}}.
\]
where $y_+\coloneqq\max\{y,0\}$.  The quantity $g_v(x)$ is the largest positive increase in $x$ along a preference comparison of voter $v$: if voter $v$ ranks $a$ above $b$ while $x_a>x_b$, then $a$ is farther from the reference optimum than $b$ is by $x_a-x_b$.  Thus $r_{\sigma,x}(t)$ is the fraction of voters for which this largest gap exceeds $t$.

Finally define the integrated lottery-dependent term by
\[
   L_\sigma(p;x)=\int_0^\infty \ell_{\sigma,p,x}(t)\,\diff t.
\]
Also define the voter-dependent normalization by
\[
      R_\sigma(x)
   =\int_0^\infty r_{\sigma,x}(t)\,\diff t
   =\frac1n\sum_v\int_0^\infty \mathds{1}(g_v(x)>t)\,\diff t
   =\frac1n\sum_v g_v(x),
\]
where $\mathds{1}(\cdot)$ takes $1$ if the argument is true and $0$ otherwise.
Since $x\in[0,1]^{\Cands}$, all integrands vanish for $t\ge1$.
For a fixed vector $x$, the ratio $\lambda_\sigma(p;x) = L_\sigma(p;x)/R_\sigma(x)$ compares the integrated lottery-dependent term $L_\sigma(p;x)$ with the voter-dependent normalization $R_\sigma(x)$.  The biased-metric characterization says that the supremum of this fixed-vector ratio over $x$ is exactly the excess-distortion parameter $\lambda_\sigma(p)$.
\begin{proposition}[Biased-Metric Ratio Characterization; {\normalfont\citealt[Theorem~2.8]{Cai2026}, after \citealt{CRWW2024}}]\label{prop:biased-ratio}
Assume $|\Cands|\ge2$.  For every profile $\sigma$ and lottery $p$, we have
\[
   \rho_\sigma(p)=1+2\lambda_\sigma(p),
\]
where $\lambda_\sigma(p)=\sup_{x\in\X}\lambda_\sigma(p;x)$ and, for each $x\in\X$,
\[
   \lambda_\sigma(p;x)=
   \begin{cases}
   L_\sigma(p;x)/R_\sigma(x),& R_\sigma(x)>0,\\[1mm]
   0,& R_\sigma(x)=L_\sigma(p;x)=0,\\[1mm]
   +\infty,& R_\sigma(x)=0<L_\sigma(p;x).
   \end{cases}
\]
\end{proposition}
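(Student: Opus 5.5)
The plan is to prove the two inequalities $\rho_\sigma(p)\le 1+2\lambda_\sigma(p)$ and $\rho_\sigma(p)\ge 1+2\lambda_\sigma(p)$ separately, for a fixed profile $\sigma$ and lottery $p$. Both directions revolve around a fixed optimal candidate $o$ and the vector of distances $x_j=d(o,j)$, rescaled into $\X$; rescaling is harmless because both $\SC$ and the ratio $L_\sigma/R_\sigma$ are scale invariant.

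\emph{Upper bound.} Fix a consistent $d$ with an optimal candidate $o$, and put $x_j=d(o,j)$, normalized. I will write
\[
\SC_d(p)=\OPT_d+E,\qquad E=\sum_j p(j)\,\frac1n\sum_v\bigl(d(v,j)-d(v,o)\bigr),
\]
and aim for the two bounds $\OPT_d\ge R_\sigma(x)/2$ and $E\le L_\sigma(p;x)$. Together they give $\SC_d(p)/\OPT_d\le 1+2\lambda_\sigma(p;x)$, with the degenerate cases handled by the extended-ratio convention.

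The bound on $\OPT_d$ is a triangle-inequality computation. If $a\succ_v b$, consistency gives $d(v,a)\le d(v,b)$, hence
\[
d(o,a)\le d(o,v)+d(v,a)\le d(o,v)+d(v,b)\le 2d(o,v)+d(o,b).
\]
So $g_v(x)\le 2d(v,o)$, and averaging over voters gives $\OPT_d\ge R_\sigma(x)/2$.

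For $E$, I will use the layer-cake identity $x_j=\int_0^\infty \mathds 1(j\notin I_t(x))\,\diff t$. The goal is to show that, after averaging over voters, the excess $d(v,j)-d(v,o)$ contributes at level $t$ at most the fraction of voters ranking all of $I_t(x)$ above $j$. The intended mechanism is as follows. A voter who ranks some $a\in I_t(x)$ below $j$ has $d(v,j)\le d(v,a)$. The excess this voter incurs beyond level $t$ is then already paid for by the slack in the bound $\OPT_d\ge R_\sigma/2$, namely by the part of $g_v$ exceeding $t$. The two bounds therefore have to be proved jointly, as a single inequality
\[
E\le L_\sigma(p;x)+\lambda\bigl(2\OPT_d-R_\sigma(x)\bigr),
\]
rather than separately. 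I expect this accounting to be the main obstacle, because it must be done level by level in $t$ without double counting. I would try to first reduce to biased metrics. These are the extremal metrics of \citet{CharikarRamakrishnan2022}, in which each voter sits on a path from $o$ to the candidates with distances determined by $x$ and a single per-voter parameter. In that setting $E$ and $\OPT_d$ can be written as explicit integrals over $t$, and the claimed inequality can be checked pointwise in $t$.

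\emph{Lower bound.} Given $x\in\X$ with $R_\sigma(x)>0$, I will construct a consistent biased (pseudo)metric attaining ratio $1+2\lambda_\sigma(p;x)$, possibly only in the limit. Candidates are placed at distance $x_j$ from $o$. Each voter $v$ is placed at distance $g_v(x)/2$ from $o$, with distances to the candidates defined so that the voter's ranking is respected. The construction should make the per-voter excess for candidate $j$ equal to the measure of the levels $t$ at which $v$ ranks all of $I_t(x)$ above $j$. Checking the triangle inequalities and consistency is routine but fiddly. Equality in the upper-bound chain then yields $\rho_\sigma(p)\ge 1+2\lambda_\sigma(p;x)$, and taking the supremum over $x$ finishes the proof. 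The case $R_\sigma(x)=0<L_\sigma(p;x)$ follows from the same construction: it gives $\OPT_d=0$ while $\SC_d(p)>0$, so the ratio is $+\infty$ under the extended-ratio convention.
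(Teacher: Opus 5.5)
Your route differs from the paper's, which does not prove this proposition at all. It imports \citet[Theorem~2.8]{Cai2026} (after \citet{CRWW2024}), and \Cref{app:biased-ratio-details} only translates conventions: the smallest admissible $\lambda$ is $\sup_x L_\sigma(p;x)/R_\sigma(x)$, the case $R_\sigma(x)=0<L_\sigma(p;x)$ is a zero-optimum biased metric with positive lottery cost, and restricting to $\X$ is a scale normalization. A self-contained proof is a legitimate alternative, and your lower-bound construction is the right one. Write $\underline x_v(j)\coloneqq\min\bigl(\{x_j\}\cup\{x_a: j\succ_v a\}\bigr)$; then the voter-$v$ part of $L_\sigma(p;x)$ is exactly $\frac1n\sum_j p(j)\underline x_v(j)$. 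Take the shortest-path closure of $d(o,j)=x_j$, $d(v,o)=g_v(x)/2$, $d(v,j)=g_v(x)/2+\underline x_v(j)$. This pseudometric is consistent, because $\underline x_v$ is monotone along $v$'s ranking. It preserves these lengths, since the only inequalities needed are $\underline x_v(j)\le x_j\le g_w(x)+\underline x_w(j)$. It also has $\SC(o)=R_\sigma(x)/2$ and $\SC(p)-\SC(o)=L_\sigma(p;x)$, so the ratio is attained exactly and no limit is needed.

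The gap is in the upper bound: you never prove $E\le L_\sigma(p;x)$. You claim it cannot be separated from $\OPT_d\ge R_\sigma(x)/2$ and defer to a ``reduction to biased metrics''. That reduction (every consistent metric can be replaced by a biased one without lowering the ratio) is precisely the content of the upper bound, so the plan is circular as written.

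The diagnosis is also mistaken: the bound holds voter by voter, with no help from the denominator. For every voter $v$ and candidate $j$, the triangle inequality gives $d(v,j)-d(v,o)\le d(o,j)=x_j$. For every $a$ with $j\succ_v a$, consistency and the triangle inequality give $d(v,j)-d(v,o)\le d(v,a)-d(v,o)\le d(a,o)=x_a$. Hence $d(v,j)-d(v,o)\le\underline x_v(j)$, which is exactly the measure of the levels $t$ at which $j\notin I_t(x)$ and $v$ ranks all of $I_t(x)$ above $j$. Averaging over voters and over $p$ gives $E\le L_\sigma(p;x)$.

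Your ``mechanism'' sentence had the first step, $d(v,j)\le d(v,a)$. It missed that the triangle inequality then caps the excess at $x_a\le t$, so there is never any excess beyond level $t$ for the slack $2\OPT_d-R_\sigma(x)$ to pay for. With this inequality, your two separate bounds and a rescaling by $\max_c d(o,c)$ finish the upper bound. The case $\max_c d(o,c)=0$ gives ratio $1$. The case $\OPT_d=0$ forces $R_\sigma(x)=0$ and is handled by the extended-ratio convention.
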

When $m=1$, we use the convention $\lambda_\sigma(p)=0$; every rule outputs the unique candidate, so $\rho_\sigma(p)=1$.

This is the biased-metric characterization used in recent improvements over the $3$-distortion barrier for randomized rules; in this notation it follows from \citet[Theorem~2.8]{Cai2026}, with attribution there to \citet{CRWW2024}.  We use the same notation for lists by applying the characterization to the induced lottery $p_L$.  \Cref{app:biased-ratio-details} gives the translation from the existing statement and the convention details.

With this notation in place, the bounded-randomness theorem \citep{Cai2026} gives the following low-distortion list, which immediately implies a bound on $\lambda_\sigma(L)$.
\begin{proposition}[List Consequence of Bounded Randomness; {\normalfont\citealt[Theorem~7.6, Corollary~7.7, Remark~7.8]{Cai2026}}]\label{prop:br-uniform-list}
There are absolute constants $K_{\mathsf{BR}}\in\mathbb N_{\ge1}$ and $\varepsilon_{\mathsf{BR}}>0$ such that every preference profile $\sigma$ admits a $K_{\mathsf{BR}}$-list $L$ with
\[
   \rho_\sigma(L)
   \le
   3-\varepsilon_{\mathsf{BR}}.
\]
Using the identity $\rho_\sigma(L)=1+2\lambda_\sigma(L)$ from \Cref{prop:biased-ratio}, this implies that, with $\gamma=\varepsilon_{\mathsf{BR}}/2$, we have
\[
   \lambda_\sigma(L)\le 1-\gamma.
\]
\end{proposition}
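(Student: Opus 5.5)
This proposition is essentially a restatement of the bounded-randomness theorem of \citet{Cai2026} in list language, followed by a one-line translation through \Cref{prop:biased-ratio}. The plan therefore has two parts: first extract the list statement from the cited result, then convert the distortion bound into a bound on $\lambda_\sigma(L)$.

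For the first part, I would invoke \citet[Theorem~7.6, Corollary~7.7]{Cai2026}. It says there is an absolute constant $K_{\mathsf{BR}}$ and, for each profile $\sigma$, a deterministically chosen multiset of at most $K_{\mathsf{BR}}$ candidates whose uniform lottery has worst-case distortion at most $3-\varepsilon_{\mathsf{BR}}$ on $\sigma$. Following \citet[Remark~7.8]{Cai2026}, I would pass to a multiset of size exactly $K_{\mathsf{BR}}$, for instance by taking a common multiple of the possible multiset sizes and replicating entries. Uniform sampling from a replicated multiset yields the same lottery, and distortion depends only on the lottery. Ordering the entries arbitrarily then gives an ordered tuple $L\in\Cands^{K_{\mathsf{BR}}}$ with $p_L$ equal to that lottery, so $\rho_\sigma(L)=\rho_\sigma(p_L)\le 3-\varepsilon_{\mathsf{BR}}$.

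For the second part, assume first that $m\ge2$. Then \Cref{prop:biased-ratio} applies to $p_L$ and gives $1+2\lambda_\sigma(L)=\rho_\sigma(L)\le 3-\varepsilon_{\mathsf{BR}}$. Rearranging yields $\lambda_\sigma(L)\le 1-\varepsilon_{\mathsf{BR}}/2=1-\gamma$. If $m=1$, the convention $\lambda_\sigma=0$ gives the bound trivially, provided $\gamma\le1$. We may assume this, since shrinking $\varepsilon_{\mathsf{BR}}$ costs nothing.

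The only real obstacle is bookkeeping rather than mathematics. One must check that the cited theorem's distortion notion, including the extended-ratio convention for zero optimum, matches $\rho_\sigma$ as defined here. One must also check that "constant-size multiset" can be made an exactly-$K$ ordered list without changing the lottery. Both are settled by the observation that $\rho_\sigma$ depends only on the induced lottery.
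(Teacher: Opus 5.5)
\paragraph{Verdict.} Your translation step is fine: the use of \Cref{prop:biased-ratio} and the $m=1$ convention are correct. The gap is in the first part. It misstates what the cited results give, and it skips the one substantive step in the paper's justification (\Cref{app:br-consequence}).

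\paragraph{What the cited results give.} Theorem~7.6 of Cai does not output a uniform lottery on a bounded multiset. It outputs a \emph{real-weighted} mixture $p_\mu=\mu\,p_{\mathsf{ML}}+(1-\mu)\,p_{\mathsf{PL}}$ of two list-realizable components, a RepApx Maximal Lottery and a RepApx Pruned Lottery. The mixing weight is a fixed constant, $\mu=1-L_k^2/2000$ with $L_k=\log(k/4)/k$, which is irrational. Corollary~7.7 bounds only the \emph{support size}. Bounded support does not make a lottery uniform over a bounded multiset: for that, the weights must be rationals with a common denominator independent of the profile and of $m,n$. Remark~7.8 is the observation that a \emph{rational} mixing weight yields a constant-size list, and it presupposes the rounding you skip. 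You close by saying everything is settled because $\rho_\sigma$ depends only on the induced lottery. That handles padding and reordering, but not this issue, since replacing $\mu$ by a rational changes the lottery.

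\paragraph{The missing step.} You need a rounding argument with controlled loss. Pick $r/q$ with $q$ an absolute constant and $|r/q-\mu|\le 1/(2q)$, and show the margin below $3$ survives. The paper notes that the margin lower bounds from Cai's Claims~B.4 and~B.5 are affine in the mixing weight, with slopes of absolute value at most $16$. Hence $q=\lceil 16/\zeta_{\mathsf{fixed}}\rceil$ costs at most half the margin, giving $\varepsilon_{\mathsf{BR}}=\zeta_{\mathsf{fixed}}/2$.

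Alternatively, the identity $\SC_{d,\sigma}(p_{\mu'})-\SC_{d,\sigma}(p_{\mu})=(\mu'-\mu)\bigl(\SC_{d,\sigma}(p_{\mathsf{ML}})-\SC_{d,\sigma}(p_{\mathsf{PL}})\bigr)$ gives the same conclusion. This works once you know both components have distortion bounded by an absolute constant.

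\paragraph{Where your padding fits.} Only after rounding does your common-multiple padding enter. Pad both component lists to a common length $K_0$, then concatenate $r$ copies of one with $q-r$ copies of the other, so $K_{\mathsf{BR}}=qK_0$. Consequently $K_{\mathsf{BR}}$ depends on the size of the margin (here $q\ge 10^9$), which your account hides.
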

\begin{remark}[Use of the Result of \citealt{Cai2026}]
\Cref{prop:br-uniform-list} is the only property of the construction of \citet{Cai2026} used in our analysis.
\Cref{app:br-consequence} gives the details of how the result of \citet{Cai2026} translates to the list form used here.  In the appendix, we also give a conservative calculation showing that the Gibbs rule constructed in \Cref{sec:gibbs} has distortion less than $3-5.9\times10^{-9}$.
\end{remark}
For our stability analysis below, it is useful to restrict the fixed-vector supremum for a list to vectors whose maximum is attained on the support of the list.  For a list $L$, define the set of support-normalized vectors
\[
   \Xsn(L)=\Set[\Big]{x\in\X}{\max_{c\in\Supp(p_L)}x_c=1}.
\]
Here $\Supp(p)=\Set*{c\in\Cands}{p(c)>0}$ is the support of a lottery $p$.  When $|\Cands|\ge2$, the supremum for a list can always be taken over this smaller set.
\begin{restatable}[Support-Normalized Reduction]{lemma}{SupportNormalizedReduction}\label{lem:support-normalized-reduction}
Assume $|\Cands|\ge2$.
For every profile $\sigma$ and every list $L$, it holds that
\[
   \lambda_\sigma(L)
   =
   \sup\Set*{\lambda_\sigma(L;x)}{x\in\Xsn(L)}.
\]
\end{restatable}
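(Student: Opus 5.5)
The inclusion $\Xsn(L)\subseteq\X$ immediately gives $\sup_{x\in\Xsn(L)}\lambda_\sigma(L;x)\le\lambda_\sigma(L)$. For the reverse inequality, I will take an arbitrary $x\in\X$ and produce $x'\in\Xsn(L)$ with $\lambda_\sigma(L;x')\ge\lambda_\sigma(L;x)$, using the extended-ratio conventions of \Cref{prop:biased-ratio}. Write $S=\Supp(p_L)$ and $M=\max_{c\in S}x_c\in[0,1]$. The construction is to truncate $x$ at level $M$ and then rescale.

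\emph{Degenerate case $M=0$.} Every support candidate lies in $I_t(x)$ for all $t\ge0$. Hence $\ell_{\sigma,p_L,x}(t)=0$ identically and $L_\sigma(p_L;x)=0$, so $\lambda_\sigma(L;x)=0$ in either convention. All fixed-vector ratios are nonnegative, so it suffices to check that $\Xsn(L)\neq\emptyset$. Since $|\Cands|\ge2$, I pick $c\in S$, set $x_c=1$ and all other coordinates to $0$; this vector lies in $\Xsn(L)$.

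\emph{Main case $M>0$, truncation step.} Define $\tilde x_c=\min(x_c,M)$.
\begin{itemize}[leftmargin=*]
\item The numerator is unchanged. For $t<M$, the truncation only alters coordinates above $M>t$, so $I_t(\tilde x)=I_t(x)$. Since $\ell(t)$ depends on $x$ only through $I_t$ and the masses $p_L(j)$, we get $\ell_{\sigma,p_L,\tilde x}(t)=\ell_{\sigma,p_L,x}(t)$. For $t\ge M$, all of $S$ lies in both $I_t(x)$ and $I_t(\tilde x)$, so both terms vanish. Therefore $L_\sigma(p_L;\tilde x)=L_\sigma(p_L;x)$.
\item The denominator does not increase. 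The map $y\mapsto\min(y,M)$ is monotone and $1$-Lipschitz, so $(\tilde x_a-\tilde x_b)_+\le(x_a-x_b)_+$ for every pair $a,b$. This gives $g_v(\tilde x)\le g_v(x)$ and hence $R_\sigma(\tilde x)\le R_\sigma(x)$.
\item The ratio weakly increases. If $R_\sigma(\tilde x)>0$ this is immediate. If $R_\sigma(\tilde x)=0$, the value is $0$ or $+\infty$ according to whether $L=0$ or $L>0$, which matches or exceeds the old value.
\end{itemize}

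\emph{Main case, rescaling step.} Set $x'=\tilde x/M$. Its minimum is $0$, because $\min x=0\le M$. Its maximum over all of $\Cands$ and over $S$ are both $1$, so $x'\in\Xsn(L)$. Scaling leaves the ratio unchanged: $I_t(x')=I_{Mt}(\tilde x)$ gives $\ell_{x'}(t)=\ell_{\tilde x}(Mt)$, and $g_v(x')=g_v(\tilde x)/M$. A change of variables then shows that $L$ and $R$ are both multiplied by $1/M$.

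I expect the only delicate point to be the bookkeeping of the $0/0$ and $+\infty$ conventions in the truncation step. Everything else is a direct verification from the definitions.
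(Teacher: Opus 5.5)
Your proof is correct and follows essentially the same clipping-and-rescaling argument as the paper. The paper combines your two steps into the single map $y_c=\min\{x_c,h\}/h$ and directly obtains $L_\sigma(p_L;y)=L_\sigma(p_L;x)/h$ and $R_\sigma(y)\le R_\sigma(x)/h$; your split into truncation (numerator unchanged, denominator nonincreasing) followed by pure rescaling (ratio invariant) is a cosmetic reorganization, and your handling of the $M=0$ case and of the extended-ratio conventions matches the paper's.
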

See \Cref{app:support-normalized-proof} for the proof.
Intuitively, it is a clipping-and-rescaling argument.  For any vector $x$, let $h$ be the largest $x$-value among candidates in the support of the list.  If $h>0$, replacing $x$ coordinatewise by $\min\{x,h\}/h$ makes the vector support-normalized, preserves the lower level sets relevant to the support under the change of variables from $t$ to $u=ht$, and cannot decrease the ratio.
If $h=0$, the numerator is zero as no positive-mass candidate lies outside any lower level set.
\section{Stability of Bounded Biased-Metric Ratios}\label{sec:list-stability}
This section presents a stability argument that plays a key role in the sensitivity and differential-privacy analysis.
\subsection{Denominator Lower Bound on Low Sublevels}
We first control the change of the ratio under voter deletion or replacement.  To this end, we give a denominator lower bound on low sublevels of $\lambda_\sigma(L;x)$.  Specifically, if $x\in\Xsn(L)$ satisfies $\lambda_\sigma(L;x)\le D$, $R_\sigma(x)$ is bounded below as follows.
\begin{lemma}[Denominator Lower Bound]\label{lem:denominator-lower-bound}
Let $L$ be a $K$-list and let $x\in\Xsn(L)$ and $D\ge0$.
If $\lambda_\sigma(L;x)\le D$, then the denominator satisfies
\[
   R_\sigma(x)
   \ge
   \underline R_{K,D}
   \coloneqq
   \frac{1}{2(KD+1)}.
\]
In particular, if $\lambda_\sigma(L)\le D$, then the same lower bound holds for every $x\in\Xsn(L)$.
\end{lemma}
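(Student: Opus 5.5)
The plan is to prove a pointwise \emph{coverage inequality} for a single support candidate at the top level, integrate it over $t\in[0,1/2)$, and then combine it with the ratio bound $L_\sigma(p_L;x)\le D\,R_\sigma(x)$. Since $x\in\Xsn(L)$, I fix a candidate $c\in\Supp(p_L)$ with $x_c=1$. Because $p_L$ is uniform over the $K$ entries of $L$, we have $p_L(c)\ge 1/K$. Since $\min_j x_j=0$, every lower level set $I_t(x)$ with $t\ge0$ is nonempty. For $t<1$, we have $c\notin I_t(x)$.

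\textbf{Pointwise bound.} Fix $t\in[0,1/2)$ and write $I=I_t(x)$. Keeping only the term $j=c$ in the definition of $\ell_{\sigma,p_L,x}(t)$ gives
\[
   \ell_{\sigma,p_L,x}(t)\ge s_\sigma(I\succ c)\,p_L(c).
\]
Now take any voter $v$ who does \emph{not} rank every candidate of $I$ above $c$. Since $I$ is nonempty, there is some $a\in I$ with $c\succ_v a$. Then
\[
   g_v(x)\ge x_c-x_a\ge 1-t>t,
\]
using $t<1/2$. Hence $r_{\sigma,x}(t)\ge 1-s_\sigma(I\succ c)$, and therefore
\[
   \ell_{\sigma,p_L,x}(t)+p_L(c)\,r_{\sigma,x}(t)\ge p_L(c)
   \qquad\text{for all }t\in[0,1/2).
\]

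\textbf{Integration.} Integrating over $t\in[0,1/2)$ and using nonnegativity of both integrands on the remaining range yields the coverage inequality
\[
   L_\sigma(p_L;x)+p_L(c)\,R_\sigma(x)\ge p_L(c)/2.
\]

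\textbf{Conclusion.} Next I use the hypothesis $\lambda_\sigma(L;x)\le D<\infty$ under the extended-ratio convention of \Cref{prop:biased-ratio}.
\begin{itemize}[leftmargin=*]
\item If $R_\sigma(x)=0$, finiteness forces $L_\sigma(p_L;x)=0$. The coverage inequality would then give $0\ge p_L(c)/2>0$, a contradiction, so $R_\sigma(x)>0$.
\item With $R_\sigma(x)>0$, we have $L_\sigma(p_L;x)\le D\,R_\sigma(x)$. Substituting gives $(D+p_L(c))R_\sigma(x)\ge p_L(c)/2$, that is,
\[
   R_\sigma(x)\ge \frac{1}{2\bigl(D/p_L(c)+1\bigr)}\ge\frac{1}{2(KD+1)},
\]
using $p_L(c)\ge1/K$.
\end{itemize}
The ``in particular'' statement follows immediately, because $\lambda_\sigma(L)\le D$ implies $\lambda_\sigma(L;x)\le D$ for every $x\in\Xsn(L)\subseteq\X$.

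The main obstacle is the pointwise coverage step. It needs the choice of a support candidate at height exactly $1$, which is where support-normalization is used. It also needs the restriction $t<1/2$, so that a single inverted comparison $c\succ_v a$ with $a\in I_t(x)$ certifies $g_v(x)>t$. Everything after that is routine algebra.
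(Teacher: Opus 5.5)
Your proposal is correct and follows essentially the same route as the paper. Both arguments fix a support candidate $c$ at height $1$, prove the pointwise bound $\ell_{\sigma,p_L,x}(t)+p_L(c)\,r_{\sigma,x}(t)\ge p_L(c)$ on $[0,1/2)$, integrate it to $L_\sigma(p_L;x)+p_L(c)R_\sigma(x)\ge p_L(c)/2$, and split on whether $R_\sigma(x)=0$. The only differences are cosmetic and both valid: you note $I_t(x)\neq\emptyset$ (since $\min_j x_j=0$) rather than handling the empty case by vacuity, and you get the ``in particular'' clause directly from $\Xsn(L)\subseteq\X$ rather than through the support-normalized reduction.
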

\begin{proof}
From $x\in\Xsn(L)$, there is a candidate $c\in\Supp(p_L)$ with $x_c=1$.  Fix one such candidate and put $w_c=p_L(c)$.  Since $p_L$ is obtained by drawing uniformly from the $K$ entries of $L$, every support candidate has mass at least $1/K$, hence $w_c\ge1/K$.

For $t\in[0,1/2)$, let $q(t)=s_\sigma(I_t(x)\succ c)$.  Since $x_c=1$, the candidate $c$ is not in $I_t(x)$ for $t<1$.  We claim that
\begin{equation}\label{eq:q-r-cover}
   q(t)+r_{\sigma,x}(t)\ge1,
   \qquad 0\le t<1/2.
\end{equation}
To prove \eqref{eq:q-r-cover}, it suffices to show that every voter counted by $1-q(t)$ is also counted by $r_{\sigma,x}(t)$.  If $I_t(x)=\emptyset$, then we have $s_\sigma(\emptyset\succ c)=1$ by vacuity, so $q(t)=1$ and the claim follows; if no voter is counted by $1-q(t)$, then $1-q(t)=0$ holds and the claim follows.  Otherwise, take any voter $v$ who does not rank every candidate in $I_t(x)$ strictly above $c$.  Then there is some $b\in I_t(x)$ that voter $v$ does not rank strictly above~$c$.  Since preferences are strict complete orders and $c\notin I_t(x)$ ensures $c\neq b$, we have $c\succ_v b$, and hence it holds that
\[
   g_v(x)\ge (x_c-x_b)_+\ge 1-t>t
\]
since $x_c=1$, $x_b\le t$, and $t<1/2$.
Thus, from the definition of $r_{\sigma,x}(t)$, we conclude that every voter counted by $1-q(t)$ is counted by $r_{\sigma,x}(t)$, i.e., $1-q(t)\le r_{\sigma,x}(t)$, thus proving \eqref{eq:q-r-cover}.

We next derive the coverage inequality that will be used to lower bound $R_\sigma(x)$.  Since $c\notin I_t(x)$ for $t<1$, the summand corresponding to $c$ appears in the sum defining $\ell_{\sigma,p_L,x}(t)$, so
\[
   \ell_{\sigma,p_L,x}(t)
   =\sum_{a\notin I_t(x)}p_L(a)s_\sigma(I_t(x)\succ a)
   \ge w_c q(t).
\]
Combining this with \eqref{eq:q-r-cover} gives
\[
   \ell_{\sigma,p_L,x}(t)+w_c r_{\sigma,x}(t)
   \ge w_c
   \qquad 0\le t<1/2.
\]
Since \eqref{eq:q-r-cover} holds on $[0,1/2)$, integrating over this interval and using nonnegativity outside it gives
\[
   L_\sigma(p_L;x)+w_c R_\sigma(x)
   \ge \frac{w_c}2.
\]
If $R_\sigma(x)=0$, this inequality forces $L_\sigma(p_L;x)\ge w_c/2>0$, hence $\lambda_\sigma(L;x)=+\infty$, contrary to the assumption.  Therefore, we have $R_\sigma(x)>0$.  Since $\lambda_\sigma(L;x)\le D$, we have $L_\sigma(p_L;x)\le D R_\sigma(x)$, and the preceding inequality gives
\[
   (D+w_c)R_\sigma(x)\ge \frac{w_c}2.
\]
Using $w_c\ge1/K$ and monotonicity of $w\mapsto w/(D+w)$ on $[0,\infty)$, we obtain
\[
   R_\sigma(x)
   \ge \frac{w_c}{2(D+w_c)}
   \ge\frac{1/K}{2(D+1/K)}
   =\underline R_{K,D}.
\]
The final sentence of the lemma follows since, by \Cref{lem:support-normalized-reduction}, $\lambda_\sigma(L)$ is the supremum of $\lambda_\sigma(L;x)$ over $x\in\Xsn(L)$; hence $\lambda_\sigma(L)\le D$ implies $\lambda_\sigma(L;x)\le D$ for every such $x$.
\end{proof}
\subsection{Bounding the Effect of Local Profile Changes}\label{subsec:local-profile-ratio-change}
Let $L$ be a $K$-list and let $\sigma,\sigma'$ be profiles with nonempty voter sets on the same candidate set.  In the following analysis, we use only the following additive perturbation condition: for some $\Delta>0$, every $x\in\Xsn(L)$ satisfies
\begin{equation}\label{eq:local-profile-average-drift}
   |L_{\sigma'}(p_L;x)-L_\sigma(p_L;x)|\le \Delta,
   \quad
   |R_{\sigma'}(x)-R_\sigma(x)|\le \Delta.
\end{equation}
This condition holds with $\Delta=1/n$ for one-voter changes---both deletion and replacement---where $n$ is the number of voters in $\sigma$.  Indeed, suppose $\sigma'$ is obtained from $\sigma$ either by deleting one voter, with $n\ge2$, or by replacing one voter's ranking.  For any fixed $x\in\X$ and $t\ge0$, both $\ell_{\sigma,p_L,x}(t)$ and $r_{\sigma,x}(t)$ are averages over voters of quantities in $[0,1]$.  If one term $a_v$ is deleted from an average of $n$ numbers in $[0,1]$, we have
$
   \left|\frac1n\sum_{i=1}^n a_i-\frac1{n-1}\sum_{i\ne v}a_i\right|
   =
   \frac{|a_v-\bar a_{-v}|}{n}
   \le\frac1n
$,
where $\bar a_{-v}$ is the average over the remaining terms.  Similarly, replacing one term in an average of $n$ terms in $[0,1]$ also changes the average by at most $1/n$.  Since those functions vanish for $t\ge1$, integrating the pointwise bound gives \eqref{eq:local-profile-average-drift} with $\Delta=1/n$.
The denominator lower bound in \cref{lem:denominator-lower-bound} converts the condition \eqref{eq:local-profile-average-drift} into the following bound on $|\lambda_{\sigma'}(L)-\lambda_\sigma(L)|$.
\begin{restatable}[Ratio Change Bound under Local Profile Changes]{proposition}{LocalProfileRatioChangeBound}\label{prop:local-profile-ratio-change}
Let $L$ be a $K$-list, and let $\sigma,\sigma'$ be profiles with nonempty voter sets on the same candidate set.  Let $\Delta>0$ be such that \eqref{eq:local-profile-average-drift} holds for every $x\in\Xsn(L)$, and define
\[
   A_K(t)\coloneqq 4(t+1)(Kt+1).
\]
If $\lambda_\sigma(L)<\infty$ and $\Delta\le 1/(4(K\lambda_\sigma(L)+1))$, then we have $\lambda_{\sigma'}(L)<\infty$ and
\[
   |\lambda_{\sigma'}(L)-\lambda_\sigma(L)|
   \le
   A_K(\lambda_\sigma(L))\Delta.
\]
The symmetric statement is also true: if $\lambda_{\sigma'}(L)<\infty$ and $\Delta\le 1/(4(K\lambda_{\sigma'}(L)+1))$, then we have $\lambda_\sigma(L)<\infty$ and
\[
   |\lambda_{\sigma'}(L)-\lambda_\sigma(L)|
   \le
   A_K(\lambda_{\sigma'}(L))\Delta.
\]
In particular, if $\sigma$ has $n$ voters and $\sigma'$ is obtained from $\sigma$ either by deleting one voter, with $n\ge2$, or by replacing one voter's ranking, then \eqref{eq:local-profile-average-drift} holds with $\Delta=1/n$; hence the first implication applies whenever $1/n \le 1/(4(K\lambda_\sigma(L)+1))$, and the symmetric implication applies under the analogous condition with $\lambda_{\sigma'}(L)$.
\end{restatable}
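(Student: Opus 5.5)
The plan is to prove a pointwise bound on $|\lambda_{\sigma'}(L;x)-\lambda_\sigma(L;x)|$ for every support-normalized $x\in\Xsn(L)$, and then pass to suprema using \Cref{lem:support-normalized-reduction}. If $|\Cands|=1$, both ratio values are $0$ by convention and there is nothing to prove, so assume $|\Cands|\ge2$. Write $\lambda=\lambda_\sigma(L)<\infty$. By \Cref{lem:support-normalized-reduction}, $\lambda_\sigma(L;x)\le\lambda$ for every $x\in\Xsn(L)$. \Cref{lem:denominator-lower-bound} with $D=\lambda$ then gives $R_\sigma(x)\ge 1/(2(K\lambda+1))$ for all such $x$.

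The first step handles the perturbed denominator. The hypothesis $\Delta\le 1/(4(K\lambda+1))$ together with \eqref{eq:local-profile-average-drift} gives
\[
R_{\sigma'}(x)\ge R_\sigma(x)-\Delta\ge R_\sigma(x)/2>0.
\]
So the perturbed ratio is an honest quotient, with no extended-ratio cases, and its denominator is at least $1/(4(K\lambda+1))$. This is the step I expect to be the crux. Without the low-sublevel lower bound on $R_\sigma$, the denominator could collapse to $0$ under a one-voter change and the ratio could blow up. \Cref{lem:denominator-lower-bound} is exactly what prevents this.

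The second step is the standard quotient perturbation. Writing $L=L_\sigma(p_L;x)$, $R=R_\sigma(x)$ and primes for $\sigma'$, I will use
\[
\Bigl|\frac{L'}{R'}-\frac{L}{R}\Bigr|\le\frac{|L'-L|}{R'}+\frac{L}{R}\cdot\frac{|R-R'|}{R'}\le\frac{(1+\lambda)\Delta}{R'}.
\]
Since $R'\ge R/2$ and $R\ge 1/(2(K\lambda+1))$, we get $1/R'\le 2/R\le 4(K\lambda+1)$. Hence
\[
|\lambda_{\sigma'}(L;x)-\lambda_\sigma(L;x)|\le 4(\lambda+1)(K\lambda+1)\Delta=A_K(\lambda)\Delta
\]
for every $x\in\Xsn(L)$.

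Finally, I take suprema over $x\in\Xsn(L)$ on both sides. Applying \Cref{lem:support-normalized-reduction} to $\sigma'$ gives $\lambda_{\sigma'}(L)\le\lambda+A_K(\lambda)\Delta<\infty$. Applying it to $\sigma$ gives $\lambda\le\lambda_{\sigma'}(L)+A_K(\lambda)\Delta$. Together these yield the claimed two-sided bound; attainment of the suprema is not needed. The symmetric statement follows by exchanging the roles of $\sigma$ and $\sigma'$, because condition \eqref{eq:local-profile-average-drift} is symmetric in the two profiles. The last assertion, about one-voter deletion or replacement, follows from the verification of \eqref{eq:local-profile-average-drift} with $\Delta=1/n$ given just before the proposition.
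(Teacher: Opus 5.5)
Your proposal is correct and follows essentially the same route as the paper. You fix $x\in\Xsn(L)$, use \Cref{lem:denominator-lower-bound} to get $R_\sigma(x)\ge 1/(2(K\lambda+1))$, use the hypothesis on $\Delta$ to keep $R_{\sigma'}(x)$ at least half of that bound, estimate the quotient change by $A_K(\lambda)\Delta$, and pass to suprema via \Cref{lem:support-normalized-reduction}. The only cosmetic difference is that you use one two-sided quotient estimate, while the paper bounds the two one-sided differences separately; both give the same constant.
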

\begin{proof}[Proof sketch]
By \Cref{lem:support-normalized-reduction}, it suffices to compare the fixed-vector ratios on $\Xsn(L)$.  Let $\Lambda=\lambda_\sigma(L)$ and fix $x\in\Xsn(L)$.  The denominator bound in \Cref{lem:denominator-lower-bound} gives $R_\sigma(x)\ge \underline R_{K,\Lambda}$, while \eqref{eq:local-profile-average-drift} changes both the integrated numerator and denominator by at most $\Delta$.  The assumption $\Delta\le1/(4(K\Lambda+1))$ is the same as $\Delta\le \underline R_{K,\Lambda}/2$, so this additive perturbation is at most half of the denominator lower bound.  Expanding the two quotients with denominators $R_\sigma(x)$ and $R_{\sigma'}(x)$ gives the stated $A_K(\Lambda)\Delta$ bound uniformly over $x$.  Taking suprema over $x\in\Xsn(L)$ gives the stated bound on $|\lambda_{\sigma'}(L)-\lambda_\sigma(L)|$.  The reverse implication follows from the same calculation with the roles of $\sigma$ and $\sigma'$ interchanged, using the same additive perturbation bound \eqref{eq:local-profile-average-drift}.

The one-voter instantiations are the cases where \eqref{eq:local-profile-average-drift} holds with $\Delta=1/n$.  See \Cref{app:local-profile-ratio-change-proof} for the full proof.
\end{proof}
\section{Gibbs Distributions over \texorpdfstring{$K$-Lists}{K-Lists}}\label{sec:gibbs}
We construct the voting rule by using the Gibbs distribution over all $K$-lists.  This is an exponential-mechanism-type construction~\citep{McSherryTalwar2007} with score $-\lambda_\sigma(L)$.  Note that the standard global-sensitivity analysis is not directly applicable because $\lambda_\sigma(L)$ need not have a useful global sensitivity bound over all $K$-lists; the analysis below uses local drift on low-energy sublevels, established in \Cref{prop:local-profile-ratio-change}, together with Gibbs tail bounds.

Fix a list size $K\in\mathbb N_{\ge1}$.  For a profile $\sigma$ and an inverse temperature $\eta>0$, define the Gibbs distribution on all $K$-lists by
\[
   \Pi_{\sigma}^{(\eta)}(L)
   =\frac{\exp(-\eta\lambda_\sigma(L))}{\sum_{M\in\Lists_K}\exp(-\eta\lambda_\sigma(M))},
\]
where $\exp(-\eta\cdot\infty)=0$.\footnote{We sum over ordered tuples in $\Lists_K$; thus different orderings of the same multiset are distinct Gibbs states, even though they induce the same candidate lottery.}  In what follows, this definition is used only when at least one $K$-list has finite energy, so the denominator is strictly positive.  The induced candidate lottery is
\[
   Q_{\sigma}^{(\eta)}=\E_{L\sim\Pi_{\sigma}^{(\eta)}}[p_L].
\]
This is the candidate distribution determined by the rule at profile $\sigma$ and temperature $\eta$; equivalently, one may first draw $L\sim\Pi_{\sigma}^{(\eta)}$ and then draw the winner uniformly from $L$.
We refer to $\lambda_\sigma(L)$ as the Gibbs energy of list $L$ on profile $\sigma$.

Fixing constants $K\in\mathbb N_{\ge1}$ and $\gamma > 0$, we use the following one-parameter temperature family: for $u\ge1$, define
\[
   \eta_u\coloneqq\frac{8}{\gamma}\bigl(K\log m+u+1\bigr).
\]
Below we write $Q^{(u)}_\sigma=Q_\sigma^{(\eta_u)}$.  Since $|\Lists_K|=m^K$, the Gibbs variational bound (see \Cref{lem:gibbs-variational}) shows that if some list has energy at most $1-\gamma$, then taking $\eta$ on the scale of $K\log m$ ensures that the Gibbs expectation of $\lambda_\sigma(L)$ is less than $1$, which implies a distortion bound below $3$ through \cref{prop:biased-ratio}.  The additional parameter $u$ is used later in the tail bounds:
the sensitivity theorem takes $u=1$; the approximate-differential-privacy theorem takes $u=\max\{1,\log(1/\delta)\}$.

We also define the associated released-winner mechanism as the following two-step sampling procedure:
\[
   L_\sigma^{(u)}\sim\Pi_\sigma^{(\eta_u)},
   \qquad
   Y_\sigma^{(u)}\mid L_\sigma^{(u)}\sim p_{L_\sigma^{(u)}}.
\]
It releases the candidate $Y_\sigma^{(u)}$, whose law is $Q_\sigma^{(u)}$; the privacy guarantee shown later does not release the lottery $Q_\sigma^{(u)}$ itself.
\begin{remark}[On basic social-choice properties]\label{rem:gibbs-basic-properties}
The Gibbs rule is anonymous and neutral, as relabeling voters or candidates only relabels the energies and list states.  It is also unanimous: if all voters rank one candidate first, the finite distortion implies that the Gibbs distribution is supported on lists inducing the unanimous candidate lottery.  See \Cref{app:gibbs-axioms} for details.  However, we make no claim of Condorcet consistency, participation, or strategyproofness.
\end{remark}
\section{Sensitivity}\label{sec:sensitivity}
We give the worst-case sensitivity bound for the Gibbs rule.
The next lemma compares two Gibbs distributions whose energies satisfy the local drift condition as in \Cref{prop:local-profile-ratio-change}.
\begin{restatable}[Weighted Gibbs Stability under Local Energy Change]{lemma}{EnergyDependentWeightedGibbsStability}\label{lem:energy-dependent-gibbs}
Fix $K\in\mathbb N_{\ge1}$ and $B\ge0$, and define $A_K$ as in \Cref{prop:local-profile-ratio-change}.
There exist constants $\nu_{K,B},C_{K,B}\in(0,\infty)$, depending only on $K$ and $B$, with the following property.

Let $\Omega$ be a finite set, $E,E'\colon\Omega\to[0,\infty]$ be energies on $\Omega$, and $\pi,\pi'$ be their Gibbs distributions at a common inverse temperature $\eta\ge1$.  Assume $\min_{\omega\in\Omega}E(\omega)\le B$, $\min_{\omega\in\Omega}E'(\omega)\le B$, and $\eta\ge\log|\Omega|$.  Let $\Delta\in(0,1]$ satisfy $\eta\Delta\le \nu_{K,B}$.  Suppose that the energies satisfy the following local drift condition: whenever $E(\omega)<\infty$ and $\Delta\le1/(4(KE(\omega)+1))$, the value $E'(\omega)$ is finite and satisfies
\[
   |E'(\omega)-E(\omega)|
   \le
   A_K(E(\omega))\Delta,
\]
and the same implication holds with $E$ and $E'$ interchanged.  Under these assumptions, it holds that
\begin{equation}
\mainbodyequationlabel{eq:energy-weighted-positive-part-bound}
   \sum_{\smash{\substack{\omega:E(\omega)<\infty}}}
   (\pi(\omega)-\pi'(\omega))_+(1+E(\omega))
   +
   \sum_{\substack{\omega:E'(\omega)<\infty}}
   (\pi'(\omega)-\pi(\omega))_+(1+E'(\omega))
   \le 
   C_{K,B}\eta\Delta.
\end{equation}
\end{restatable}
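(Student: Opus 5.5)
We split $\Omega$ into a \emph{low-energy region} and a \emph{tail}, and use the local drift condition on the first and Gibbs tail bounds on the second. Fix a threshold $T=T_{K,B}$, to be chosen as a constant depending only on $K$ and $B$; for instance $T=2B+c$ with an absolute $c$. Let $G=\{\omega: E(\omega)\le T\}$ and $G'=\{\omega:E'(\omega)\le T\}$. Choose $\nu_{K,B}\le 1/(4(KT+1))$ so that $\Delta\le\eta\Delta\le\nu_{K,B}$ (using $\eta\ge1$) activates the drift hypothesis on all of $G$ and $G'$. For $\omega\in G$ we then get $E'(\omega)<\infty$ and $|E'(\omega)-E(\omega)|\le A_K(T)\Delta$, and symmetrically on $G'$. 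Taking $\nu_{K,B}$ small also makes $A_K(T)\eta\Delta\le 1$.

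\textbf{Partition functions.} Write $Z=\sum_\omega e^{-\eta E(\omega)}$ and $Z'=\sum_\omega e^{-\eta E'(\omega)}$.

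First, the tail is negligible in $Z$. Since $\min E\le B$, we have $Z\ge e^{-\eta B}$. The tail mass satisfies
\[
\pi(\Omega\setminus G)\le |\Omega|\,e^{-\eta(T-B)}\le e^{\eta}e^{-\eta(T-B)}
\]
by $\eta\ge\log|\Omega|$. The energy-weighted tail $\sum_{\omega\notin G}\pi(\omega)(1+E(\omega))$ is bounded similarly, using $(1+E)e^{-\eta E}\le (1+T)e^{-\eta T}\cdot e^{-\eta(E-T)/2}$ for $E\ge T$ and large $\eta$-independent constants. With $T-B\ge B+2$, all of these are $O(e^{-\eta})$.

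Next, compare the partition functions on the good set. On $G$, $e^{-\eta E'}\ge e^{-\eta E}e^{-A_K(T)\eta\Delta}$. Hence
\[
Z'\ge Z\,(1-\pi(\Omega\setminus G))\,e^{-A_K(T)\eta\Delta},
\]
and symmetrically with the roles of $E$ and $E'$ exchanged. This gives $|\log Z'-\log Z|\le A_K(T)\eta\Delta+O(e^{-\eta})$.

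\textbf{Main obstacle: the $e^{-\eta}$ terms.} These are not bounded by $\eta\Delta$ in general. So we must avoid additive tail errors and instead exploit that the tail of one distribution is also controlled through the other.

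The fix is to choose $G$ as the set where \emph{both} energies are at most $T$, or more simply to bound $\pi(\omega)-\pi'(\omega)$ pointwise on $G\cup G'$. The key observation concerns $\omega\notin G\cup G'$: there both $\pi(\omega)$ and $\pi'(\omega)$ are tiny, but their \emph{difference} need not be. We therefore use a multiscale argument. Partition the energies into shells $S_k=\{\omega:E(\omega)\in[k,k+1)\}$ for $k\ge0$. On the shell $S_k$ the drift applies as long as $\Delta\le 1/(4(K(k+1)+1))$, i.e.\ up to $k\approx 1/(4K\Delta)$, which is large. On such a shell the log-ratio is
\[
|\log(\pi'(\omega)/\pi(\omega))|\le A_K(k+1)\eta\Delta+|\log Z'-\log Z|.
\]
This gives
\[
(\pi-\pi')_+(\omega)\le \pi(\omega)\bigl(e^{A_K(k+1)\eta\Delta+\delta_Z}-1\bigr)\lesssim \pi(\omega)\bigl(A_K(k+1)\eta\Delta+\delta_Z\bigr),
\]
valid while the exponent is at most $1$, i.e.\ for $k\lesssim(\eta\Delta)^{-1/2}$. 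We then sum $\pi(S_k)(1+k)A_K(k+1)\eta\Delta$, using $\pi(S_k)\le e^{\eta}e^{-\eta(k-B)}$, which decays geometrically in $k$. Since $A_K$ is quadratic, this sum is $C\eta\Delta$.

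For the same reason, $\delta_Z=|\log Z'-\log Z|$ itself is $O(\eta\Delta)$: redo the partition-function comparison shell by shell, so the comparison has no additive tail loss over the range where drift holds.

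The remaining far tail, $k\gtrsim \min\{(\eta\Delta)^{-1/2},(K\Delta)^{-1}\}$, has weighted mass at most $e^{-\eta k/2}$. Since $\eta\ge1$ and $k\ge(\eta\Delta)^{-1/2}$, this is at most $e^{-(\eta\Delta)^{-1/2}\cdot\eta/2}\le C\eta\Delta$, using $e^{-1/\sqrt s}\le C s$ for small $s$. Lists with $E=\infty$ but $E'<\infty$ must lie in this far tail of $E'$ by the symmetric drift condition.

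\textbf{Conclusion.} Apply the same argument with $E$ and $E'$ swapped for the second sum in \eqref{eq:energy-weighted-positive-part-bound}. Combining both bounds gives $C_{K,B}\eta\Delta$.

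The delicate step is making the partition-function comparison free of additive $e^{-\eta}$ errors. I expect the shell decomposition to handle it, with $\nu_{K,B}$ taken small enough that $\eta\Delta$-dependent exponents stay at most $1$ on the relevant shells.
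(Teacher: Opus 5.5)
Once you drop the fixed threshold $T$, your argument is essentially the paper's proof. The paper also sets $s=\eta\Delta$ and uses an $s$-dependent cutoff $M=(16C_K s)^{-1/2}$. Below $M$ it applies the drift condition and the log-likelihood bound $|\log(\pi/\pi')|\le sA_K(E)+|\log(Z'/Z)|<1$. It gets $|\log(Z'/Z)|=O(s)$ because the Gibbs tail above $M$ is at most $(1+M)e^{-(M-B-1)}\le s$, and it controls the weighted sums by Gibbs moment bounds, which do the job of your shell sums.

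The proposal is correct. You should say explicitly that the partition-function comparison does lose an additive $\pi(E>M)$ term. That loss is harmless only because it is $O(s)$ at this cutoff.
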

\begin{proof}[Proof sketch]
   Put $s=\eta\Delta$ and choose $M=(16C_Ks)^{-1/2}$, where $C_K$ is a constant such that $A_K(t)\le C_K(1+t)^2$.  Choose $\nu_{K,B}$ small enough so that, whenever $s\le\nu_{K,B}$, inequalities $\Delta\le1/(4(KE(\omega)+1))$ and $\Delta\le1/(4(KE'(\omega)+1))$ hold on the sublevels $E(\omega)\le M$ and $E'(\omega)\le M$, and $\max\set*{sA_K(E(\omega)),\ sA_K(E'(\omega))}\le1/4$ holds there.  On these sublevels, the local drift condition ensures that replacing $E$ by $E'$ changes each unnormalized Gibbs weight by only a controlled multiplicative amount. Moreover, for each energy $G\in\{E,E'\}$ and its Gibbs law $\mu$, the Gibbs tail bounds give $\sum_{\omega:M<G(\omega)<\infty}\mu(\omega)(1+G(\omega))=O_{K,B}(s)$.
Similarly, comparing the partition sums gives $|\log(Z'/Z)|=O_{K,B}(s)$.
These partition-function and tail bounds together imply \eqref{eq:energy-weighted-positive-part-bound}.
   See \Cref{app:gibbs-over-lists-proof} for the full proof.
\end{proof}
In \eqref{eq:energy-weighted-positive-part-bound}, the restrictions to finite-energy states only avoid undefined factors:
if $E(\omega)=\infty$, then $\pi(\omega)=0$ holds, and thus the corresponding positive part in the first sum is zero, and similarly for $E'$ and the second sum.

\Cref{lem:energy-dependent-gibbs} converts local stability of the list energies into the bound \eqref{eq:energy-weighted-positive-part-bound} on the energy-weighted discrepancy between the two Gibbs distributions.  This gives the following worst-case sensitivity bound while keeping the distortion below $3$.
\begin{restatable}[Worst-Case Sensitivity Bound]{theorem}{AverageSensitivityUniformListTheorem}\label{thm:gibbs-over-lists}\label{thm:main}
Suppose there exist constants $K\in\mathbb N_{\ge1}$ and $\gamma\in(0,1)$ such that every profile $\sigma$ admits a $K$-list $L$ with $\lambda_\sigma(L) \le 1-\gamma$.
For $u\ge1$, let $Q^{(u)}$ be the Gibbs rule over all $K$-lists with inverse temperature $\eta_u\coloneqq(8/\gamma)(K\log m+u+1)$.  Its distortion satisfies
\[
   \Dist(Q^{(u)})
   \le
   3-\frac{7\gamma}{4}.
\]
Moreover, there is a constant $C_{K,\gamma}<\infty$, depending only on $K$ and $\gamma$, such that, for every $m\ge1$ and $n\ge2$, we have
\[
   \Sens_{m,n}(Q^{(u)})
   \le
   C_{K,\gamma}\frac{K\log m+u+1}{n}.
\]
The same bound holds for average sensitivity $\AS_{m,n}(Q^{(u)})\le\Sens_{m,n}(Q^{(u)})$.
In particular, by \Cref{prop:br-uniform-list}, there is a randomized ordinal voting rule $Q$, together with absolute constants $\varepsilon,C>0$, such that, for every $m\ge1$ and $n\ge2$,
\[
   \Dist(Q)\le 3-\varepsilon,\qquad
   \Sens_{m,n}(Q)\le C\frac{\log m+1}{n}.
\]
\end{restatable}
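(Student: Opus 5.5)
The proof has two parts: the distortion bound and the sensitivity bound. The sensitivity bound goes through \Cref{lem:energy-dependent-gibbs} and a transport argument.

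\textbf{Distortion.} Fix $\sigma$ and let $L^\star$ be a list with $\lambda_\sigma(L^\star)\le1-\gamma$. Since $\lambda_\sigma(p)$ is a supremum over $x$ of ratios that are linear in $p$ for fixed $x$, it is convex in $p$. Hence
\[
\lambda_\sigma(Q^{(u)}_\sigma)\le\E_{L\sim\Pi_\sigma^{(\eta_u)}}[\lambda_\sigma(L)].
\]
The Gibbs variational bound (\Cref{lem:gibbs-variational}) compares this expectation with the point mass on $L^\star$ and gives
\[
\E_\Pi[\lambda_\sigma(L)]\le\lambda_\sigma(L^\star)+\frac{\log|\Lists_K|}{\eta_u}\le1-\gamma+\frac{K\log m\,\gamma}{8(K\log m+u+1)}\le1-\frac{7\gamma}{8}.
\]
\Cref{prop:biased-ratio} then yields $\rho_\sigma(Q^{(u)}_\sigma)\le3-7\gamma/4$. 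For $m=1$ the claim is trivial.

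\textbf{Sensitivity: reduction to the lemma.} Fix an instance with $n$ voters, a voter $v$, and $\sigma'=\sigma^{-v}$. Take $\Omega=\Lists_K$, $E=\lambda_\sigma$, $E'=\lambda_{\sigma'}$ and $B=1-\gamma$, so both minima are at most $B$ by hypothesis. By construction $\eta_u\ge K\log m=\log|\Omega|$ and $\eta_u\ge1$. \Cref{prop:local-profile-ratio-change} with $\Delta=1/n$ supplies the two-sided local drift condition.

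If $\eta_u/n\le\nu_{K,B}$, \Cref{lem:energy-dependent-gibbs} bounds the energy-weighted positive parts by $C_{K,B}\eta_u/n$. If instead $\eta_u/n>\nu_{K,B}$, it suffices to bound the normalized sensitivity by a constant, since then the constant is $O(\eta_u/n)$. That constant comes from $\W_d(p,q)\le\SC_d(p)+\SC_d(q)$ (route both lotteries through any voter and average) together with the distortion bounds on $\sigma$ and on $\sigma'$. Converting $\OPT_{d,\sigma'}$ to $\OPT_{d,\sigma}$ costs a factor of $O(1)$: for $n\ge2$ we have $\SC_{d,\sigma'}(c)\le\frac{n}{n-1}\SC_{d,\sigma}(c)$, and the extra voter's term is controlled by the triangle inequality through the average.

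\textbf{Sensitivity: transport argument (the main obstacle).} Let $\pi=\Pi_\sigma$ and $\pi'=\Pi_{\sigma'}$. Build a coupling of the two list distributions that keeps the common mass $\min(\pi,\pi')$ on the diagonal and matches the excess mass $(\pi-\pi')_+$ with $(\pi'-\pi)_+$ arbitrarily. Then couple $p_L$ with $p_{L'}$ on each matched pair.

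Diagonal pairs cost nothing. A matched pair $(L,L')$ costs at most $\W_d(p_L,p_{L'})\le\SC_{d,\sigma}(p_L)+\SC_{d,\sigma}(p_{L'})$. By \Cref{prop:biased-ratio},
\[
\SC_{d,\sigma}(p_L)\le(1+2\lambda_\sigma(L))\OPT_{d,\sigma}.
\]
For $L'$, the bound is first in terms of $\SC_{d,\sigma'}$ and $\OPT_{d,\sigma'}$, and must then be converted back to $\sigma$. This conversion is the delicate point. I would use $\OPT_{d,\sigma'}\le\frac{n}{n-1}\OPT_{d,\sigma}$. I would also use that the candidate-only quantity $\SC_{d,\sigma}(p_{L'})$ is within $2\,\OPT$-type terms of $\SC_{d,\sigma'}(p_{L'})$ plus $\OPT_{d,\sigma}$, obtained by bounding the deleted voter's distance via the triangle inequality through an optimal candidate and averaging.

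Summing over the matching gives
\[
\frac{\W_d\bigl(Q_\sigma,Q_{\sigma'}\bigr)}{\OPT_{d,\sigma}}\le O(1)\Bigl[\sum_L(\pi-\pi')_+(1+E)+\sum_L(\pi'-\pi)_+(1+E')\Bigr]\le O(C_{K,B})\frac{\eta_u}{n}.
\]
The final bound follows from $\eta_u=(8/\gamma)(K\log m+u+1)$. The average-sensitivity claim is immediate from $\AS\le\Sens$. The concluding statement follows by plugging in $K_{\mathsf{BR}}$ and $\gamma=\varepsilon_{\mathsf{BR}}/2$ from \Cref{prop:br-uniform-list} with $u=1$; this gives $\varepsilon=7\gamma/4$ and absorbs $K$ into $C$.
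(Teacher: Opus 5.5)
Your proposal is correct and follows the paper's proof essentially step for step: the Gibbs variational bound plus convexity for the distortion part, \Cref{lem:energy-dependent-gibbs} fed by the two-sided drift from \Cref{prop:local-profile-ratio-change} when $\eta_u/n\le\nu_{K,B}$, and a constant Wasserstein bound from the distortion guarantee when $\eta_u/n>\nu_{K,B}$. The only difference is cosmetic: you pair the excess list masses directly and bound each $\W_d(p_L,p_{L'})$ by routing through voters, while the paper routes all unmatched mass through an optimal candidate $o$ via \Cref{lem:transport-through-opt}; both give the same $(1+E)$- and $(1+E')$-weighted bound with an $O(1)$ conversion from $\sigma'$ to $\sigma$.
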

\begin{proof}[Proof sketch]
The distortion part first bounds the Gibbs expectation of the ratio value.  Applying the Gibbs variational inequality in \Cref{lem:gibbs-variational}, the assumed list of energy at most $1-\gamma$ and the bound $|\Lists_K|\le m^K$ give $\E[\lambda_\sigma(L)]\le1-7\gamma/8$.  With $Q_\sigma^{(u)}=\E[p_L]$, this gives $\rho_\sigma(Q_\sigma^{(u)})=\rho_\sigma(\E[p_L])\le\E[\rho_\sigma(p_L)]\le\E[1+2\lambda_\sigma(L)]\le3-7\gamma/4$, where the first inequality uses convexity of $\rho_\sigma$ in the lottery, since it is defined as a supremum of normalized social-cost functions that are linear in the lottery, and the second uses \Cref{prop:biased-ratio} for each list.

For sensitivity, compare a profile $\sigma$ with $\sigma'=\sigma^{-v}$ for a fixed voter $v$.  If $\OPT_{d,\sigma}=0$, $\W_d(Q^{(u)}(\sigma),Q^{(u)}(\sigma'))=0$ holds for finite-distortion rules.  Otherwise put $\Delta=1/n$ and $B=1-\gamma$, and set $E(L)=\lambda_\sigma(L)$ and $E'(L)=\lambda_{\sigma'}(L)$.  Suppose first that $\eta_u\Delta\le \nu_{K,B}$ holds, as required by \Cref{lem:energy-dependent-gibbs}.  In this case, \Cref{prop:local-profile-ratio-change} provides the symmetric local drift condition required by \Cref{lem:energy-dependent-gibbs}. Therefore, we obtain \eqref{eq:energy-weighted-positive-part-bound} with right-hand side $O(\eta_u/n)$ for the two Gibbs distributions over lists.  Transporting unmatched mass through an optimal candidate implies the normalized Wasserstein bound for this fixed deletion.  If instead $\eta_u\Delta>\nu_{K,B}$, the distortion bound and the resulting lower bound on $(K\log m+u+1)/n$ control the Wasserstein distance directly.  Since the deleted voter was arbitrary, this proves the worst-case sensitivity statement.  The final instantiation follows by taking $u=1$ and using \Cref{prop:br-uniform-list} with $K=K_{\mathsf{BR}}$ and $\gamma=\varepsilon_{\mathsf{BR}}/2$.  See \Cref{app:gibbs-over-lists-proof} for the full proof.
\end{proof}
\section{Differential Privacy}\label{sec:approx-privacy}
For differential privacy, neighboring profiles have the same number of voters and differ in one ranking.  The next lemma compares event probabilities for two Gibbs distributions by controlling likelihood ratios on a low-energy sublevel set.
\begin{restatable}[Event-Probability Bound for Gibbs Distributions]{lemma}{GibbsPrivacyComparisonLemma}\label{lem:gibbs-privacy-comparison}
Let $E,E'\colon\Omega\to[0,\infty]$ be energies on a finite set $\Omega$, and let $\pi,\pi'$ be their Gibbs distributions at inverse temperature $\eta\ge1$.  Fix $B\ge0$ and assume $\min_{\omega\in\Omega}E(\omega)\le B$ and $\min_{\omega\in\Omega}E'(\omega)\le B$.  Fix $D>B$ and $a,\Delta\ge0$, and let
\[
   \Omega_{\le D}
   =
   \Set*{\omega\in\Omega}{\min(E(\omega),E'(\omega))\le D}.
\]
Suppose that $|E(\omega)-E'(\omega)|\le a\Delta$ holds for all $\omega\in\Omega_{\le D}$, and put $\theta\coloneqq\eta a\Delta$ and $\kappa\coloneqq |\Omega|\e^{-\eta(D-B)}$.  Then, for every event $S\subseteq\Omega$, the probabilities satisfy
\[
   \pi(S)\le \e^{2\theta}\pi'(S)+(1+\e^\theta)\kappa.
\]
The same inequality holds with $\pi$ and $\pi'$ interchanged.
\end{restatable}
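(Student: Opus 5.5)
Write the unnormalized weights $w(\omega)=\e^{-\eta E(\omega)}$ and $w'(\omega)=\e^{-\eta E'(\omega)}$, with partition sums $Z=\sum_\omega w(\omega)$ and $Z'=\sum_\omega w'(\omega)$. Both sums are at least $\e^{-\eta B}$ because some state has energy at most $B$. The plan is to split any event $S$ into its low-energy part $S\cap\Omega_{\le D}$ and its high-energy part $S\setminus\Omega_{\le D}$. On the low-energy part the likelihood ratio is controlled; the high-energy part is negligible under both distributions.

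\emph{Tail step.} On $\Omega\setminus\Omega_{\le D}$ both $E$ and $E'$ exceed $D$, so each weight is at most $\e^{-\eta D}$. Dividing by $Z\ge\e^{-\eta B}$ gives
\[
\pi(\Omega\setminus\Omega_{\le D})\le|\Omega|\e^{-\eta(D-B)}=\kappa,
\]
and the same bound holds for $\pi'$.

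\emph{Partition-function step.} On $\Omega_{\le D}$ we have $|E-E'|\le a\Delta$. This holds even when one energy is infinite, since then the other is at most $D$ and the hypothesis forces both to be finite. Hence $w(\omega)\le\e^{\theta}w'(\omega)$ and $w'(\omega)\le\e^{\theta}w(\omega)$ there. Summing over $\Omega_{\le D}$ and bounding the remainder by the tail estimate gives
\[
Z'\le\e^\theta Z_{\le D}+\kappa Z'\le\e^\theta Z+\kappa Z',
\]
where $Z_{\le D}$ denotes the sum of $w$ over $\Omega_{\le D}$. Therefore $Z'(1-\kappa)\le\e^\theta Z$.

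\emph{Combining.} For the low-energy part,
\[
\pi(S\cap\Omega_{\le D})=\frac{\sum_{S\cap\Omega_{\le D}}w}{Z}\le\e^\theta\,\frac{\sum_{S\cap\Omega_{\le D}}w'}{Z'}\cdot\frac{Z'}{Z}\le\e^\theta\,\pi'(S)\,\frac{Z'}{Z}.
\]
If we simply substitute $Z'/Z\le\e^\theta/(1-\kappa)$, we obtain a factor $1/(1-\kappa)$, which is not in the target form $\e^{2\theta}\pi'(S)+(1+\e^\theta)\kappa$. The main obstacle is therefore arranging the additive $\kappa$ terms correctly.

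To avoid the division, bound the ratio directly:
\[
\frac{Z'}{Z}=\frac{Z'_{\le D}+Z'_{>D}}{Z}\le\e^\theta+\frac{Z'_{>D}}{Z}.
\]
This gives $\pi(S\cap\Omega_{\le D})\le\e^{2\theta}\pi'(S)+\e^\theta\,\pi'(S)\,Z'_{>D}/Z$. For the extra term, note that $\pi'(S)\le1$ and $Z'_{>D}\le|\Omega|\e^{-\eta D}$, while $Z\ge\e^{-\eta B}$. Hence
\[
\e^\theta\,\pi'(S)\,\frac{Z'_{>D}}{Z}\le\e^\theta\kappa.
\]

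Adding the tail bound $\pi(S\setminus\Omega_{\le D})\le\kappa$ yields
\[
\pi(S)\le\e^{2\theta}\pi'(S)+(1+\e^\theta)\kappa.
\]
Every hypothesis is symmetric in $E$ and $E'$, so exchanging their roles gives the interchanged inequality.
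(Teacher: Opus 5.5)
Your proof is correct and follows essentially the same route as the paper. Both use the tail bound from $Z,Z'\ge\e^{-\eta B}$, obtain $Z'\le(\e^\theta+\kappa)Z$ by splitting $Z'$ over $\Omega_{\le D}$ and its complement, and then bound the low-energy part via likelihood ratios before adding back the tail; the only cosmetic difference is that the paper states the pointwise bound $\pi(\omega)/\pi'(\omega)\le\e^{2\theta}+\e^\theta\kappa$ before summing, whereas you sum first.
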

The proof is given in \Cref{app:privacy-proofs}.  The idea is to compare likelihood ratios on $\Omega_{\le D}$, where the two energies differ by at most $a\Delta$, and to bound the probability of $\Omega\setminus\Omega_{\le D}$ under the two Gibbs distributions by the resulting Gibbs tail bound.
\begin{restatable}[Approximate Differential Privacy Bound]{theorem}{ApproximatePrivacyUniformListTheorem}\label{thm:gibbs-privacy}\label{thm:main-privacy}
Assume the hypothesis of \Cref{thm:gibbs-over-lists}.  For $u\ge1$, let $Q^{(u)}$ be the Gibbs rule over all $K$-lists with inverse temperature $\eta_u\coloneqq(8/\gamma)(K\log m+u+1)$.  This rule satisfies
\[
   \Dist(Q^{(u)})
   \le
   3-\frac{7\gamma}{4}.
\]
Moreover, if $n\ge ({8}/{\gamma})A_K(1-{\gamma}/{2})$ for $A_K$ given in \Cref{prop:local-profile-ratio-change}, then the released-winner mechanism associated with $Q^{(u)}$ is $(\varepsilon_{\mathsf{DP}},\e^{-u})$-differentially private under replacement adjacency with
\[
   \varepsilon_{\mathsf{DP}}
   \le
   \frac{16}{\gamma}A_K\!\left(1-\frac{\gamma}{2}\right)
   \frac{K\log m+u+1}{n}.
\]
In particular, by \Cref{prop:br-uniform-list}, for every $\delta\in(0,1)$ there is a randomized ordinal voting rule $Q_\delta$, together with absolute constants $\varepsilon,C>0$, such that, whenever $n\ge C$, the released-winner mechanism associated with $Q_\delta$ is $(\varepsilon_{\mathsf{DP}},\delta)$-differentially private and satisfies
\[
   \Dist(Q_\delta)\le 3-\varepsilon,\qquad
   \varepsilon_{\mathsf{DP}}\le C\frac{\log m+\log(1/\delta)+1}{n}.
\]
\end{restatable}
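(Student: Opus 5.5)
The plan is to reduce the privacy claim for the released winner to a privacy claim for the sampled list, and then apply \Cref{lem:gibbs-privacy-comparison} with parameters coming from \Cref{prop:local-profile-ratio-change}.

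The distortion bound $\Dist(Q^{(u)})\le 3-7\gamma/4$ is the same statement as in \Cref{thm:gibbs-over-lists}, so I take it from there. For privacy, the winner $Y_\sigma^{(u)}$ is a post-processing of the list $L_\sigma^{(u)}$, via the fixed kernel $L\mapsto p_L$, which does not depend on the profile. So it suffices to show the list-level inequality
\[
\Pi^{(\eta_u)}_\sigma(S)\le \e^{\varepsilon_{\mathsf{DP}}}\Pi^{(\eta_u)}_\tau(S)+\e^{-u}
\]
for every event $S\subseteq\Lists_K$. The winner-level inequality then follows by averaging: write $\Prob[Y\in S']=\sum_L \Pi(L)p_L(S')$ and use the standard layer-cake argument, since $(\varepsilon,\delta)$-DP is closed under post-processing.

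For the list-level inequality, fix replacement neighbors $\sigma,\tau$ with $n$ voters. Set $E=\lambda_\sigma$, $E'=\lambda_\tau$, $\Omega=\Lists_K$ (so $|\Omega|=m^K$), $B=1-\gamma$ from the hypothesis, and choose $D=1-\gamma/2$, so that $D-B=\gamma/2$. On $\Omega_{\le D}$ at least one of $\lambda_\sigma(L),\lambda_\tau(L)$ is at most $D$. Replacement gives $\Delta=1/n$ by the remark before \Cref{prop:local-profile-ratio-change}, and that proposition's condition $1/n\le 1/(4(KD+1))$ holds because $n\ge(8/\gamma)A_K(D)\ge 4(KD+1)$. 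Applying the proposition in whichever direction has energy at most $D$ gives $|E-E'|\le A_K(\min(E,E'))/n\le A_K(D)/n$, since $A_K$ is increasing. So \Cref{lem:gibbs-privacy-comparison} applies with $a=A_K(D)$, hence
\[
\theta=\frac{\eta_u A_K(D)}{n}, \qquad 2\theta=\frac{16}{\gamma}A_K(D)\,\frac{K\log m+u+1}{n},
\]
which matches the claimed $\varepsilon_{\mathsf{DP}}$.

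The main obstacle is checking that the additive term $(1+\e^\theta)\kappa$ is at most $\e^{-u}$. We have
\[
\kappa=m^K\e^{-\eta_u\gamma/2}=m^K\e^{-4(K\log m+u+1)}\le \e^{-3K\log m-4u-4}\le \e^{-u-4}.
\]
We also need $\theta$ bounded. The assumed lower bound on $n$ only gives $\theta\le K\log m+u+1$, which is not bounded uniformly, so the factor $\e^\theta$ must be absorbed into the exponential slack in $\kappa$:
\[
(1+\e^\theta)\kappa\le 2\e^{\theta}\e^{-4(K\log m+u+1)+K\log m}\le 2\e^{-2K\log m-3u-3}\le \e^{-u}.
\]

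The final instantiation takes $K=K_{\mathsf{BR}}$ and $\gamma=\varepsilon_{\mathsf{BR}}/2$ from \Cref{prop:br-uniform-list}, and $u=\max\{1,\log(1/\delta)\}$, so that $\e^{-u}\le\delta$ and $u+1=O(\log(1/\delta)+1)$. The absolute constant $C$ absorbs $(8/\gamma)A_K(1-\gamma/2)$, both in the threshold on $n$ and in the bound on $\varepsilon_{\mathsf{DP}}$. The distortion gap is $\varepsilon=7\gamma/4$.
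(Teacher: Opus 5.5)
Your proposal is correct and follows essentially the same route as the paper's proof: list-level privacy via \Cref{lem:gibbs-privacy-comparison} with $B=1-\gamma$, $D=1-\gamma/2$, $a=A_K(D)$; the drift bound from \Cref{prop:local-profile-ratio-change}, applied from whichever profile has energy at most $D$; absorbing $\e^\theta\le\e^{K\log m+u+1}$ into the slack of $\kappa=m^K\e^{-4(K\log m+u+1)}$; and post-processing to the released winner. The only detail you skip is the trivial case $m=1$, where \Cref{prop:local-profile-ratio-change} (which relies on $|\Cands|\ge2$) is not needed because both energies equal $0$ by convention.
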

The proof, given in \Cref{app:privacy-proofs}, reuses the low-energy sublevel and Gibbs-tail decomposition as in \Cref{sec:sensitivity}, but applies it to event probabilities.  The distortion part follows from the same Gibbs variational bound.  For privacy, with $E(L)=\lambda_\sigma(L)$ and $E'(L)=\lambda_\tau(L)$ for replacement-neighboring profiles $\sigma,\tau$, \Cref{prop:local-profile-ratio-change} gives the local drift bound on the sublevel $\min\{E(L),E'(L)\}\le D$, where $D=1-\gamma/2$, allowing \Cref{lem:gibbs-privacy-comparison} to apply; the complement of this sublevel is controlled by the Gibbs tail term in that lemma.

The same rule $Q_\delta$ inherits from \Cref{thm:gibbs-over-lists} the worst-case sensitivity bound
\[
   \Sens_{m,n}(Q_\delta)
   \le
   C'(\log m+\log(1/\delta)+1)/n
\]
for an absolute constant $C'$, as stated in \cref{tab:summary-results}.
\section*{Acknowledgments}
Shinsaku Sakaue was supported by JST BOOST Program Japan Grant Number \mbox{JPMJBY24D1}.
Kaito Fujii was supported by JSPS KAKENHI Grant Number \mbox{JP22K17857}.
Yuichi Yoshida was supported by JSPS KAKENHI Grant Number \mbox{JP24K02903}.
\printbibliography[heading=bibintoc]
\appendix
\crefalias{section}{appendix}
\mainbodylabelsfalse
\section{Deferred Details}\label[appendix]{app:deferred-details}
This section presents conventions, notation details, and source translations used by the main text.
\subsection{Normalization Conventions for Degenerate Instances}\label[appendix]{app:normalization-conventions}
The distortion definition keeps zero-optimum instances by using the extended-ratio convention stated in \Cref{sec:prelim}: a ratio with denominator zero is read as $0$ when the numerator is zero and as $+\infty$ otherwise.  Thus, if a rule has finite distortion and $\OPT_{d,\sigma}=0$, then $\SC_{d,\sigma}(F(\sigma))=0$.
For sensitivity, the definition in \cref{subsec:sensitivity} restricts the normalization to instances with $\OPT_{d,\sigma}>0$.  This restriction only removes degenerate normalized terms for finite-distortion rules.  Indeed, for any fixed pseudometric $d\consistent\sigma$, if $\OPT_{d,\sigma}=0$, then every candidate in the support of $F(\sigma)$ has zero social cost.  After deleting a voter, the same restricted metric still has zero optimum.  Let $c$ be a zero-cost candidate for the original profile and let $c'$ be a zero-cost candidate for the profile after deletion.  For every remaining voter $u$, both $d(u,c)$ and $d(u,c')$ are zero, so the triangle inequality gives $d(c,c')\le d(c,u)+d(u,c')=0$.  Hence $\W_d(F(\sigma),F(\sigma^{-v}))=0$ for every deletion $v$.  Thus every deletion term from an instance with $\OPT_{d,\sigma}=0$ is zero under the natural extension that assigns $0/0$ the value $0$, so including these degenerate instances does not affect the supremum defining sensitivity. The same argument applies to average sensitivity defined in \cref{app:baselines}.
\subsection{Details for the Biased-Metric Characterization}\label[appendix]{app:biased-ratio-details}
We relate \Cref{prop:biased-ratio} to the biased-metric characterization of
\citet[Theorem~2.8]{Cai2026}.  Their statement fixes a profile, a lottery $D$ over
candidates, and a nonnegative vector $x$ with $x_{i^*}=0$.  Their threshold set
$I_t=\{a:x_a\le t\}$ is our $I_t(x)$, their term $\ell(D,t)$ is
$\ell_{\sigma,p,x}(t)$, and their term $r(t)$ is $r_{\sigma,x}(t)$.
Their theorem states that the distortion bound $1+2\lambda$ is equivalent to
the inequalities
\[
   L_\sigma(p;x)\le \lambda R_\sigma(x)
\]
for all such vectors $x$.  Therefore, for a fixed profile and lottery, the
smallest admissible excess parameter is
\[
   \sup_x \frac{L_\sigma(p;x)}{R_\sigma(x)},
\]
with the extended-ratio convention used in \Cref{prop:biased-ratio}.
If $R_\sigma(x)=0<L_\sigma(p;x)$, then the biased metric induced by $x$ has
zero optimal social cost and positive lottery cost, so the distortion is
infinite.  If both terms are zero, the vector contributes zero to the excess
ratio.  Restricting to $\X$ only fixes scale: nonzero vectors can be rescaled
without changing the ratio, and the all-zero vector does not affect the
supremum.  \Citet{Cai2026} formulate the model using pseudometrics on the labeled
voters and candidates, which is equivalent to the possibly non-injective
embedding convention in \Cref{sec:prelim}.
\subsection{Bounded-Randomness Consequence and Constants}\label[appendix]{app:br-consequence}
This subsection gives the details behind \Cref{prop:br-uniform-list} and derives a conservative numerical lower bound on the gap.  All theorem, corollary, remark, and claim numbers in this subsection refer to arXiv:2602.08871v1 of \citet{Cai2026}.  In the cited arXiv version of \citet{Cai2026}, Theorem~7.6 gives, for each fixed profile, a real-valued mixture of constant-support rules with distortion below $3$ by an absolute margin.  Corollary~7.7 bounds the total support size by an absolute constant, uniformly over the profile and over $m,n$.  Remark~7.8 observes that a rational mixing probability realizes the resulting rule as selecting uniformly at random from a constant-size list.  The list form used in \Cref{prop:br-uniform-list} follows by replacing the fixed mixing probability by a rational number with denominator independent of the profile and of $m,n$, and then concatenating equal-length component lists.  Thus the conclusion is profile-wise: for every profile $\sigma$, there exists a constant-size list $L$ whose uniform lottery has distortion below $3$.

We now instantiate the constants from \citet[Appendix~B]{Cai2026} to obtain a conservative explicit lower bound on the final improvement over distortion $3$.  In the proof of \citet[Theorem~7.6]{Cai2026}, the parameters are chosen as
\[
   \alpha=\frac{L_k}{24},
   \widetilde\beta=\frac{L_k}{9},
   \quad
   \varepsilon_1=\frac{L_k^3}{150000},
   \varepsilon_2=\frac1k,
   \mu=1-\frac{L_k^2}{2000},
\]
where $L_k=\log(k/4)/k$ and $k\ge7$ is an integer.
For the gap calculation below, set
\[
   a_k=\frac{28L_k^3}{150000},
   b_k=\frac{L_k^2}{108},
   c_k=\frac{L_k}{2}-L_k^2,
   \mu_k=1-\frac{L_k^2}{2000}.
\]
The four derived quantities correspond to specific steps in \citet[Appendix~B]{Cai2026}.  The term $a_k=28\varepsilon_1$ is the approximation penalty for the Maximal Lottery term in Claims~B.4 and~B.5.  The term $b_k=2\alpha\widetilde\beta$ is the improvement in the not strongly consistent case in Claim~B.5, while $c_k=L_k/2-L_k^2$ is the strongly consistent gap supplied by Claim~B.3 and used in Claim~B.4.  Finally, $\mu_k=\mu$ is the mixture parameter from Theorem~7.6 before rational rounding.  With these identifications, the proofs of Claims~B.4 and~B.5 show that the mixed rule's distortion is below $3$ by at least the following margins in the strongly consistent and not strongly consistent cases:
\[
   \zeta_{\mathsf{str}}(k)
   =(1-\mu_k)c_k-\mu_k a_k,
   \zeta_{\mathsf{inc}}(k)
   =\mu_k(b_k-a_k)-12(1-\mu_k).
\]
Thus, before rationally rounding the mixing probability, the bounded-randomness gap is at least
\[
   \zeta_{\mathsf{fixed}}(k)
   =
   \min\{\zeta_{\mathsf{str}}(k),\zeta_{\mathsf{inc}}(k)\}.
\]
We take $k=11$, the integer nearest to the maximizer $4e$ of $L_k=\log(k/4)/k$.
The two values are
\[
   \zeta_{\mathsf{str}}(11)
   \approx
   1.3496019707\times10^{-8},
   \zeta_{\mathsf{inc}}(11)
   \approx
   2.7419103233\times10^{-5}.
\]
The strongly consistent case is the bottleneck, so $\zeta_{\mathsf{fixed}}(11)\approx1.3496019707\times10^{-8}$ before rational rounding.

To obtain the list form in \Cref{prop:br-uniform-list}, choose
\[
   q=\left\lceil\frac{16}{\zeta_{\mathsf{fixed}}(11)}\right\rceil,
   \qquad
   r=\left\lfloor q\mu_{11}+\frac12\right\rfloor .
\]
Then we have
\[
   \left|\frac rq-\mu_{11}\right|
   \le \frac{1}{2q}
   \le
   \frac{\zeta_{\mathsf{fixed}}(11)}{32}.
\]
Let $K_{\mathsf{ML}}$ and $K_{\mathsf{PL}}$ be fixed list lengths realizing the RepApx Maximal Lottery and RepApx Pruned Lottery components for the fixed parameters above; the latter uses the small-support Stable Lottery guarantee of \citet[Theorem~3.4]{Cai2026} after the deterministic pruning step.  These lengths are absolute constants.
Take $K_0$ to be any common multiple, for instance $K_0=K_{\mathsf{ML}}K_{\mathsf{PL}}$.  Repeating a list as a whole does not change its induced uniform lottery, so both component lists may be viewed as lists of length $K_0$.  Concatenating $r$ copies of the Maximal Lottery list and $q-r$ copies of the Pruned Lottery list gives one list of length
\[
   K_{\mathsf{BR}}\coloneqq qK_0,
\]
and a uniform draw from this concatenated list is exactly the mixture with weights $r/q$ and $1-r/q$.  Since the value above gives $q\ge10^9$, this conservative calculation makes $K_{\mathsf{BR}}$ at least $10^9K_0$.  As functions of the mixing probability $p$, the two margins have slopes $-(a_{11}+c_{11})$ and $b_{11}-a_{11}+12$, whose absolute values are at most $16$ for this parameter choice.  Hence replacing $\mu_{11}$ by $r/q$ reduces either margin by at most $16\zeta_{\mathsf{fixed}}(11)/32=\zeta_{\mathsf{fixed}}(11)/2$.  We fix the constant in \Cref{prop:br-uniform-list} by setting
\[
   \varepsilon_{\mathsf{BR}}\coloneqq\frac{\zeta_{\mathsf{fixed}}(11)}{2}.
\]
The improvement $\varepsilon_{\mathsf{BR}}$ in the distortion bound corresponds to the ratio slack $\gamma=\varepsilon_{\mathsf{BR}}/2$, and \Cref{thm:gibbs-over-lists} then shows that the final improvement in the distortion bound is
\[
   \frac{7\gamma}{4}
   =
   \frac{7\varepsilon_{\mathsf{BR}}}{8}
   \ge
   \frac{7}{16}\zeta_{\mathsf{fixed}}(11)
   >
   5.9\times10^{-9}.
\]
The main theorems only require the existence of some absolute positive gap; this numerical value is a conservative instantiation from arXiv v1.
\section{Deferred Proofs}\label[appendix]{app:deferred-proofs}
This section gives the proofs deferred from \Cref{sec:biased-ratio,sec:list-stability,sec:gibbs,sec:sensitivity,sec:approx-privacy}.
\subsection{Proof of the Support-Normalized Reduction}\label[appendix]{app:support-normalized-proof}
\SupportNormalizedReduction*
\begin{proof}
The inequality ``$\ge$'' is immediate.  For the other direction, fix $x\in\X$ and put
\[
   h=\max_{j\in\Supp(p_L)}x_j.
\]
If $h=0$, then every candidate with positive $p_L$-mass has $x$-value $0$,
and hence belongs to $I_t(x)$ for every $t\ge0$.  Therefore no term with
$j\notin I_t(x)$ and $p_L(j)>0$ appears in the sum defining
$\ell_{\sigma,p_L,x}(t)$.
Hence $\ell_{\sigma,p_L,x}(t)\equiv0$ and $\lambda_\sigma(L;x)=0$.  Since $|\Cands|\ge2$, the set $\Xsn(L)$ is nonempty, and all fixed-vector ratios are nonnegative; hence this zero value does not affect the supremum.

Assume $h>0$ and define
\[
   y_c\coloneqq\frac{\min\{x_c,h\}}{h},
   \qquad c\in\Cands.
\]
Then $y\in\Xsn(L)$ holds: all coordinates of $y$ lie in $[0,1]$; since $x\in\X$,
some coordinate has $x$-value $0$ and hence $y$-value $0$; by the definition
of $h$, some support candidate has $y$-value $1$.  Thus $y\in\X$ and
$\max_{c\in\Supp(p_L)}y_c=1$.

For every $t\in[0,1)$, we claim
\[
   I_t(y)=I_{ht}(x).
\]
Indeed, if $x_a\le h$, then $y_a\le t$ is equivalent to $x_a\le ht$; if $x_a>h$, then $y_a=1>t$ and also $x_a>h>ht$.

The level-set identity allows the numerator for $y$ to be compared with the numerator for $x$ after the change of variables $u=ht$.  Specifically, for every
$0\le t<1$, we have
\[
   \ell_{\sigma,p_L,y}(t)
   =
   \sum_{j\notin I_t(y)}
   s_\sigma(I_t(y)\succ j)p_L(j)
   =
   \sum_{j\notin I_{ht}(x)}
   s_\sigma(I_{ht}(x)\succ j)p_L(j)
   =
   \ell_{\sigma,p_L,x}(ht).
\]
Moreover, for $u\ge h$ every support candidate lies in $I_u(x)$, so $\ell_{\sigma,p_L,x}(u)=0$.  Hence the change of variables $u=ht$ yields
\[
   L_\sigma(p_L;y)
   =\int_0^1\ell_{\sigma,p_L,x}(ht)\,\diff t
   =\frac1h\int_0^h\ell_{\sigma,p_L,x}(u)\,\diff u
   =\frac1h L_\sigma(p_L;x).
\]
For the denominator, the clipping map $z\mapsto\min\{z,h\}$ is nondecreasing and $1$-Lipschitz.  Thus, for every ordered pair $a\succ_v b$, we have
\[
   (y_a-y_b)_+
   =\frac{(\min\{x_a,h\}-\min\{x_b,h\})_+}{h}
   \le \frac{(x_a-x_b)_+}{h}.
\]
Taking the maximum over pairs and then averaging over voters gives the denominator comparison
\[
   R_\sigma(y)\le \frac1h R_\sigma(x).
\]
If $R_\sigma(x)=0$, then the denominator inequality gives $R_\sigma(y)=0$ as well, since $R_\sigma(y)\ge0$.  By the extended-ratio convention, $\lambda_\sigma(L;x)$ is either $0$ or $+\infty$ according as $L_\sigma(p_L;x)$ is zero or positive.  The numerator identity $L_\sigma(p_L;y)=L_\sigma(p_L;x)/h$ preserves this dichotomy, so $\lambda_\sigma(L;y)=\lambda_\sigma(L;x)$ in both subcases.  If $R_\sigma(x)>0$, we have
\[
   \lambda_\sigma(L;y)
   =\frac{L_\sigma(p_L;y)}{R_\sigma(y)}
   \ge
   \frac{L_\sigma(p_L;x)/h}{R_\sigma(x)/h}
   =\lambda_\sigma(L;x).
\]
Thus every vector can be replaced by a support-normalized vector of at least the same value.
\end{proof}
\subsection{Proof of the Ratio Change Bound for Local Profile Changes}\label[appendix]{app:local-profile-ratio-change-proof}
\LocalProfileRatioChangeBound*
\begin{proof}
Assume first that $\Lambda\coloneqq\lambda_\sigma(L)<\infty$ and $\Delta\le 1/(4(K\Lambda+1))$.  By \Cref{lem:support-normalized-reduction}, both values $\lambda_\sigma(L)$ and $\lambda_{\sigma'}(L)$ are suprema over the same set $\Xsn(L)$, which depends only on the list.  Fix $x\in\Xsn(L)$.  Since $\lambda_\sigma(L;x)\le \lambda_\sigma(L)=\Lambda$, \Cref{lem:denominator-lower-bound} gives
\[
   R\coloneqq R_\sigma(x)\ge \underline R_{K,\Lambda}=\frac{1}{2(K\Lambda+1)}.
\]
The assumption on $\Delta$ is exactly $\Delta\le \underline R_{K,\Lambda}/2$.  Let $L_0\coloneqq L_\sigma(p_L;x)$.  Since $R>0$ and $\lambda_\sigma(L;x)\le \Lambda$, we have $L_0\le \Lambda R$.  By \eqref{eq:local-profile-average-drift}, the profile $\sigma'$ satisfies
\[
   R_{\sigma'}(x)
   \ge R-\Delta
   \ge \underline R_{K,\Lambda}/2,
   \qquad
   L_{\sigma'}(p_L;x)
   \le L_0+\Delta.
\]
Thus the ratio associated with $x$ on $\sigma'$ is finite, and
\[
   \lambda_{\sigma'}(L;x)-\lambda_\sigma(L;x)
   \le
   \frac{L_0+\Delta}{R-\Delta}-\frac{L_0}{R}
   =
   \frac{\Delta(R+L_0)}{R(R-\Delta)}
   \le
   \frac{2(\Lambda+1)}{\underline R_{K,\Lambda}}\Delta
   =A_K(\Lambda)\Delta.
\]
For the reverse direction on the same fixed vector, \eqref{eq:local-profile-average-drift} also gives $L_{\sigma'}(p_L;x)\ge L_0-\Delta$ and $R_{\sigma'}(x)\le R+\Delta$.  Hence
\[
   \lambda_\sigma(L;x)-\lambda_{\sigma'}(L;x)
   \le
   \frac{L_0}{R}-\frac{L_0-\Delta}{R+\Delta}
   =
   \frac{\Delta(R+L_0)}{R(R+\Delta)}
   \le
   \frac{\Lambda+1}{\underline R_{K,\Lambda}}\Delta
   \le A_K(\Lambda)\Delta.
\]
Therefore $|\lambda_{\sigma'}(L;x)-\lambda_\sigma(L;x)|\le A_K(\Lambda)\Delta$ for every $x\in\Xsn(L)$.  This bound also gives $\lambda_{\sigma'}(L;x)\le\Lambda+A_K(\Lambda)\Delta$ uniformly in $x\in\Xsn(L)$, so $\lambda_{\sigma'}(L)<\infty$.
Taking suprema in the above two inequalities gives
\[
   |\lambda_{\sigma'}(L)-\lambda_\sigma(L)|
   \le
   A_K(\Lambda)\Delta.
\]
For the reverse implication, apply the same calculation with the roles of $\sigma$ and $\sigma'$ interchanged.  Since the additive perturbation condition \eqref{eq:local-profile-average-drift} is symmetric in $\sigma$ and $\sigma'$, the quotient calculation applies with $\lambda_{\sigma'}(L)$ in place of $\Lambda$ and gives the stated reverse bound.

For the one-voter instantiations, suppose $\sigma$ has $n$ voters and $\sigma'$ is obtained from $\sigma$ either by deleting one voter, with $n\ge2$, or by replacing one voter's ranking.  For each fixed $x\in\X$ and threshold $t$, the integrands defining $L_\sigma(p_L;x)$ and $R_\sigma(x)$ are averages of $[0,1]$-valued terms over voters.  The corresponding integrands for $\sigma'$ are obtained by deleting one term from these averages or by replacing one term.  The pointwise change is therefore at most $1/n$.  Since the integrands vanish for $t\ge1$, \eqref{eq:local-profile-average-drift} holds with $\Delta=1/n$, and the one-voter versions of the two implications follow.
\end{proof}
\subsection{Basic Properties of the Gibbs Rule}\label[appendix]{app:gibbs-axioms}
This subsection justifies \Cref{rem:gibbs-basic-properties}. Anonymity follows because $\lambda_\sigma(L)$ depends on the profile only through averages over voters and the ordinal constraints defining consistency; permuting the voters leaves these quantities unchanged.

For neutrality, a relabeling $\varphi$ of candidates induces a bijection
$L\mapsto\varphi(L)$ on $\Lists_K$.  The consistent pseudometrics and social
costs are relabeled in the same way, hence
$\lambda_{\varphi(\sigma)}(\varphi(L))=\lambda_\sigma(L)$.  Thus the Gibbs
distribution over lists is pushed forward by this bijection, and the induced
candidate lotteries satisfy $Q_{\varphi(\sigma)}^{(u)}(\varphi(c))=Q_\sigma^{(u)}(c)$ for every $c\in\Cands$.

For unanimity, suppose all voters rank a candidate $a$ first.  If a list $L$ assigns positive probability to a candidate $b\neq a$, consider the consistent pseudometric that co-locates all voters with $a$ and places $b$ at positive distance from them.  Then $\OPT_{d,\sigma}=0$ while $\SC_{d,\sigma}(p_L)>0$, so $L$ has infinite distortion and hence infinite energy.  The Gibbs distribution assigns zero mass to such lists, and the released candidate is $a$ with probability one.
\subsection{Proofs of the Finite-State Gibbs Estimates}\label[appendix]{app:gibbs-estimates-proofs}
We collect basic finite-state facts about Gibbs distributions, with
proofs included for completeness.
\begin{lemma}[Gibbs Variational Inequality with Infinite Energies]\label{lem:gibbs-variational}
Let $E\colon\Omega\to[0,\infty]$ be an energy on a finite set, and let $\pi_\eta$ be its Gibbs distribution at inverse temperature $\eta>0$.  If $\min_{\omega\in\Omega}E(\omega)\le B<\infty$, then
\[
   \E_{\omega\sim\pi_\eta}[E(\omega)]
   \le
   B+\frac{\log|\Omega|}{\eta}.
\]
\end{lemma}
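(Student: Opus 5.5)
The plan is to use the classical Gibbs variational principle, restricted to the finite-energy states so that no expressions of the form $0\cdot\infty$ appear. Let $\Omega_{<\infty}=\Set{\omega\in\Omega}{E(\omega)<\infty}$. Since $\min_\omega E(\omega)\le B<\infty$, this set is nonempty, and the partition function $Z=\sum_{\omega}\e^{-\eta E(\omega)}$ is strictly positive. The Gibbs law $\pi_\eta$ puts zero mass on infinite-energy states. The expectation is therefore a finite sum over $\Omega_{<\infty}$, with the convention $0\cdot\infty=0$ on the complement.

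\textbf{Step 1 (exact identity).} For every $\omega\in\Omega_{<\infty}$ we have $\log\pi_\eta(\omega)=-\eta E(\omega)-\log Z$. Hence
\[
   \eta\,\E_{\pi_\eta}[E]=-\log Z-\sum_{\omega\in\Omega_{<\infty}}\pi_\eta(\omega)\log\pi_\eta(\omega)=-\log Z+H(\pi_\eta),
\]
where $H$ is the Shannon entropy.

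\textbf{Step 2 (the two bounds).} First, the entropy of a distribution supported on at most $|\Omega|$ points satisfies $H(\pi_\eta)\le\log|\Omega|$, by Jensen's inequality or concavity of the logarithm. Second, choose $\omega^\ast$ with $E(\omega^\ast)\le B$; keeping only that term of the partition sum gives $Z\ge\e^{-\eta E(\omega^\ast)}\ge\e^{-\eta B}$, so $-\log Z\le\eta B$. Substituting both bounds gives $\eta\,\E[E]\le\eta B+\log|\Omega|$, and dividing by $\eta>0$ yields the claim.

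The only delicate point is the treatment of infinite energies. I would handle it by stating the $0\cdot\infty=0$ convention explicitly and noting that the entropy sum ranges over the support of $\pi_\eta$, which is exactly $\Omega_{<\infty}$. Apart from this bookkeeping, the argument is routine, so I expect no real obstacle.
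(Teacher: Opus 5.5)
Your proposal is correct and follows the paper's proof essentially line by line. The paper also restricts to the finite-energy set, writes $\eta\,\E_{\pi_\eta}[E]=H(\pi_\eta)-\log Z$, bounds the entropy by $\log|\Omega|$, and uses $Z\ge\e^{-\eta B}$ from a state of energy at most $B$.
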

\begin{proof}
Let $S=\Set{\omega\in\Omega}{E(\omega)<\infty}$ and put $Z=\sum_{\omega\in S}\e^{-\eta E(\omega)}$.  Since $\e^{-\eta E(\omega)}=0$ when $E(\omega)=\infty$, the Gibbs distribution assigns zero mass to $\Omega\setminus S$, and we interpret $\E_{\omega\sim\pi_\eta}[E(\omega)]$ as the expectation over $S$.  For every $\omega\in S$, we have
\[
   \log \pi_\eta(\omega)=-\eta E(\omega)-\log Z.
\]
Writing $H(\pi_\eta)=-\sum_{\omega\in S}\pi_\eta(\omega)\log\pi_\eta(\omega)$ for the Shannon entropy of $\pi_\eta$, we have
\[
   \E_{\omega\sim\pi_\eta}[E(\omega)]
   =\frac{H(\pi_\eta)-\log Z}{\eta}
   \le
   \frac{\log|\Omega|-\log Z}{\eta}.
\]
Choose $\omega_0$ with $E(\omega_0)\le B$.  Then $Z\ge \e^{-\eta B}$ holds, and the claimed bound follows.
\end{proof}
\begin{lemma}[Gibbs Moment and Tail Bounds]\label{lem:gibbs-moments}
Fix $B\ge0$ and an integer $r\ge0$.  Let $E\colon\Omega\to[0,\infty]$ be an energy on a finite set with $\min_{\omega\in\Omega}E(\omega)\le B$, and let $\pi_\eta$ be its Gibbs distribution at inverse temperature $\eta\ge1$, i.e., $\pi_\eta(\omega)=\e^{-\eta E(\omega)}/Z$ where $Z=\sum_{\omega\in\Omega}\e^{-\eta E(\omega)}$.
If $\eta\ge\log|\Omega|$, then there is a constant $C_{r,B}<\infty$ depending only on $r$ and $B$ such that
\[
   \E_{\omega\sim\pi_\eta}[(1+E(\omega))^r]
   \le
   C_{r,B}.
\]
Moreover, for every $M\ge0$ such that $(1+t)^r \e^{-\eta t}$ is nonincreasing on $[M,\infty)$, the weighted tail satisfies
\[
   \sum_{\substack{\omega:M<E(\omega)<\infty}}
   \pi_\eta(\omega)(1+E(\omega))^r
   \le
   |\Omega|(1+M)^r\exp(-\eta(M-B)).
\]
\end{lemma}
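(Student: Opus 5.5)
The plan is to prove both parts directly from the Gibbs form $\pi_\eta(\omega)=\e^{-\eta E(\omega)}/Z$. Throughout, restrict attention to $S=\Set{\omega}{E(\omega)<\infty}$, since states of infinite energy carry zero mass. Both parts rest on one lower bound for the partition function: choosing $\omega_0$ with $E(\omega_0)\le B$ gives $Z\ge\e^{-\eta B}$. This is the same bound used in \Cref{lem:gibbs-variational}.

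\emph{Tail bound.} I will do this first because it is immediate. For every $\omega$ with $M<E(\omega)<\infty$,
\[
\pi_\eta(\omega)(1+E(\omega))^r\le \e^{\eta B}(1+E(\omega))^r\e^{-\eta E(\omega)}\le \e^{\eta B}(1+M)^r\e^{-\eta M},
\]
where the last step uses the assumed monotonicity of $(1+t)^r\e^{-\eta t}$ on $[M,\infty)$. Summing over at most $|\Omega|$ such states gives the stated bound $|\Omega|(1+M)^r\exp(-\eta(M-B))$.

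\emph{Moment bound.} Split the states at a threshold $M_0=M_0(r,B)\ge B+1$, chosen as follows.
\begin{itemize}
\item On $E\le M_0$, the contribution is at most $(1+M_0)^r$.
\item On $E>M_0$, use the same pointwise estimate as in the tail bound, but keep it pointwise before summing: each such state contributes at most $\e^{\eta B}(1+E)^r\e^{-\eta E}$.
\end{itemize}
To sum over the high states, write
\[
(1+E)^r\e^{-\eta E}=(1+E)^r\e^{-(\eta-1)E}\,\e^{-E}.
\]
Since $\eta\ge1$ and $E\ge0$, the first factor is at most $(1+E)^r\e^{-(\eta-1)E}$ with nothing lost. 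This alone does not give a bound uniform in $\eta$, so I would proceed more carefully.

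Choose $M_0$ so that $(1+t)^r\e^{-t/2}$ is nonincreasing on $[M_0,\infty)$ and $M_0\ge 4B+2$. For $E\ge M_0$ and $\eta\ge1$, bound
\[
(1+E)^r\e^{-\eta E}\le \bigl[(1+E)^r\e^{-E/2}\bigr]\e^{-\eta E/2}\le C'_{r,B}\,\e^{-\eta E/2},
\]
where $C'_{r,B}=\sup_{t\ge0}(1+t)^r\e^{-t/2}$. Multiplying by $\e^{\eta B}$ and summing over at most $|\Omega|\le\e^{\eta}$ states gives at most
\[
C'_{r,B}\exp\bigl(\eta(1+B-M_0/2)\bigr)\le C'_{r,B},
\]
because $M_0/2\ge 2B+1\ge 1+B$. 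Adding the low part gives $C_{r,B}=(1+M_0)^r+C'_{r,B}$.

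The main (mild) obstacle is exactly this last step. The tail must be controlled uniformly in $\eta$ while paying a factor of $|\Omega|\le\e^\eta$ and the factor $\e^{\eta B}$ from the partition function. Splitting $\e^{-\eta E}$ into halves, and using $M_0$ large compared with $B+1$, absorbs both exponential factors.
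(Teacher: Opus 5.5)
Your proof is correct and follows the paper's argument: the bound $Z\ge\e^{-\eta B}$ together with monotonicity gives the tail estimate exactly as you do, and the moment bound splits at a threshold and uses $|\Omega|\le\e^\eta$ to absorb the high-energy part. The only difference is in that high part: the paper reuses its tail estimate at $M=B+r+2$, where $(1+t)^r\e^{-\eta t}$ is already nonincreasing for every $\eta\ge1$, whereas you split $\e^{-\eta E}$ into halves and bound by $\sup_t(1+t)^r\e^{-t/2}$, so the monotonicity condition you impose on $M_0$ is never actually used.
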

\begin{proof}
The latter tail estimate follows from $Z\ge \e^{-\eta B}$ and the monotonicity of $(1+t)^r \e^{-\eta t}$ on $[M,\infty)$.  For the moment bound, put $M_{r,B}\coloneqq B+r+2$.  This choice satisfies the monotonicity condition in that estimate: for $t\ge M_{r,B}$, the derivative of $\log((1+t)^r \e^{-\eta t})$ is $r/(1+t)-\eta\le0$.
Split the expectation according to whether $E(\omega)\le M_{r,B}$.  The contribution from $E(\omega)\le M_{r,B}$ is at most $(1+M_{r,B})^r$.  For the contribution from $E(\omega)>M_{r,B}$, the tail estimate with $M=M_{r,B}$ gives
\[
   \sum_{\substack{\omega:M_{r,B}<E(\omega)<\infty}}
   \pi_\eta(\omega)(1+E(\omega))^r
   \le
   |\Omega|(1+M_{r,B})^r\exp(-\eta(M_{r,B}-B)).
\]
Using $\log|\Omega|\le\eta$, $M_{r,B}-B=r+2$, and $\eta\ge1$, the right-hand side is at most
\[
   (1+M_{r,B})^r\exp(-\eta(r+1))
   \le
   (1+M_{r,B})^r \e^{-(r+1)}.
\]
Thus, the full moment is bounded by a constant depending only on $r$ and $B$.
\end{proof}
\subsection{Proofs for Sensitivity}\label[appendix]{app:gibbs-over-lists-proof}
\EnergyDependentWeightedGibbsStability*
\begin{proof}
Let $Z,Z'$ be the partition functions for $E,E'$.  Put $C_K\coloneqq4K$; since $K\ge1$, we have $A_K(t)=4(Kt+1)(t+1)\le C_K(1+t)^2$ for all $t\ge0$, and hence $(1+t)(1+A_K(t))$ is bounded by a constant depending only on $K$ times $(1+t)^3$.  By the moment part of \Cref{lem:gibbs-moments} with $r=2,3$, there is a
constant $D_{K,B}\ge1$ with the following property.  Whenever $G$ is an
energy with $\min_{\omega\in\Omega}G(\omega)\le B$ and $\mu$ is its Gibbs law
at inverse temperature $\eta\ge\max\{1,\log|\Omega|\}$, we have
\begin{equation}\label{eq:moment-control}
   \E_{\omega\sim\mu}\bigl[A_K(G(\omega))\bigr]
   \le D_{K,B},
   \E_{\omega\sim\mu}
   \bigl[(1+G(\omega))(1+A_K(G(\omega)))\bigr]
   \le D_{K,B}.
\end{equation}
Write $s\coloneqq\eta\Delta$ and set
\[
   M=(16C_Ks)^{-1/2}.
\]
Choose $\nu_{K,B}>0$ so that, whenever $0<s\le \nu_{K,B}$, the cutoff $M=(16C_Ks)^{-1/2}$ satisfies
\begin{equation}\label{eq:smallness-conditions}
   M\ge B+3,
   \qquad
   s\le \frac{1}{4(KM+1)},
   \qquad
   (1+M)\e^{-(M-B-1)}\le s,
   \qquad
   D_{K,B}s\le \frac1{12}.
\end{equation}
Such a choice exists because, as $s\downarrow0$, the cutoff
$M=(16C_Ks)^{-1/2}$ tends to infinity, while
$s(KM+1)\to0$, $D_{K,B}s\to0$, and the exponential factor
$\e^{-M}$ dominates the polynomial factor $(1+M)$ and the factor $s^{-1}$. For completeness, we give one quantitative sufficient choice.
\paragraph{Concrete choice of $\nu_{K,B}$.}
A quantitative sufficient choice is obtained from these conditions as follows.  Put $H_{K,B}\coloneqq B+\log K+1$.  Choose a universal constant $c_0\in(0,1)$ and write $\beta=(8\sqrt{c_0})^{-1}$, with $c_0$ small enough that $c_0\le1/12$, $\sqrt{c_0}/8+c_0\le1/4$, $\beta\ge3$, and
\[
   (\beta-1)x\ge \log 64+3\log(1+\beta x)
   \qquad\text{for every }x\ge1.
\]
Such a universal choice exists: for example, take $\beta=100$ and $c_0=1/(64\beta^2)$; then $\beta=(8\sqrt{c_0})^{-1}$, then the above conditions on $c_0$ and $\beta$ hold, and the function $(\beta-1)x-3\log(1+\beta x)$ is increasing on $x\ge1$ with value at $x=1$ larger than $\log 64$.  We may take
\[
   \nu_{K,B}
   =
   c_0\min\left\{
      D_{K,B}^{-1},
      \frac{1}{K H_{K,B}^2}
   \right\}.
\]
Indeed, if $0<s\le\nu_{K,B}$, we have $D_{K,B}s\le1/12$, $M=(64Ks)^{-1/2}\ge\beta H_{K,B}$, and $s\le c_0$ and $Ks\le c_0/H_{K,B}^2\le c_0$.  Thus
\[
   s(KM+1)=\frac{\sqrt{Ks}}8+s
   \le \frac{\sqrt{c_0}}8+c_0
   \le \frac14.
\]
The bound $M\ge\beta H_{K,B}$ and $\beta\ge3$ also give $M\ge B+3$.  For the exponential tail condition, the identity $s=1/(64KM^2)$ shows that it is enough to prove
\[
   M\ge B+1+\log(64KM^2(1+M)).
\]
Since $M\ge\beta H_{K,B}\ge3$, the function $x\mapsto x-3\log(1+x)$ is increasing on $[\beta H_{K,B},\infty)$.  Hence
\[
   M-3\log(1+M)
   \ge
   \beta H_{K,B}-3\log(1+\beta H_{K,B})
   \ge
   H_{K,B}+\log 64.
\]
Using $H_{K,B}=B+\log K+1$ and $2\log M+\log(1+M)\le3\log(1+M)$ gives the desired exponential tail condition.  Thus the presented scale is a valid sufficient choice for $\nu_{K,B}$.
\paragraph{Sublevel and tail controls.}
The first two inequalities in \eqref{eq:smallness-conditions} imply that, for every $t\le M$, the following sublevel controls hold:
\begin{equation}\label{eq:sublevel-weight-control}
   \eta A_K(t)\Delta=sA_K(t)\le sC_K(1+M)^2\le4sC_KM^2=\frac14.
\end{equation}
\begin{equation}\label{eq:sublevel-drift-condition}
   \Delta=\frac{s}{\eta}\le s\le\frac{1}{4(KM+1)}\le\frac{1}{4(Kt+1)}.
\end{equation}
Since $M\ge B+3\ge1$ and $\eta\ge1$, the function $(1+t)^r \e^{-\eta t}$ is nonincreasing on $[M,\infty)$ for $r=0,1$.  The weighted tail estimate in \Cref{lem:gibbs-moments}, together with $\log|\Omega|\le\eta$, gives the following bound for either energy $G\in\set{E,E'}$ and its Gibbs law $\mu$, for $r=0,1$:
\begin{equation}\label{eq:tail-control}
   \sum_{\substack{\omega:M<G(\omega)<\infty}}
   \mu(\omega)(1+G(\omega))^r
   \le
   |\Omega|(1+M)^r \e^{-\eta(M-B)}
   \le
   (1+M)\e^{-(M-B-1)}
   \le s.
\end{equation}
\paragraph{Partition-function comparison.}
We now use the sublevel and tail bounds to compare the partition functions.  For states $\omega$ with $E(\omega)\le M$, \eqref{eq:sublevel-drift-condition} with $t=E(\omega)$ lets us apply the local drift condition, while \eqref{eq:sublevel-weight-control} with $t=E(\omega)$ gives $\eta A_K(E(\omega))\Delta\le1/4$.  Hence $E'(\omega)<\infty$, and the exponential weights satisfy
\[
   \e^{-\eta E'(\omega)}
   \ge
   \e^{-\eta E(\omega)}\bigl(1-2\eta A_K(E(\omega))\Delta\bigr),
\]
where we use $\e^{-z}\ge1-2z$ for $0\le z\le1/4$.  Keeping only the terms with $E(\omega)\le M$ in $Z'$ and normalizing by $Z$ gives
\begin{equation}\label{eq:partition-lower}
   \frac{Z'}{Z}\ge \sum_{\omega:E(\omega)\le M}\pi(\omega)\bigl(1-2\eta\Delta A_K(E(\omega))\bigr)\ge 1-\pi(\Set*{\omega}{E(\omega)>M})-2\eta\Delta\,\E_{\omega\sim\pi}\bigl[A_K(E(\omega))\bigr].
\end{equation}
In \eqref{eq:partition-lower}, the tail term is at most $s$ by \eqref{eq:tail-control} with $G=E,\mu=\pi,r=0$, and the moment term is at most $2D_{K,B}s$ by \eqref{eq:moment-control} with $G=E,\mu=\pi$.  Since $D_{K,B}\ge1$, this gives $Z'/Z\ge1-3D_{K,B}s$.  The same argument with $E$ and $E'$ interchanged gives $Z/Z'\ge1-3D_{K,B}s$.  Since $3D_{K,B}s\le1/4$ by \eqref{eq:smallness-conditions}, applying $-\log(1-a)\le2a$ with $a=3D_{K,B}s \in (0,1/4]$ to the two inequalities gives the partition-ratio bound
\begin{equation}\label{eq:partition-ratio}
   \left|\log\frac{Z'}{Z}\right|
   \le
   6D_{K,B}s.
\end{equation}
\paragraph{Derivation of \eqref{eq:energy-weighted-positive-part-bound}.}
We now bound the first sum on the left-hand side of \eqref{eq:energy-weighted-positive-part-bound}.  The contribution from states with $M<E(\omega)<\infty$ is at most
\[
   \sum_{\omega:M<E(\omega)<\infty}\pi(\omega)(1+E(\omega))
   \le
   s
\]
by \eqref{eq:tail-control} with $G=E,\mu=\pi,r=1$.  For states $\omega$ with $E(\omega)\le M$, \eqref{eq:sublevel-drift-condition}
with $t=E(\omega)$ gives the smallness condition required to apply the local drift assumption; together with \eqref{eq:partition-ratio}, this gives
\[
   \left|\log\frac{\pi(\omega)}{\pi'(\omega)}\right|
   =
   \left|-\eta(E(\omega)-E'(\omega))+\log\frac{Z'}{Z}\right|
   \le
   s A_K(E(\omega))+6D_{K,B}s.
\]
By \eqref{eq:sublevel-weight-control} with $t=E(\omega)$ and $D_{K,B}s\le1/12$ from \eqref{eq:smallness-conditions}, the last expression is
at most $1/4+1/2<1$.  Thus, with
$z=\log(\pi(\omega)/\pi'(\omega))$, we have $|z|<1$.  Since
$(\pi(\omega)-\pi'(\omega))_+=\pi(\omega)(1-\e^{-z})_+$ and
$(1-\e^{-z})_+\le2|z|$ for $|z|\le1$, it follows that, for $E(\omega)\le M$,
\begin{equation}\label{eq:positive-part-bound}
   (\pi(\omega)-\pi'(\omega))_+
   \le
   2\pi(\omega)s\bigl(A_K(E(\omega))+6D_{K,B}\bigr).
\end{equation}
Multiplying \eqref{eq:positive-part-bound} by $1+E(\omega)$ and summing over
$E(\omega)\le M$ gives
\[
   \sum_{\omega:E(\omega)\le M}
   (\pi(\omega)-\pi'(\omega))_+(1+E(\omega))
      \le{}
      2s\,\E_{\omega\sim\pi} \bigl[(1+E(\omega))A_K(E(\omega))\bigr]
   +12D_{K,B}s\,\E_{\omega\sim\pi}[1+E(\omega)].
\]
The two expectations are bounded by constants depending only on $K$ and $B$
by \eqref{eq:moment-control}.  Hence, choosing
$\widetilde C_{K,B}$ large enough, we have
\[
   \sum_{\omega:E(\omega)\le M}
   (\pi(\omega)-\pi'(\omega))_+(1+E(\omega))
   \le
   \widetilde C_{K,B}\eta\Delta.
\]
The second sum is handled in the same way after interchanging
$E,\pi,Z$ with $E',\pi',Z'$; the symmetric drift assumption gives the
pointwise comparison, and \eqref{eq:partition-ratio} controls the partition
functions in both directions.
Summing the two bounds proves the lemma with $C_{K,B}=2\widetilde C_{K,B}$.
\end{proof}
We also use the following elementary transport bound to convert list-distribution stability into a bound on the Wasserstein term in sensitivity.
\begin{lemma}[Transport through an Optimum]\label{lem:transport-through-opt}
For any lotteries $p,q\in\triangle(\Cands)$ and any candidate $o\in\Cands$, transporting all mass through $o$ gives
\[
   \W_d(p,q)
   \le
   \sum_c p(c)d(c,o)+\sum_c q(c)d(c,o).
\]
For every profile $\sigma$, lottery $p$, and candidate $o$, we also have
\[
   \sum_c p(c)d(c,o)
   \le
   \SC_{d,\sigma}(p)+\SC_{d,\sigma}(o).
\]
In particular, if $o\in\argmin_c\SC_{d,\sigma}(c)$, then the right-hand side is $\SC_{d,\sigma}(p)+\OPT_{d,\sigma}$.
\end{lemma}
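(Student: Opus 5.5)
The first inequality follows from an explicit coupling, and the second from the triangle inequality averaged over voters.

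\emph{First inequality.} I will use the product coupling $\pi(a,b)\coloneqq p(a)q(b)$, which lies in $\Gamma(p,q)$. By the triangle inequality, $d(a,b)\le d(a,o)+d(o,b)$ for all $a,b$. Hence
\[
   \W_d(p,q)\le\sum_{a,b}p(a)q(b)\bigl(d(a,o)+d(o,b)\bigr)=\sum_a p(a)d(a,o)+\sum_b q(b)d(b,o),
\]
where the equality uses $\sum_b q(b)=\sum_a p(a)=1$. This coupling is the ``transport through $o$'': each unit of mass is sent from its origin to $o$ and then on to its destination.

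\emph{Second inequality.} Fix a candidate $c$. For every voter $v$, the triangle inequality gives $d(c,o)\le d(c,v)+d(v,o)$. Averaging over the $n$ voters yields
\[
   d(c,o)\le \SC_{d,\sigma}(c)+\SC_{d,\sigma}(o).
\]
Multiplying by $p(c)$ and summing over $c$, using $\sum_c p(c)=1$, gives
\[
   \sum_c p(c)d(c,o)\le \SC_{d,\sigma}(p)+\SC_{d,\sigma}(o).
\]
If $o\in\argmin_c\SC_{d,\sigma}(c)$, then $\SC_{d,\sigma}(o)=\OPT_{d,\sigma}$, which is the final claim.

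No step is a real obstacle. The only points to check are that the product measure is a valid coupling and that $d$ is a pseudometric on voters and candidates jointly, so the triangle inequality may pass through a voter.
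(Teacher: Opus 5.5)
Your proposal is correct and follows the paper's proof: the paper's ``move all mass of $p$ to $o$ and then from $o$ to $q$'' is exactly your product coupling $p\otimes q$ combined with the triangle inequality through $o$. The second bound is likewise the paper's voter-averaged triangle inequality $d(c,o)\le \SC_{d,\sigma}(c)+\SC_{d,\sigma}(o)$, averaged over $p$.
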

\begin{proof}
The first bound comes from the feasible transport that moves all mass of $p$ to $o$ and then from $o$ to $q$.  For the second, the triangle inequality gives, for each candidate $c$, the bound
\[
   d(c,o)
   \le
   \frac1n\sum_{v\in\Voters}\bigl(d(c,v)+d(v,o)\bigr)
   =
   \SC_{d,\sigma}(c)+\SC_{d,\sigma}(o).
\]
Averaging over $p$ proves the general form, and the optimum-specialized form follows from $\SC_{d,\sigma}(o)=\OPT_{d,\sigma}$.
\end{proof}
\AverageSensitivityUniformListTheorem*
\begin{proof}
If $m=1$, the unique candidate has distortion $1$ and sensitivity $0$.  Assume $m\ge2$.  The sensitivity statement is only for $n\ge2$.
\paragraph{Distortion.}
For any profile, apply \Cref{lem:gibbs-variational} to $E(L)=\lambda_\sigma(L)$ and the common inverse temperature $\eta_u$.  Since $|\Lists_K|\le m^K$, we have
\[
   \E_{L\sim\Pi_\sigma^{(\eta_u)}}[\lambda_\sigma(L)]
   \le
   1-\gamma+\frac{K\log m}{\eta_u}
   \le
   1-\frac{7\gamma}{8}.
\]
The Gibbs distribution assigns zero mass to lists with infinite $\lambda_\sigma(L)$, and \Cref{prop:biased-ratio} gives $\rho_\sigma(L)=1+2\lambda_\sigma(L)$ on the remaining support.  The worst-case distortion value is convex in the lottery, hence
\[
   \rho_\sigma(Q^{(u)}(\sigma))
   \le
   \E_{L\sim\Pi_\sigma^{(\eta_u)}}[\rho_\sigma(L)]
   =
   1+2\E_{L\sim\Pi_\sigma^{(\eta_u)}}[\lambda_\sigma(L)]
   \le
   3-\frac{7\gamma}{4}.
\]
This holds for every profile, so the distortion bound follows.
\paragraph{Gibbs Stability.}
Fix an instance $(\sigma,d)$ with $n\ge2$ and a deletion $\sigma'=\sigma^{-v}$.  If $\OPT_{d,\sigma}=0$, then the zero-optimum discussion in \Cref{app:normalization-conventions}, applied to the finite-distortion rule $Q^{(u)}$ just proved above, gives $\W_d(Q^{(u)}(\sigma),Q^{(u)}(\sigma'))=0$. We may assume for the rest of the proof that $\OPT_{d,\sigma}>0$.

Define the energies by
\[
   E(L)=\lambda_\sigma(L),
   \qquad
   E'(L)=\lambda_{\sigma'}(L).
\]
Also let $\Pi=\Pi_\sigma^{(\eta_u)}$ and $\Pi'=\Pi_{\sigma'}^{(\eta_u)}$.  Put $B=1-\gamma$ and $\Delta=1/n$.  The existence assumption in the theorem holds for both $\sigma$ and $\sigma'$, so $\min_{L\in\Lists_K}E(L)\le B$ and $\min_{L\in\Lists_K}E'(L)\le B$.  The first two implications in \Cref{prop:local-profile-ratio-change} give exactly the local drift condition required in \Cref{lem:energy-dependent-gibbs}.  Also, we have
\[
   \frac{\log|\Lists_K|}{\eta_u}
   \le
   \frac{K\log m}{(8/\gamma)(K\log m+u+1)}
   \le1.
\]
For convenience, let $\mathcal T(\Pi,\Pi';E,E')$ denote the left-hand side of \eqref{eq:energy-weighted-positive-part-bound} with $(\Omega,\pi,\pi')=(\Lists_K,\Pi,\Pi')$.
If $\eta_u\Delta\le \nu_{K,B}$, \Cref{lem:energy-dependent-gibbs} gives, for a constant $C_{K,\gamma}^{\mathsf{Gibbs}}<\infty$ depending only on $K$ and $\gamma$, the bound
\begin{equation}\label{eq:gibbs-weighted-list-stability}
   \mathcal T(\Pi,\Pi';E,E')
   \le
   C_{K,\gamma}^{\mathsf{Gibbs}}\eta_u\Delta
   \le
   C_{K,\gamma}^{\mathsf{Gibbs}}\frac{K\log m+u+1}{n}.
\end{equation}
If $\eta_u\Delta>\nu_{K,B}$, then ${(K\log m+u+1)}/{n} > {\gamma \nu_{K,B}}/{8}$.
As shown below, this lower bound offers a way to control the Wasserstein distance in this remaining case.
\paragraph{From List Stability to Sensitivity.}
Let $o$ be an optimal candidate for the original instance $(\sigma,d)$, and let $\delta_o$ denote the point mass at $o$.  The same metric $d$ restricted to the remaining voters satisfies $d\consistent\sigma'$.  The candidate $o$ need not be optimal for $\sigma'$; it is used only in the transport comparison.  Nevertheless, we have
\begin{equation}\label{eq:deletion-opt-comparison}
   \OPT_{d,\sigma'}
   \le
   \SC_{d,\sigma'}(o)
   =\frac1{n-1}\sum_{w\ne v}d(w,o)
   \le
   \frac{n}{n-1}\OPT_{d,\sigma}.
\end{equation}
To bound the Wasserstein distance between the candidate lotteries, couple the common part of $\Pi$ and $\Pi'$ by the same list, move unmatched mass on the original side from $p_L$ to $\delta_o$, and move unmatched mass on the deletion side from $\delta_o$ to $p_L$.  For unmatched mass originating from $\Pi$, \Cref{lem:transport-through-opt} and the bound $\rho_\sigma(L)=1+2E(L)$ give
\[
   \frac{\W_d(p_L,\delta_o)}{\OPT_{d,\sigma}}
   \le
   \frac{\SC_{d,\sigma}(p_L)+\OPT_{d,\sigma}}{\OPT_{d,\sigma}}
   \le
   \rho_\sigma(L)+1
   =2+2E(L)
   \le4(1+E(L)).
\]
For unmatched mass originating from $\Pi'$, the same metric restricted to the remaining voters is feasible in the definition of $\rho_{\sigma'}(L)$.  Using the general form of \Cref{lem:transport-through-opt} and \eqref{eq:deletion-opt-comparison}, we obtain
\[
\begin{aligned}
   \frac{\W_d(p_L,\delta_o)}{\OPT_{d,\sigma}}
   &\le
   \frac{\SC_{d,\sigma'}(p_L)+\SC_{d,\sigma'}(o)}{\OPT_{d,\sigma}}
   =
   \frac{\SC_{d,\sigma'}(p_L)}{\OPT_{d,\sigma}}
   +\frac{\SC_{d,\sigma'}(o)}{\OPT_{d,\sigma}}
   \\
   &\le
   \rho_{\sigma'}(L)\frac{\OPT_{d,\sigma'}}{\OPT_{d,\sigma}}
   +\frac{\SC_{d,\sigma'}(o)}{\OPT_{d,\sigma}}
   \le
   \frac{n}{n-1}\bigl(\rho_{\sigma'}(L)+1\bigr)
   \le4(1+E'(L)).
\end{aligned}
\]
Since
\[
   Q^{(u)}(\sigma)=\sum_L\Pi(L)p_L,
   \qquad
   Q^{(u)}(\sigma')=\sum_L\Pi'(L)p_L,
\]
the list-level decomposition above induces a candidate-level transport plan:
the common mass $\min\{\Pi(L),\Pi'(L)\}$ is coupled through the same lottery
$p_L$, while the two unmatched parts are routed through $\delta_o$.
Thus the left and right marginals of this plan are
$Q^{(u)}(\sigma)$ and $Q^{(u)}(\sigma')$, respectively.  The common mass contributes no cost.  After normalizing by $\OPT_{d,\sigma}$, the unmatched mass originating from $\Pi$ contributes at most
\[
   4\sum_{L:E(L)<\infty}(\Pi(L)-\Pi'(L))_+(1+E(L)),
\]
and the unmatched mass originating from $\Pi'$ contributes at most
\[
   4\sum_{L:E'(L)<\infty}(\Pi'(L)-\Pi(L))_+(1+E'(L)).
\]
Set $C_{K,\gamma}^{\mathsf{small}}\coloneqq4C_{K,\gamma}^{\mathsf{Gibbs}}$.  By the definition of $\mathcal T$, these two bounds and \eqref{eq:gibbs-weighted-list-stability} give, whenever $\eta_u\Delta\le \nu_{K,B}$,
\[
   \frac{\W_d(Q^{(u)}(\sigma),Q^{(u)}(\sigma'))}{\OPT_{d,\sigma}}
   \le
   4\mathcal T(\Pi,\Pi';E,E')
   \le
   C_{K,\gamma}^{\mathsf{small}}\frac{K\log m+u+1}{n}.
\]
If $\eta_u\Delta>\nu_{K,B}$, use the distortion bound already proved.  Let $\rho_\gamma=3-7\gamma/4$.  \Cref{lem:transport-through-opt} and \eqref{eq:deletion-opt-comparison} give
\[
   \frac{\W_d(Q^{(u)}(\sigma),\delta_o)}{\OPT_{d,\sigma}}
   \le \rho_\gamma+1,
   \frac{\W_d(Q^{(u)}(\sigma'),\delta_o)}{\OPT_{d,\sigma}}
   \le \frac{n}{n-1}(\rho_\gamma+1)
   \le2(\rho_\gamma+1).
\]
The triangle inequality gives a normalized Wasserstein distance at most $3(\rho_\gamma+1)$.  Since $\eta_u=(8/\gamma)(K\log m+u+1)$ and $\Delta=1/n$, the inequality $\eta_u\Delta>\nu_{K,B}$ implies
\[
   \frac{K\log m+u+1}{n}
   >
   \frac{\gamma \nu_{K,B}}{8}.
\]
Define $C_{K,\gamma}^{\mathsf{large}} \coloneqq\frac{24(\rho_\gamma+1)}{\gamma \nu_{K,B}}$.
The preceding lower bound on $(K\log m+u+1)/n$ gives
\[
   3(\rho_\gamma+1)
   \le
   C_{K,\gamma}^{\mathsf{large}}\frac{K\log m+u+1}{n}.
\]
Taking $C_{K,\gamma}=\max\{C_{K,\gamma}^{\mathsf{small}},C_{K,\gamma}^{\mathsf{large}}\}$ proves the stated normalized Wasserstein bound for every deletion.
Taking the supremum over instances and the maximum over deleted voters proves the worst-case sensitivity bound.  The average-sensitivity bound is due to $\AS_{m,n}(Q^{(u)})\le\Sens_{m,n}(Q^{(u)})$.

For the final assertion, \Cref{prop:br-uniform-list} supplies absolute constants $K_{\mathsf{BR}}$ and $\varepsilon_{\mathsf{BR}}$ such that every profile has a $K_{\mathsf{BR}}$-list with $\lambda_\sigma(L)\le1-\varepsilon_{\mathsf{BR}}/2$.  Apply the theorem above with $u=1$, $K=K_{\mathsf{BR}}$, and $\gamma=\varepsilon_{\mathsf{BR}}/2$, and set $Q=Q^{(1)}$.  This gives $\Dist(Q)\le 3-7\varepsilon_{\mathsf{BR}}/8$, and the fixed value of $K_{\mathsf{BR}}$ can be absorbed into an absolute constant $C$ in the sensitivity bound.
\end{proof}
We can also prove a worst-case sensitivity bound for deleting multiple voters in the spirit of \citet[Theorem~2.2]{VarmaYoshida}, with additional care needed to control the effect of deletions on the normalization.
\begin{corollary}[$k$-Deletion Worst-Case Sensitivity Bound]\label{cor:k-deletion-worst-case-sensitivity}
Assume the hypothesis of \Cref{thm:gibbs-over-lists}, and let $Q^{(u)}$ be the Gibbs rule defined there.  For $1\le k\le n-1$, define
\[
   \Sens^{(k)}_{m,n}(Q^{(u)})
   \coloneqq
   \sup_{\mathcal E,d}
   \max_{\substack{S\subseteq\Voters\\ |S|=k}}
      \frac{\W_d(Q^{(u)}(\sigma),Q^{(u)}(\sigma^{-S}))}
      {\OPT_{d,\sigma}}
\]
where the supremum ranges over all instances $\mathcal E=(\Voters,\Cands,\sigma)$ with $|\Cands|=m$ and $|\Voters|=n$ and all consistent pseudometrics $d$ with $\OPT_{d,\sigma}>0$, and $\sigma^{-S}$ denotes the profile obtained by deleting all voters in $S$.  Then, we have
\[
   \Sens^{(k)}_{m,n}(Q^{(u)})
   \le
   C_{K,\gamma}(K\log m+u+1)
   \sum_{j=0}^{k-1}\frac{n}{(n-j)^2}.
\]
In particular, if $k\le n/2$, then
\[
   \Sens^{(k)}_{m,n}(Q^{(u)})
   \le
   \frac{4C_{K,\gamma}k(K\log m+u+1)}{n}.
\]
\end{corollary}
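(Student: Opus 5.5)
The plan is to telescope single deletions, applying \Cref{thm:gibbs-over-lists} at each step, and to track how the normalization $\OPT$ changes along the chain. Fix an instance $(\sigma,d)$ with $\OPT_{d,\sigma}>0$ and a set $S=\{v_1,\ldots,v_k\}$. Define the chain $\sigma_0=\sigma$ and $\sigma_j=\sigma_{j-1}^{-v_j}$, so $\sigma_j$ has $n-j\ge n-k+1\ge2$ voters for every $j\le k-1$. This makes the one-deletion bound applicable at every step. The same pseudometric $d$, restricted to the surviving voters, remains consistent with every $\sigma_j$. The triangle inequality for $\W_d$ then gives
\[
   \W_d(Q^{(u)}(\sigma),Q^{(u)}(\sigma^{-S}))
   \le
   \sum_{j=0}^{k-1}\W_d(Q^{(u)}(\sigma_j),Q^{(u)}(\sigma_{j+1})).
\]

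Consider the $j$-th term. If $\OPT_{d,\sigma_j}=0$, the zero-optimum argument of \Cref{app:normalization-conventions} shows that the term vanishes, since $Q^{(u)}$ has finite distortion. Otherwise I apply the sensitivity bound of \Cref{thm:gibbs-over-lists} to the instance $(\sigma_j,d)$, which has $n-j$ voters. This gives
\[
   \W_d(Q^{(u)}(\sigma_j),Q^{(u)}(\sigma_{j+1}))
   \le
   C_{K,\gamma}\frac{K\log m+u+1}{n-j}\OPT_{d,\sigma_j}.
\]

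The main step, and the only genuine obstacle, is converting $\OPT_{d,\sigma_j}$ back to $\OPT_{d,\sigma}$. The previous step applied to this instance ($o$ stays $\sigma$-optimal) gives the chain $\OPT_{d,\sigma_j}\le \frac{n}{n-j}\OPT_{d,\sigma}$, which yields the target $n/(n-j)^2$ factor directly. To prove it, let $o$ be an optimal candidate for $\sigma$. Since distances are nonnegative, deleting $j$ voters gives
\[
   \OPT_{d,\sigma_j}\le\SC_{d,\sigma_j}(o)=\frac1{n-j}\sum_{w\notin\{v_1,\ldots,v_j\}}d(w,o)\le\frac{1}{n-j}\sum_{w}d(w,o)=\frac{n}{n-j}\OPT_{d,\sigma}.
\]
This generalizes \eqref{eq:deletion-opt-comparison}. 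Substituting into the previous bound, dividing by $\OPT_{d,\sigma}$, and summing over $j$ gives the first claim:
\[
   C_{K,\gamma}(K\log m+u+1)\sum_{j=0}^{k-1}\frac{n}{(n-j)^2}.
\]

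For the second claim, assume $k\le n/2$. Then every $j\le k-1$ satisfies $n-j\ge n/2$, so each summand is at most $n/(n/2)^2=4/n$. The sum is therefore at most $4k/n$, which yields the stated $4C_{K,\gamma}k(K\log m+u+1)/n$. Finally, I take the supremum over instances and the maximum over sets $S$.
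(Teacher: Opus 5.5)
Your proposal is correct and matches the paper's proof essentially step for step. The paper also telescopes single deletions along an ordering of $S$, disposes of intermediate profiles with $\OPT_{d,\sigma_j}=0$ via the zero-optimum convention, and applies the one-deletion bound to the $(n-j)$-voter profile $\sigma_j$. It then converts back through the bound $\OPT_{d,\sigma_j}\le\frac{n}{n-j}\OPT_{d,\sigma}$ obtained from an original optimum $o$, and finishes with $n/(n-j)^2\le 4/n$ for $k\le n/2$.
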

\begin{proof}
Fix an instance and a consistent pseudometric $d$ with $\OPT_{d,\sigma}>0$, and fix a $k$-element set $S\subseteq\Voters$.  Choose an arbitrary ordering $S=\{v_1,\ldots,v_k\}$, set $S_j\coloneqq\{v_1,\ldots,v_j\}$ with $S_0=\emptyset$, and write $\sigma_j=\sigma^{-S_j}$ for $j=0,\ldots,k$.  The same pseudometric $d$, restricted to the remaining voters, is consistent with each profile $\sigma_j$.  Let
\[
   D_S
   \coloneqq
   \W_d(Q^{(u)}(\sigma),Q^{(u)}(\sigma^{-S})).
\]
By the triangle inequality for $\W_d$, we have
\[
   D_S
   \le
   \sum_{j=0}^{k-1}\W_d(Q^{(u)}(\sigma_j),Q^{(u)}(\sigma_{j+1})).
\]
For $j<k$, the profile $\sigma_j$ has $n-j\ge2$ voters.  If $\OPT_{d,\sigma_j}=0$, then the zero-optimum convention for finite-distortion rules gives $\W_d(Q^{(u)}(\sigma_j),Q^{(u)}(\sigma_{j+1}))=0$.  Otherwise, by the definition of $\Sens_{m,n-j}$ and the sensitivity bound in \Cref{thm:gibbs-over-lists}, applied to the $(n-j)$-voter profile $\sigma_j$, the inequality in the theorem holds.  In either case, we have
\[
   \W_d(Q^{(u)}(\sigma_j),Q^{(u)}(\sigma_{j+1}))
   \le
   C_{K,\gamma}\frac{K\log m+u+1}{n-j}\OPT_{d,\sigma_j}.
\]
Let $o$ be an optimal candidate for the original profile $\sigma$.  Since $o$ is feasible for each intermediate profile, we have
\[
   \OPT_{d,\sigma_j}
   \le
   \SC_{d,\sigma_j}(o)
   =
   \frac{1}{n-j}\sum_{v\in\Voters\setminus S_j}d(v,o)
   \le
   \frac{1}{n-j}\sum_{v\in\Voters}d(v,o)
   =
   \frac{n}{n-j}\OPT_{d,\sigma}.
\]
Combining the above inequalities and dividing by $\OPT_{d,\sigma}$ yield
\[
   \frac{D_S}{\OPT_{d,\sigma}}
   \le
   C_{K,\gamma}(K\log m+u+1)
   \sum_{j=0}^{k-1}\frac{n}{(n-j)^2}.
\]
Since this inequality holds throughout the range of the supremum and for every $k$-element set $S$, taking the maximum over $S$ and then the supremum proves the first inequality.  If $k\le n/2$, then $n/(n-j)^2\le4/n$ for every $0\le j\le k-1$, which gives the stated simplified bound.
\end{proof}
\subsection{Proofs for Approximate Privacy}\label[appendix]{app:privacy-proofs}
\GibbsPrivacyComparisonLemma*
\begin{proof}
Let $Z$ and $Z'$ be the partition functions for $E$ and $E'$.  Since each energy has a state of value at most $B$, both partition functions are at least $\e^{-\eta B}$.  Hence the probability that $\pi$ assigns to $\Omega_{\le D}^c$, the complement of $\Omega_{\le D}$, is at most $\kappa$, i.e., $\pi(\Omega_{\le D}^c)\le\kappa$, and the same bound holds for $\pi'$.

On $\Omega_{\le D}$, both energies are finite and differ by at most $a\Delta$.  For such $\omega$, the corresponding unnormalized Gibbs weights satisfy
\[
   \frac{\e^{-\eta E'(\omega)}}{\e^{-\eta E(\omega)}}
   =
   \e^{-\eta(E'(\omega)-E(\omega))}
   \in
   [\e^{-\theta},\e^\theta].
\]
Thus the contribution to $Z'$ from $\Omega_{\le D}$ is at most $\e^\theta$ times the contribution to $Z$, while the contribution from $\Omega_{\le D}^c$ is at most $|\Omega|\e^{-\eta D}\le \kappa Z$.  Hence
\[
   Z'
   =
   \sum_{\omega\in\Omega_{\le D}}\e^{-\eta E'(\omega)}
   +
   \sum_{\omega\notin\Omega_{\le D}}\e^{-\eta E'(\omega)}
   \le
   \e^\theta\sum_{\omega\in\Omega_{\le D}}\e^{-\eta E(\omega)}
   +
   |\Omega|\e^{-\eta D}
   \le
   (\e^\theta+\kappa)Z .
\]
For $\omega\in\Omega_{\le D}$, the likelihood ratio obeys
\[
   \frac{\pi(\omega)}{\pi'(\omega)}
   =
   \e^{\eta(E'(\omega)-E(\omega))}\frac{Z'}{Z}
   \le
   \e^{2\theta}+\e^\theta\kappa.
\]
Therefore the low-sublevel part of any event satisfies
\[
   \pi(S\cap\Omega_{\le D})
   \le
   \e^{2\theta}\pi'(S\cap\Omega_{\le D})+\e^\theta\kappa.
\]
Adding the tail bound $\pi(S\cap\Omega_{\le D}^c)\le\pi(\Omega_{\le D}^c)\le\kappa$ gives
\[
   \pi(S)
   \le
   \e^{2\theta}\pi'(S)+(1+\e^\theta)\kappa.
\]
The proof with $E$ and $E'$ interchanged is identical.
\end{proof}
\ApproximatePrivacyUniformListTheorem*
\begin{proof}
If $m=1$, then every list contains only the unique candidate and the released winner is deterministic and independent of the profile, so the distortion and privacy claims are trivial.  Hence assume $m\ge2$.

The distortion argument is the same entropy calculation as in \Cref{thm:gibbs-over-lists}: the inverse temperature $\eta_u$ is at least $(8/\gamma)(K\log m+1)$, so the Gibbs expectation of $\lambda_\sigma(L)$ is at most $1-7\gamma/8$ on every profile.

Fix replacement-neighboring profiles $\sigma$ and $\tau$ on the same candidate set and the same $n$ voters.  Let $E(L)=\lambda_\sigma(L)$ and $E'(L)=\lambda_\tau(L)$; let $\Pi$ and $\Pi'$ be the corresponding Gibbs distributions at inverse temperature $\eta_u$.  Put $B=1-\gamma$, $D=1-\gamma/2$, and $a=A_K(D)$.  Assume $n\ge8a/\gamma$, as required in the theorem statement.
Since the theorem hypothesis gives a $K$-list $L_\sigma$ with $\lambda_\sigma(L_\sigma)\le1-\gamma=B$ and a $K$-list $L_\tau$ with $\lambda_\tau(L_\tau)\le1-\gamma=B$, we have $\min_{L\in\Lists_K}E(L)\le B$ and $\min_{L\in\Lists_K}E'(L)\le B$.  In the notation of \Cref{lem:gibbs-privacy-comparison}, the low-energy set is
\[
   \Omega_{\le D}=\Set*{L\in\Lists_K}{\min(E(L),E'(L))\le D}.
\]
Let $\Lambda(L)\coloneqq\min(E(L),E'(L))$ for $L\in\Omega_{\le D}$.  Then it holds $\Lambda(L)\le D$, hence $A_K(\Lambda(L))\le A_K(D) = a$ and
\[
   \underline R_{K,D}
   =
   \frac{1}{2(KD+1)}
   \le
   \frac{1}{2(K\Lambda(L)+1)}
   =
   \underline R_{K,\Lambda(L)}.
\]
The lower bound on $n$ gives
\[
   \frac1n
   \le
   \frac{\gamma}{8a}
   \le
   \frac{1}{4(KD+1)}
   =
   \frac{\underline R_{K,D}}2
   \le
   \frac{\underline R_{K,\Lambda(L)}}2,
\]
where the middle inequality follows from
$a=4(D+1)(KD+1)$, $D+1\ge1$, and $\gamma<1$.
For each $L\in\Omega_{\le D}$, either $\Lambda(L)=E(L)$ or
$\Lambda(L)=E'(L)$.  The bound above shows that $\Delta=1/n$ satisfies the
smallness condition in \Cref{prop:local-profile-ratio-change} for the
corresponding base energy in either case.  Hence the one-voter replacement
implication applies in the first case with $\sigma$ as the base profile, while
the symmetric implication applies in the second case with $\tau$ as the base
profile.  Since $A_K(\Lambda(L))\le a$, both cases give
\[
   |E(L)-E'(L)|
   \le
   \frac{a}{n}.
\]
Set $T\coloneqq K\log m+u+1$ and $\theta\coloneqq\eta_u a/n$.  Since $\eta_u=(8/\gamma)T$, the lower bound $n\ge8a/\gamma$ gives
\[
   \theta
   =
   \frac{8a}{\gamma}\frac{T}{n}
   \le
   T.
\]
The tail parameter $\kappa$ in \Cref{lem:gibbs-privacy-comparison} satisfies
\[
   \kappa
   =
   |\Lists_K|\e^{-\eta_u(D-B)}
   =
   m^K\e^{-\eta_u\gamma/2}
   =
   m^K\e^{-4T}.
\]
Therefore, we have
\[
   (1+\e^\theta)\kappa
   \le
   (1+\e^T)m^K\e^{-4T}
   =
   (\e^{-4T}+\e^{-3T})\e^{K\log m}
   =
   \e^{-3K\log m-4u-4}
   +
   \e^{-2K\log m-3u-3}
   \le
   \e^{-u},
\]
where the last inequality follows from $K\log m\ge0$ and $\e^{-4u-4}+\e^{-3u-3}\le2\e^{-3u-3}\le\e^{-u}$ for $u\ge1$.  Applying \Cref{lem:gibbs-privacy-comparison} to the list distributions gives, for every event $S\subseteq\Lists_K$, the bound
\[
   \Pi(S)
   \le
   \e^{2\eta_u a/n}\Pi'(S)+\e^{-u}.
\]
The same inequality holds with the two profiles interchanged.  Since $a=A_K(1-\gamma/2)$ and $\eta_u=(8/\gamma)(K\log m+u+1)$, the multiplicative privacy parameter is
\[
   2\theta
    =
    2\frac{\eta_u a}{n}
    =
   \frac{16}{\gamma}A_K\!\left(1-\frac{\gamma}{2}\right)
   \frac{K\log m+u+1}{n}.
\]
Drawing the winner uniformly from the sampled list is the randomized post-processing given by the Markov kernel $L\mapsto p_L$, so the same differential-privacy guarantee holds for the released winner \citep[Proposition~2.1]{DworkRoth2014}.

For the final instantiation, fix $\delta\in(0,1)$ and let $u_\delta\coloneqq\max\{1,\log(1/\delta)\}$.  \Cref{prop:br-uniform-list} supplies the hypothesis of \Cref{thm:gibbs-over-lists} with the absolute constants $K=K_{\mathsf{BR}}$ and $\gamma=\varepsilon_{\mathsf{BR}}/2$.  Define
\[
   C
   \coloneqq
   \frac{32(K_{\mathsf{BR}}+2)}{\varepsilon_{\mathsf{BR}}}
   A_{K_{\mathsf{BR}}}\!\left(1-\frac{\varepsilon_{\mathsf{BR}}}{4}\right).
\]
For $K=K_{\mathsf{BR}}$ and $\gamma=\varepsilon_{\mathsf{BR}}/2$, the lower-bound condition in the privacy statement is $n\ge (16/\varepsilon_{\mathsf{BR}})A_{K_{\mathsf{BR}}}(1-\varepsilon_{\mathsf{BR}}/4)$, which follows from $n\ge C$.  The choice of $u_\delta$ gives $\e^{-u_\delta}\le\delta$: if $\delta\ge\e^{-1}$, then $u_\delta=1$ holds, and otherwise $u_\delta=\log(1/\delta)$ holds.  Moreover, we have
\[
   K_{\mathsf{BR}}\log m+u_\delta+1
   \le
   (K_{\mathsf{BR}}+2)(\log m+\log(1/\delta)+1),
\]
hence
\[
   \varepsilon_{\mathsf{DP}}
   \le
   C\,\frac{\log m+\log(1/\delta)+1}{n}.
\]
The bound $\Dist(Q_\delta)\le 3-\varepsilon$ follows from the distortion bound proved above.
\end{proof}
\section{Finite LP Characterization of the Gibbs Energies}\label[appendix]{sec:computability}
The Gibbs rule over all lists in \cref{sec:gibbs} is not meant to be computationally efficient for large $K$. Still, for fixed~$K$, the energies are finite-state quantities characterized by the following linear programs.  The worst-case distortion value $\rho_\sigma(L)$ of a fixed list can be computed by linear programming, and hence the energy $\lambda_\sigma(L)$ can also be computed through the identity in \Cref{prop:biased-ratio}.  Related LP formulations for randomized social choice distortion have also been used computationally by \citet{FrankLederer2025}.
\begin{proposition}[Finite LP Characterization]\label{prop:lp-computation}
For a fixed profile $\sigma$ and list lottery $p_L$, the value $\rho_\sigma(L)$ can be computed by finitely many linear programs.  In particular, for fixed $K$, all Gibbs energies $\lambda_\sigma(L)$ can be computed by solving $O(m^{K+1})$ polynomial-size LPs or feasibility problems, each with $O((m+n)^2)$ variables and $O((m+n)^3)$ triangle constraints, plus ordinal and normalization constraints.  Once these energies are known, the Gibbs weights can be approximated to arbitrary numerical precision.
\end{proposition}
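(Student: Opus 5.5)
Since $\rho_\sigma(L)=\rho_\sigma(p_L)$ depends only on the lottery $p_L$, I would prove the proposition by reducing the computation of $\rho_\sigma(p)$ for a fixed lottery $p$ to a family of linear programs indexed by a guessed optimal candidate. The outer supremum in $\rho_\sigma(p)$ runs over consistent pseudometrics $d\consistent\sigma$. I would parametrize $d$ by its values on all unordered pairs of points in $\Voters\cup\Cands$, giving $O((m+n)^2)$ variables. The constraints are:
\begin{itemize}
\item nonnegativity;
\item the triangle inequalities on all triples, $O((m+n)^3)$ constraints;
\item the ordinal consistency constraints $d(v,a)\le d(v,b)$ whenever $a\succ_v b$, which are linear.
\end{itemize}
Because both $\SC_{d,\sigma}(p)$ and $\OPT_{d,\sigma}$ are positively homogeneous in $d$, the ratio is scale invariant.

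The key step is to remove the minimum defining $\OPT_{d,\sigma}$, since it sits in the denominator. For each candidate $o\in\Cands$, I would add the linear constraints $\SC_{d,\sigma}(o)\le\SC_{d,\sigma}(c)$ for all $c$, together with the normalization $\SC_{d,\sigma}(o)=1$. Then I would maximize the linear objective $\SC_{d,\sigma}(p)=\sum_c p(c)\SC_{d,\sigma}(c)$. Taking the maximum of these values over $o$ gives $\rho_\sigma(p)$ on instances with $\OPT_{d,\sigma}>0$, by homogeneity.

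The zero-optimum instances must be handled separately because of the extended-ratio convention. For each $o$, I would solve one feasibility LP with constraints $\SC_{d,\sigma}(o)=0$ and $\SC_{d,\sigma}(p)\ge1$. This system is homogeneous except for the normalization, so it is infeasible exactly when no consistent metric has zero optimum and positive lottery cost. If any such LP is feasible, then $\rho_\sigma(p)=+\infty$. An LP in the first family can be unbounded only if such a degenerate direction exists, and I would check that this is detected by the feasibility LPs.

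Counting gives at most $2m$ LPs per lottery, and $m^K$ lists, for $O(m^{K+1})$ problems in total. Then $\lambda_\sigma(L)=(\rho_\sigma(L)-1)/2$ by \Cref{prop:biased-ratio} when $m\ge2$, with the convention for $m=1$. The Gibbs weights are exponentials of these finite or infinite energies and can be approximated numerically.

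The main obstacle is the boundary and degeneracy analysis. One issue is whether the supremum is attained; it is, because the feasible region after normalization is a polyhedron and the LP value equals the supremum or is $+\infty$. The other issue is making sure that restricting to pseudometrics on the labeled points, rather than embeddings into general metric spaces, loses nothing. That holds because any pseudometric on a finite set is realized by its own quotient metric space.
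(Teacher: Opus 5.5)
Your proposal is correct and follows the paper's proof essentially step for step. Both use one LP per guessed optimum $o$ that maximizes $\SC_{d,\sigma}(p_L)$ under the pseudometric, ordinal, and normalization constraints; your $\SC_{d,\sigma}(o)=1\le\SC_{d,\sigma}(c)$ is equivalent to the paper's $\SC_{d,\sigma}(o)=1$ together with $\SC_{d,\sigma}(c)\ge1$. Both add one zero-optimum feasibility problem per $o$ with $\SC_{d,\sigma}(o)=0$ and $\SC_{d,\sigma}(p_L)\ge1$, giving $2m$ problems per list and $O(m^{K+1})$ in total.

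The point you left to check does hold. A recession direction $r$ certifying unboundedness of $\mathrm{LP}_o$ is itself a consistent pseudometric with $\SC_{r,\sigma}(o)=0<\SC_{r,\sigma}(p_L)$, so after scaling it is a feasible point of the zero-optimum problem. Your version therefore slightly streamlines the paper's, which separately declares $\rho_\sigma(L)=+\infty$ whenever some feasible $\mathrm{LP}_o$ is unbounded.
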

\begin{proof}
For each candidate $o\in\Cands$, consider variables $d_{xy}$ for $x,y\in\Voters\cup\Cands$ and the linear program $\mathrm{LP}_o$:
\[
\begin{array}{ll}
\text{maximize} & \displaystyle \sum_{c\in\Cands}p_L(c)\frac1n\sum_{v\in\Voters}d_{vc} \\[2mm]
\text{subject to} & d_{xx}=0,\quad d_{xy}=d_{yx}\ge0, \\
& d_{xz}\le d_{xy}+d_{yz}\qquad (x,y,z\in\Voters\cup\Cands),\\
& d_{va}\le d_{vb}\qquad (v\in\Voters,\ a\succ_v b),\\[1mm]
& \displaystyle \frac1n\sum_{v\in\Voters}d_{vo}=1,\\[2mm]
& \displaystyle \frac1n\sum_{v\in\Voters}d_{vc}\ge1\qquad(c\in\Cands).
\end{array}
\]
Every consistent pseudometric $d\consistent\sigma$ with positive optimum can be scaled so that an optimal candidate $o$ has social cost $1$, and then it is feasible for $\mathrm{LP}_o$.  Conversely, every feasible solution is a pseudometric $d\consistent\sigma$ in which $o$ is an optimum of cost $1$, and the objective is exactly the cost ratio of $p_L$.  If $\mathrm{LP}_o$ is infeasible, then no positive-optimum instance with $o$ optimal contributes to the maximum, and this candidate $o$ is ignored.  If any feasible $\mathrm{LP}_o$ is unbounded, then $\rho_\sigma(L)=+\infty$: this is the positive-optimum way in which a list can have infinite distortion, for example when a unanimous profile's non-top candidate can be moved arbitrarily far while the optimum remains normalized to cost $1$.  If all feasible LPs are bounded, their maximum captures all finite positive-optimum ratios.

It remains to detect zero-optimum infinite ratios.  For each $o$, solve the feasibility problem $\mathrm{ZLP}_o$:
\[
\begin{array}{ll}
\text{find} & d \\[1mm]
\text{subject to} & d_{xx}=0,\quad d_{xy}=d_{yx}\ge0, \\
& d_{xz}\le d_{xy}+d_{yz}\qquad (x,y,z\in\Voters\cup\Cands),\\
& d_{va}\le d_{vb}\qquad (v\in\Voters,\ a\succ_v b),\\[1mm]
& \displaystyle \frac1n\sum_{v\in\Voters}d_{vo}=0,\\[2mm]
& \displaystyle \sum_{c\in\Cands}p_L(c)\frac1n\sum_{v\in\Voters}d_{vc}\ge1.
\end{array}
\]
If this feasibility problem is feasible, then an instance with zero optimum and positive lottery cost exists, so $\rho_\sigma(L)=+\infty$ under the zero-denominator convention.  Equivalently, scaling the last inequality shows that the unnormalized objective can be made arbitrarily large while the optimum remains zero.  If $\mathrm{ZLP}_o$ is infeasible, then this candidate $o$ yields no zero-optimum obstruction.  If no $\mathrm{ZLP}_o$ is feasible, zero-optimum instances do not contribute an infinite obstruction, and the maximum of the bounded feasible $\mathrm{LP}_o$ values is $\rho_\sigma(L)$.  For each ordered list there are $m$ programs $\mathrm{LP}_o$ and $m$ programs $\mathrm{ZLP}_o$; since $|\Lists_K|\le m^K$, computing all Gibbs energies requires $O(m^{K+1})$ such polynomial-size LPs or feasibility problems for fixed $K$.  The normalized Gibbs weights are then obtained from these energies by exponentiation and normalization, and can be approximated to any prescribed numerical precision.
\end{proof}
Note that all guarantees in the main theorems concern the exact real-valued Gibbs distribution.  The finite LP characterization shows that the energies and weights can be approximated numerically for fixed $K$, but it does not by itself give a finite-precision sampler with the same distortion or privacy parameters.  Such an implementation would require a separate approximation analysis: the approximation error would have to be charged explicitly to the distortion, sensitivity, and privacy bounds, including control of any probability assigned to lists with infinite energy.  Together with the conservative bound $K_{\mathsf{BR}}\ge10^9K_0$ from \Cref{app:br-consequence}, this makes direct enumeration of $\Lists_{K_{\mathsf{BR}}}$ computationally impractical; the construction is informational rather than algorithmic.
\section{Baseline Results}\label[appendix]{app:baselines}
This appendix proves the auxiliary baseline results used to motivate the main theorem.

For fixed $m$ and $n$, we define the size-fixed distortion:
\[
   \Dist_{m,n}(F)
   =
   \sup_{\substack{\mathcal E=(\Voters,\Cands,\sigma),\ |\Cands|=m,\ |\Voters|=n,\ d\consistent\mathcal E}}
   \frac{\SC_{d,\sigma}(F(\sigma))}{\OPT_{d,\sigma}}.
\]
Here the supremum ranges over all election instances of the given size and all consistent pseudometrics $d$.
Thus statements such as Random Dictatorship having distortion $3-2/n$ are understood as size-fixed bounds, not as global scalar bounds.

For the baseline comparisons and lower bounds, we also use size-fixed average sensitivity~\citep{VarmaYoshida}:
\[
   \AS_{m,n}(F)
   =\!\!\!\!\!\!\!
   \sup_{\substack{\mathcal E=(\Voters,\Cands,\sigma),\ |\Cands|=m,\\  |\Voters|=n,\ d\consistent\mathcal E,\ \OPT_{d,\sigma}>0}}
   \!\!\!
   \frac1n\sum_{v\in\Voters}
   \frac{\W_d(F(\sigma),F(\sigma^{-v}))}{\OPT_{d,\sigma}}.
\]
This is the weaker average-deletion variant of worst-case sensitivity defined in \cref{subsec:sensitivity}.
\subsection{Random Dictatorship}
Let $t_v$ be voter $v$'s top candidate.  Random Dictatorship outputs
\[
   \RD(\sigma)=\frac1n\sum_{v\in\Voters}\delta_{t_v}.
\]
\begin{proposition}[Random Dictatorship]\label{prop:rd}
For every consistent pseudometric instance with $n\ge2$, Random Dictatorship has the distortion bound
\[
   \rho_\sigma(\RD(\sigma))\le3-
   \frac2n.
\]
Moreover, its normalized average deletion distance satisfies
\[
   \frac1n\sum_{i\in\Voters}
   \frac{\W_d(\RD(\sigma),\RD(\sigma^{-i}))}{\OPT_d}
   \le
   \frac4n.
\]
Consequently, it satisfies
\[
   \Dist_{m,n}(\RD)\le3-\frac2n,
   \qquad
   \AS_{m,n}(\RD)\le\frac4n.
\]
However, for every $n\ge2$, its worst-case sensitivity satisfies $\Sens_{2,n}(\RD)\ge1$.
Under replacement adjacency, Random Dictatorship is $(0,1/n)$-differentially private.
\end{proposition}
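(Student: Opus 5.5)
We prove the four claims separately; each uses only the definition of $\RD$ and elementary metric facts.

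\emph{Distortion.} Fix $d\consistent\sigma$ with optimal candidate $o$, and write $t_v$ for $v$'s top choice. Consistency gives $d(v,t_v)\le d(v,o)$. For each pair $v,w$ the triangle inequality yields
\[
d(w,t_v)\le d(w,o)+d(o,v)+d(v,t_v)\le d(w,o)+2d(v,o).
\]
We need the sharper bound for $w=v$, namely $d(v,t_v)\le d(v,o)$. Averaging $\SC(\RD)=\frac1{n^2}\sum_{v,w}d(w,t_v)$ over all pairs, the $n(n-1)$ off-diagonal terms contribute at most $(n-1)\OPT+2(n-1)\OPT$ after dividing by $n^2$ and using $\frac1n\sum_v d(v,o)=\OPT$. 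The diagonal terms contribute at most $\frac1n\OPT$. Summing gives $\bigl(\tfrac{3(n-1)+1}{n}\bigr)\OPT=(3-\tfrac2n)\OPT$.

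\emph{Average sensitivity.} Deleting voter $i$ changes $\RD$ from $\frac1n\sum_v\delta_{t_v}$ to $\frac1{n-1}\sum_{v\ne i}\delta_{t_v}$, and the two lotteries share the common mass $\frac1n\delta_{t_v}$ for each $v\neq i$. The unmatched mass is therefore $\frac1n$ at $t_i$ on the first side and $\frac1{n(n-1)}$ at each $t_v$, $v\ne i$, on the second. Routing it through $o$ bounds $\W_d$ by
\[
\frac1n d(t_i,o)+\frac1{n(n-1)}\sum_{v\ne i}d(t_v,o).
\]
Using $d(t_v,o)\le d(t_v,v)+d(v,o)\le 2d(v,o)$ and averaging over $i$, each part contributes at most $\frac2n\OPT$. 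Hence the average normalized distance is at most $4/n$, and the stated $\Dist_{m,n}$ and $\AS_{m,n}$ bounds follow by taking suprema.

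\emph{Worst-case lower bound.} Take two candidates $a,b$ at distance $1$, one voter at $a$ preferring $a$, and $n-1$ voters co-located with $b$ preferring $b$. Here $\OPT=1/n$ (candidate $b$), and $\RD(\sigma)$ places mass $1/n$ on $a$. Deleting the $a$-voter makes $\RD(\sigma^{-v})=\delta_b$, so $\W_d=1/n$ and the normalized ratio equals $1$. This is the step needing care. Any configuration with the deleted voter located exactly at $b$'s position would instead give zero, and $\OPT>0$ must hold, which this instance does satisfy. It also works for $n=2$, giving $\Sens_{2,n}\ge1$.

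\emph{Privacy.} If neighbors $\sigma,\tau$ differ only in voter $i$, then $\RD(\sigma)$ and $\RD(\tau)$ differ only in the $\frac1n$-mass of one dictator. Their total variation distance is therefore at most $1/n$, so $\Prob[Y_\sigma\in S]\le\Prob[Y_\tau\in S]+1/n$ for every $S$, which is $(0,1/n)$-differential privacy.
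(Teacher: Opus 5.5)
Your proof is correct, and its sensitivity, lower-bound, and privacy parts are the paper's argument. The coupling is the same: match the common $\frac1n\delta_{t_v}$ masses and spread the leftover $\frac1n$ at $t_i$ over $\RD(\sigma^{-i})$. Your triangle inequality through $o$ followed by $d(t_v,o)\le 2d(v,o)$ is the same estimate as the paper's $d(t_i,t_j)\le 2d(i,o)+2d(j,o)$. The two-candidate instance is the paper's with $a$ and $b$ swapped, and the total-variation observation is identical.

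The one genuine difference is the distortion bound. The paper cites Anshelevich--Postl (their $2+\alpha-2s/n$ with $\alpha=1$ and $s\ge1$). You instead reprove it from the pairwise decomposition $\SC(\RD)=\frac1{n^2}\sum_{v,w}d(w,t_v)$: you use the three-step triangle bound off the diagonal and $d(v,t_v)\le d(v,o)$ on the diagonal. This makes the proposition self-contained. It also covers $\OPT=0$ automatically, since you obtain $\SC(\RD)\le(3-\tfrac2n)\OPT$ rather than a ratio. The citation, in exchange, carries the finer dependence on the smallest positive top-choice count $s$.

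One small arithmetic slip: after dividing by $n^2$, the off-diagonal terms contribute $\frac{3(n-1)}{n}\OPT$, not $(n-1)\OPT+2(n-1)\OPT$. Your final total $\frac{3(n-1)+1}{n}\OPT$ is nonetheless correct.
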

\begin{proof}
The distortion bound is the Random Dictatorship guarantee of \citet[Theorem~5]{AnshelevichPostl2017}.  In their notation, unrestricted metric preferences correspond to $\alpha=1$, and their bound $2+\alpha-2s/n$, where $s$ is the smallest positive top-choice count, gives $3-2/n$ because $s\ge1$.

For the average-sensitivity bound, let $o$ be an optimal candidate.  After deleting voter $i$, we have
\[
   \RD(\sigma)=\frac1n\delta_{t_i}+\frac{n-1}{n}\RD(\sigma^{-i}).
\]
Couple the common $(n-1)/n$ part identically, and send the remaining $1/n$ mass at $t_i$ to $\RD(\sigma^{-i})$.  This feasible transport plan gives
\[
   \W_d(\RD(\sigma),\RD(\sigma^{-i}))
   \le
   \frac1n\cdot\frac1{n-1}\sum_{j\ne i}d(t_i,t_j).
\]
By the triangle inequality and top-choice property, we have
\[
   d(t_i,t_j)
   \le
   d(t_i,i)+d(i,o)+d(o,j)+d(j,t_j)
   \le
   2d(i,o)+2d(j,o).
\]
Averaging over $i$ gives
\[
   \frac1n\sum_i\W_d(\RD(\sigma),\RD(\sigma^{-i}))
   \le
   \frac{1}{n^2(n-1)}
   \sum_i\sum_{j\ne i}
   \bigl(2d(i,o)+2d(j,o)\bigr)
   =
   \frac4n\OPT_d.
\]
Dividing by $\OPT_d$ proves the average-sensitivity bound.

We next show that Random Dictatorship has $\Omega(1)$ worst-case sensitivity.  Let $a$ and $b$ be two candidates, set $d(a,b)=1$, place $n-1$ voters at $a$ and one voter at $b$, and let each voter rank the co-located candidate first.
Let $v_b$ be the voter located at $b$. Then, we have
\[
   \RD(\sigma)=\frac{n-1}{n}\delta_a+\frac1n\delta_b,
   \qquad
   \RD(\sigma^{-v_b})=\delta_a,
\]
hence $\W_d(\RD(\sigma),\RD(\sigma^{-v_b}))=1/n$ and $\OPT_{d,\sigma}=1/n$.  The normalized Wasserstein distance under deletion of $v_b$ is $1$, so $\Sens_{2,n}(\RD)\ge1$.

It remains to prove the privacy statement.  If $\sigma$ and $\tau$ differ in one voter's ranking, the two Random Dictatorship lotteries differ only in that voter's top choice.  Hence, for every event $S\subseteq\Cands$, the output probabilities satisfy
\[
   \sum_{c\in S}\RD(\sigma)(c)
   \le
   \sum_{c\in S}\RD(\tau)(c)+
   \frac1n.
\]
This is exactly $(0,1/n)$-differential privacy for the released winner.
\end{proof}
\subsection{Kempe's Randomized Rule}
Let $N_c=\abs*{\Set*{v}{t_v=c}}$ and $S=\sum_cN_c^2$.  Define the squared-plurality lottery
\[
   \SQ_c(\sigma)=\frac{N_c^2}{S}.
\]
Kempe's randomized rule \citep{Kempe2020}, which depends only on first-choice counts, is
\[
   F_{\mathsf{Kempe}}(\sigma)=\left(1-
   \frac{1}{m-1}\right)\RD(\sigma)
   +
   \frac{1}{m-1}\SQ(\sigma).
\]
\begin{proposition}[Kempe's Randomized Rule]\label{prop:kempe}
For $m,n\ge2$, Kempe's randomized rule satisfies
\[
   \Dist_{m,n}(F_{\mathsf{Kempe}})
   \le
   3-
   \frac2m,
   \qquad
   \AS_{m,n}(F_{\mathsf{Kempe}})
   \le
   \frac{28}{n}.
\]
However, for every $n\ge2$, its worst-case sensitivity for three-candidate profiles satisfies
\[
   \Sens_{3,n}(F_{\mathsf{Kempe}})
   \ge
   \frac12+\frac{n}{2((n-1)^2+1)}
   =\Omega(1).
\]
Under replacement adjacency, the rule is $(0,\min\{1,9/n\})$-differentially private.
\end{proposition}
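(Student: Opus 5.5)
The proposition has four parts: distortion, average sensitivity, a worst-case sensitivity lower bound, and privacy. I will prove them in that order.

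\textbf{Distortion.} This is the guarantee of \citet{Kempe2020}. I will cite it directly, in the same way \Cref{prop:rd} cites \citet{AnshelevichPostl2017}, after checking that the mixing weight $1/(m-1)$ matches the convention in that paper.

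\textbf{Average sensitivity.} The Wasserstein distance is convex under simultaneous mixing with common weights. So I will bound the RD and SQ components separately and then mix:
\[
\W_d(F(\sigma),F(\sigma^{-i}))\le(1-\tfrac1{m-1})\W_d(\RD(\sigma),\RD(\sigma^{-i}))+\tfrac1{m-1}\W_d(\SQ(\sigma),\SQ(\sigma^{-i})).
\]
The RD part gives $4/n$ on average by \Cref{prop:rd}.

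For SQ, route all moved mass through an optimal candidate $o$. By the argument of \Cref{lem:transport-through-opt}, each unit of mass moved from $c$ to $c'$ costs at most $d(c,o)+d(c',o)$. Deleting voter $i$ with top $c=t_i$ changes only the count $N_c$, which drops by one. So the total-variation change of $\SQ$ is $O(N_c/S)$, because both $N_c^2$ and $S$ change by $2N_c-1$.

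The mass that moves sits on the top choices. I will bound each $d(c',o)$ by $2d(j,o)$ for any voter $j$ with $t_j=c'$, using the top-choice triangle inequality $d(t_j,o)\le d(t_j,j)+d(j,o)\le 2d(j,o)$.

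The key estimate is
\[
\frac1n\sum_i\frac{N_{t_i}}{S}\,d(t_i,o)\le\frac{1}{n}\sum_c\frac{N_c^2}{S}d(c,o),
\]
and the right side is $\SC(\SQ)$ divided by $n$, up to the top-choice factor. Making this rigorous is the main obstacle. Rewriting the sum over $i$ as a sum over $c$ gives exactly the right side, but then $d(c,o)$ must be charged to the voters of $c$. That yields
\[
\sum_c\frac{N_c^2}{S}d(c,o)\le\sum_c\frac{N_c^2}{S}\cdot\frac{2}{N_c}\sum_{j:t_j=c}d(j,o)\le 2\sum_j d(j,o)=2n\OPT,
\]
where the last step uses $N_c/S\le 1$.

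The mass moved on the $\sigma^{-i}$ side is bounded the same way, using $S'\ge S-2N_c+1$. This is where the constant deteriorates, and it should land at roughly $28/n$ overall. The corner case $S'=0$ cannot occur because $n\ge2$. Tracking the constants to reach exactly $28$ is the tedious part, and I would allow slack there.

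\textbf{Worst-case lower bound.} I will construct an explicit three-candidate instance. Put $n-1$ voters at $a$, with $a$ on top, and one voter $v$ with top $b$. Place $b$ at distance $1$ from $a$, and place a third candidate at distance $1$ from both, unused by the voters.

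Then $\OPT=1/n$. The SQ weight on $b$ is $1/((n-1)^2+1)$, and the RD weight on $b$ is $1/n$. After deleting $v$, both distributions become $\delta_a$. With $m=3$, the mixing weights are $1/2$ and $1/2$, so
\[
\W=\tfrac12\cdot\tfrac1n+\tfrac12\cdot\tfrac{1}{(n-1)^2+1}.
\]
Dividing by $\OPT=1/n$ gives exactly $\tfrac12+\tfrac{n}{2((n-1)^2+1)}$.

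\textbf{Privacy.} Under replacement, the RD component moves by at most $1/n$ in total variation. For SQ, two counts change by one each. I will bound $\sum_c|N_c^2/S-N_c'^2/S'|$ by $O(\max_c N_c/S)$. Then I use $\max_c N_c/S\le 1/\max_c N_c\le m/n$, which is multiplied by the weight $1/(m-1)$. This gives a total-variation distance $\delta\le 9/n$ after careful constants, and $\delta\le 1$ holds trivially.
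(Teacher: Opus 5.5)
Your distortion citation, your three-candidate instance, and your privacy argument are correct and essentially match the paper. In the instance, you place the unused third candidate at distance $1$ from both $a$ and $b$, whereas the paper co-locates it with $a$; both choices work. In the privacy part, you correctly use $\max_c N_c/S\le m/n$ together with the mixing weight $1/(m-1)$.

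The average-sensitivity step has a genuine gap, and it is not a matter of constants. Using $N_c/S\le1$, your chain gives $\sum_c\frac{N_c^2}{S}d(c,o)\le 2n\OPT_d$. After the prefactor $\frac1n$, the SQ term is then only bounded by $2\OPT_d$, which gives normalized average sensitivity of order $1$, not $1/n$. The same factor-$n$ loss recurs on the $\sigma^{-i}$ side.

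You also cannot fix this with an $m$-free bound on $\E_{c\sim\SQ(\sigma)}[d(c,o)]/\OPT_d$, because $\SQ$ on its own has distortion growing like $\sqrt m$. For example, put $k$ voters at the same point as candidate $c$, and put the other $n-k$ voters at a second point at distance $1$, split evenly by top choice among the other $m-1$ candidates located there. With $k^2=(n-k)^2/(m-1)$ (up to rounding), $\SQ$ puts mass $1/2$ on $c$, while $\OPT_d=1/(1+\sqrt{m-1})$.

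The missing ingredient is the one you already used for privacy. By Cauchy--Schwarz, $S\ge n^2/m$, so $N_c^2/S\le mN_c/n$ and
\[
   \E_{c\sim\SQ(\sigma)}[d(c,o)]\le\frac mn\sum_v d(t_v,o)\le 2m\OPT_d .
\]
Similarly, $S'\ge(n-1)^2/m$ (not $S'\ge S-2N_c+1$) gives $\E_{c\sim\SQ(\sigma^{-i})}[d(c,o)]\le 2m\frac{n}{n-1}\OPT_d$. Since $\frac1n\sum_i r_{t_i}/S\le 2/n$, the SQ part averages to at most $\frac{4m}{n}+\frac{4m}{n-1}\le\frac{12m}{n}$. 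Only the mixing weight $\frac1{m-1}\le\frac2m$ turns this into $\frac{24}{n}$. Adding $\frac4n$ from the RD component gives the stated $\frac{28}{n}$.
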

\begin{proof}
The distortion bound follows from the guarantee of \citet[Theorem~6.1]{Kempe2020} for this rule; their notation uses $n$ for the number of candidates, which is $m$ here.  It remains to prove the average-sensitivity upper bound, the worst-case sensitivity lower bound, and the differential privacy bound.

Fix a voter $i$ with top candidate $a=t_i$.  Let $N'_c$ and $S'$ denote the counts and squared-count sum after deleting $i$.  Then $N'_a=N_a-1$, all other counts are unchanged, and
\[
   r_a\coloneqq S-S'=2N_a-1.
\]
The squared-plurality lotteries satisfy
\[
   \SQ(\sigma)=\frac{r_a}{S}\delta_a+\frac{S'}{S}\SQ(\sigma^{-i}).
\]
Coupling the common $S'/S$ part identically and sending the remaining $r_a/S$ mass at $a$ to $\SQ(\sigma^{-i})$ gives
\[
   \W_d(\SQ(\sigma),\SQ(\sigma^{-i}))
   \le
   \frac{r_a}{S}\E_{c\sim\SQ(\sigma^{-i})}[d(a,c)].
\]
Let $o$ be an optimal candidate for the original profile.  By the triangle inequality, for each fixed deleted voter $i$ and each candidate $c$, we have
\[
   d(t_i,c)\le d(t_i,o)+d(c,o).
\]
Averaging the resulting fixed-voter bound gives
\[
   \frac1n\sum_i
   \W_d(\SQ(\sigma),\SQ(\sigma^{-i}))
   \le A_1+A_2,
\]
where
\[
   A_1
   \coloneqq
   \frac1n\sum_i\frac{r_{t_i}}{S}d(t_i,o),
   A_2
   \coloneqq
   \frac1n\sum_i\frac{r_{t_i}}{S}
   \E_{c\sim\SQ(\sigma^{-i})}[d(c,o)].
\]
For $A_1$, since $r_{t_i}\le2N_{t_i}$, we have
\[
   A_1
   \le
   \frac{2}{nS}\sum_cN_c^2d(c,o)
   =
   \frac2n\E_{c\sim\SQ(\sigma)}[d(c,o)]
   \le
   \frac{2m}{n}\E_{v\sim\mathrm{Unif}(\Voters)}[d(t_v,o)]
   \le
   \frac{4m}{n}\OPT_d.
\]
Here $S\ge n^2/m$ gives $N_c^2/S\le mN_c/n$, and the last inequality follows from $d(t_v,o)\le d(t_v,v)+d(v,o)\le2d(v,o)$ for every voter $v$, using the triangle inequality and the fact that $t_v$ is voter $v$'s top choice.

For $A_2$, consider each $i$ in the sum separately.  Since $S'\ge (n-1)^2/m$, we have $(N'_c)^2/S'\le mN'_c/(n-1)$ for every candidate $c$, and hence
\[
   \E_{c\sim\SQ(\sigma^{-i})}[d(c,o)]
   \le
   \frac{m}{n-1}\sum_{v\ne i}d(t_v,o)
   \le
   2m\SC_{d,\sigma^{-i}}(o)
   \le
   2m\frac{n}{n-1}\OPT_d,
\]
where the middle inequality uses the same triangle-inequality bound $d(t_v,o)\le2d(v,o)$ for the remaining voters $v\ne i$.
Also, we have
\[
   \frac1n\sum_i\frac{r_{t_i}}{S}
   \le
   \frac{2}{nS}\sum_cN_c^2
   =
   \frac2n.
\]
Hence $A_2\le 4mn/((n-1)n)\OPT_d=4m\OPT_d/(n-1)$.  Therefore the squared-plurality lottery satisfies
\[
   \frac1n\sum_i
   \frac{\W_d(\SQ(\sigma),\SQ(\sigma^{-i}))}{\OPT_d}
   \le
   \frac{4m}{n}+\frac{4m}{n-1}
   \le
   \frac{12m}{n}.
\]
By convexity of Wasserstein distance and \Cref{prop:rd}, we obtain
\[
   \AS_{m,n}(F_{\mathsf{Kempe}})
   \le
   \frac4n+\frac{1}{m-1}\frac{12m}{n}
   \le
   \frac{28}{n},
\]
where we used $m/(m-1)\le2$ for $m\ge2$.

We now give the worst-case sensitivity lower bound.  Let $m=3$ with candidates $a,b,c$, put $c$ at the same location as $a$, and set $d(a,b)=1$.  Let $n-1$ voters have top choice $a$, and let one voter $v_b$ have top choice $b$, with rankings chosen consistently with these locations.  Before deleting $v_b$, the first-choice counts are $N_a=n-1$, $N_b=1$, and $N_c=0$.  Since $m=3$, Kempe's rule gives mass
\[
   \frac{1}{2n}+\frac{1}{2((n-1)^2+1)}
\]
to candidate $b$.  After deleting $v_b$, both the Random Dictatorship and squared-plurality components are supported only at the location of $a$.  The Wasserstein movement is therefore at least the pre-deletion mass on $b$.  Since $\OPT_{d,\sigma}=1/n$, the normalized movement is at least
\[
   \frac12+\frac{n}{2((n-1)^2+1)}.
\]
This proves the $\Omega(1)$ lower bound on worst-case sensitivity.

For differential privacy, let $\sigma$ and $\tau$ differ in one voter's ranking.  The Random Dictatorship component changes by total variation at most $1/n$.  We bound the total variation change of the squared-plurality component.  Let $N_c$ and $N'_c$ be the first-choice counts in the two profiles, and put $u_c=N_c^2$, $u'_c=(N'_c)^2$, $S=\sum_c u_c$, and $S'=\sum_c u'_c$.  The two count vectors differ by moving one unit from some coordinate $a$ to another coordinate $b$, so the affected squared counts satisfy
\[
   N_a^2-(N_a-1)^2\le 2n,
   \qquad
   (N_b+1)^2-N_b^2\le 2n.
\]
All other squared counts are unchanged.  Hence
\[
   \sum_c |u_c-u'_c|\le 4n,
   \qquad
   |S-S'|\le 4n.
\]
Since $S\ge n^2/m$ holds, the $\ell_1$ distance between the squared-plurality lotteries satisfies
\[
   \sum_c\left|\frac{u_c}{S}-\frac{u'_c}{S'}\right|
   \le
   \frac{\sum_c |u_c-u'_c|}{S}
   +
   \left|\frac1S-\frac1{S'}\right|S'
   =
   \frac{\sum_c |u_c-u'_c|+|S-S'|}{S}
   \le
   \frac{8m}{n}.
\]
Thus the squared-plurality component changes by total variation at most $4m/n$.  The mixture coefficient on this component is $1/(m-1)$, and $m\ge2$, so Kempe's rule changes by total variation at most
\[
   \frac1n+\frac{1}{m-1}\frac{4m}{n}
   \le
   \frac9n.
\]
This event-wise additive bound guarantees $(0,\min\{1,9/n\})$-differential privacy for the released winner.
\end{proof}
\section{Lower Bound Proofs}\label[appendix]{app:universal-lower}
We prove the lower-bound rows of \Cref{tab:summary-results}.
For deterministic rules, finite distortion alone implies $\Omega(1)$ worst-case sensitivity, while for even $n$ the two-candidate strict-majority condition gives an $\Omega(1)$ average-sensitivity lower bound.
The universal sensitivity lower bound is proved first for average sensitivity and then transferred to worst-case sensitivity using $\AS\le\Sens$.
We also prove the differential privacy lower bounds for deterministic and randomized rules.

We first state the two-candidate unanimity consequence of finite distortion used throughout the appendix.
\begin{lemma}[Two-Candidate Unanimity from Finite Distortion]\label{lem:finite-distortion-unanimity}
Let $F$ be a possibly randomized ordinal rule with finite metric distortion.  On any two-candidate profile in which every voter ranks $a$ above $b$, the rule $F$ outputs $\delta_a$.  The same statement holds with $a$ and $b$ interchanged.
\end{lemma}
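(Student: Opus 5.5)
The plan is to construct an explicit consistent pseudometric under which $\OPT_{d,\sigma}=0$ but any lottery placing positive mass on $b$ has positive social cost. Finite distortion, under the extended-ratio convention of \Cref{sec:prelim}, then forces the lottery to put no mass on $b$.

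Fix a two-candidate profile $\sigma$ on $\Cands=\{a,b\}$ in which every voter ranks $a\succ_v b$. Define a pseudometric by placing every voter and candidate $a$ at a common point, and placing $b$ at distance $1$ from that point. Concretely, take the embedding $\phi$ into the real line with $\phi(v)=\phi(a)=0$ for all $v\in\Voters$ and $\phi(b)=1$. This embedding is allowed to be non-injective, so $d$ is a valid pseudometric on the labeled voters and candidates.

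Next I check consistency, $d\consistent\sigma$. The only preference pairs are $a\succ_v b$, and for these we have $d(v,a)=0\le1=d(v,b)$, so the ordinal constraints hold. Under this metric $\SC_{d,\sigma}(a)=0$ and $\SC_{d,\sigma}(b)=1$, hence $\OPT_{d,\sigma}=0$.

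For the lottery $F(\sigma)$ we get $\SC_{d,\sigma}(F(\sigma))=F(\sigma)(b)$. If $F(\sigma)(b)>0$, the ratio has a zero denominator and a positive numerator, so by convention it equals $+\infty$. Then $\Dist(F)=+\infty$, contradicting finite distortion. Therefore $F(\sigma)(b)=0$, that is, $F(\sigma)=\delta_a$. The statement with $a$ and $b$ interchanged follows by the symmetric construction, placing the voters at $b$.

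There is essentially no obstacle here. The only point to check is that the extended-ratio convention in the definition of distortion is what excludes zero-optimum instances from being ignored; the paper adopts exactly that convention, so the argument goes through directly.
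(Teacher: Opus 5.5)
Your proposal is correct and follows the paper's proof essentially verbatim: co-locate all voters with $a$, place $b$ at positive distance, observe $\OPT_{d,\sigma}=0$ while any mass on $b$ gives positive rule cost, and invoke the extended-ratio convention to contradict finite distortion. Your explicit embedding into the real line and the computation $\SC_{d,\sigma}(F(\sigma))=F(\sigma)(b)$ just make the paper's argument concrete.
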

\begin{proof}
Suppose every voter ranks $a$ above $b$.  Place all voters at the location of $a$ and put $b$ at positive distance from $a$.  This metric is consistent with the profile and has optimum cost zero.  If $F$ put positive probability on $b$, then the rule cost would be positive, contradicting finite distortion under the zero-denominator convention in \Cref{sec:prelim}.  Hence $F$ must output $\delta_a$.  The same proof with $a$ and $b$ interchanged gives the symmetric claim.
\end{proof}
\subsection{Deterministic Sensitivity}
We first show the following worst-case sensitivity lower bound for deterministic rules with finite distortion.
\begin{proposition}[Deterministic Worst-Case Sensitivity Lower Bound]\label{prop:det-worst-case-sensitivity-lower}
Fix $n\ge2$.  Let $F$ be a deterministic ordinal rule with finite metric distortion.  Then $\Sens_{2,n}(F)\ge1$ holds.
\end{proposition}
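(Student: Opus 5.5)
We argue on two-candidate profiles, using \Cref{lem:finite-distortion-unanimity} to pin down the deterministic output at the endpoints of a one-voter-at-a-time path. Let $\Cands=\{a,b\}$ and, for $k=0,1,\dots,n$, let $\sigma_k$ be the $n$-voter profile in which the first $k$ voters rank $b\succ a$ and the rest rank $a\succ b$. By \Cref{lem:finite-distortion-unanimity}, $F(\sigma_0)=\delta_a$ and $F(\sigma_n)=\delta_b$, so there is a first index $k\in\{1,\dots,n\}$ with $F(\sigma_{k-1})=\delta_a$ and $F(\sigma_k)\neq\delta_a$. Since $F$ is deterministic, $F(\sigma_k)=\delta_b$.

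The plan is to realize the relevant change as a single deletion. Deletions are easier to exploit than replacements: the profile $\sigma_k$ with voter $k$ (a $b$-voter) deleted has $n-1$ voters. To compare sizes, we use the symmetric view instead. From $\sigma_k$ (output $b$), delete one $b$-voter to get an $(n-1)$-voter profile $\tau$ with $k-1$ $b$-voters; from $\sigma_{k-1}$ (output $a$), delete one $a$-voter (there is one, since $k-1<n$) to get the same multiset $\tau$. By anonymity-free reasoning we cannot identify these two profiles if $F$ is not anonymous, so we arrange voter labels carefully. Take voter sets so that in $\sigma_k$ voter $k$ ranks $b$ first and in $\sigma_{k-1}$ voter $k$ ranks $a$ first; in both, deleting voter $k$ yields literally the same profile $\tau$ on $\Voters\setminus\{k\}$. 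Now $F(\tau)$ is a single candidate. If $F(\tau)=a$, we use $\sigma_k$ with deletion of voter $k$; if $F(\tau)=b$, we use $\sigma_{k-1}$ with deletion of voter $k$. In either case we get an $n$-voter profile $\sigma\in\{\sigma_{k-1},\sigma_k\}$ and a voter $v=k$ with $F(\sigma)\neq F(\sigma^{-v})$, so $\W_d(F(\sigma),F(\sigma^{-v}))=d(a,b)$ for every consistent $d$.

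It remains to choose a consistent pseudometric making $d(a,b)/\OPT_{d,\sigma}\ge1$. This is the step that needs the most care. Set $d(a,b)=1$, put each voter who ranks $a$ first at $a$ and each voter who ranks $b$ first at $b$. This is consistent with the rankings (ties are allowed in $d$, and non-injective embeddings are permitted). Then $\SC(a)=(\#b\text{-voters})/n$ and $\SC(b)=(\#a\text{-voters})/n$, so $\OPT_{d,\sigma}=\min(\#a,\#b)/n\le 1/2$. We need $\OPT>0$, so both counts must be positive; this fails only at the boundary cases $\sigma=\sigma_0$ or $\sigma=\sigma_n$, which cannot occur since $\sigma\in\{\sigma_{k-1},\sigma_k\}$ with the chosen output pattern. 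For example, if $\sigma=\sigma_k=\sigma_n$, then $F(\tau)=a$ contradicts unanimity on $\tau$ (all $b$), and similarly for $\sigma_0$.

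A better metric may be used instead to reach ratio exactly $\ge1$ even when counts are balanced. Place all voters at the midpoint-like location: $a$-voters at distance $0$ from $a$ is already optimal for small minority, but in general $\min(\#a,\#b)/n$ may be up to $1/2$, giving ratio $\ge2\ge1$. So $d(a,b)/\OPT\ge n/\min(\#a,\#b)\ge 2>1$, which yields $\Sens_{2,n}(F)\ge1$.
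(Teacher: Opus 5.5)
Your argument is correct and is essentially the paper's proof: the one-voter-at-a-time path between the two unanimous profiles, the common profile $\tau$ obtained by deleting the switching voter from both neighbors, unanimity on $\tau$ to rule out the boundary cases, and the metric placing each voter at its top candidate. The only difference is that you use determinism of $F(\tau)$ to get a deletion distance of exactly $d(a,b)$, whereas the paper uses the Wasserstein triangle inequality to get at least $1/2$; this gives you a normalized ratio of at least $2$ instead of $1$ (and the ``midpoint-like'' remark in your last paragraph is unnecessary and can be deleted).
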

\begin{proof}
Let the two candidates be $a$ and $b$, and fix the voter set $\Voters=\set{1,\ldots,n}$.  For $t=0,1,\ldots,n$, let $\sigma^t$ be the profile in which each voter $i\le t$ ranks $b\succ a$ and each voter $i>t$ ranks $a\succ b$.  By \Cref{lem:finite-distortion-unanimity}, $F(\sigma^0)=\delta_a$ and $F(\sigma^n)=\delta_b$.  Since $F$ is deterministic, there is some $t\in\set{0,\ldots,n-1}$ such that $F(\sigma^t)\ne F(\sigma^{t+1})$.

Let $\widehat\sigma^t$ be the common profile obtained from both $\sigma^t$ and $\sigma^{t+1}$ by deleting voter $t+1$.  Normalize the candidate distance by $d(a,b)=1$.  The triangle inequality for Wasserstein distance gives
\[
   1
   =
   \W_d(F(\sigma^t),F(\sigma^{t+1}))
   \le
   \W_d(F(\sigma^t),F(\widehat\sigma^t))
   +
   \W_d(F(\sigma^{t+1}),F(\widehat\sigma^t)).
\]
Thus one of the two deletion distances is at least $1/2$.  If $t=0$, then $\widehat\sigma^t$ is unanimous for $a$; therefore, \Cref{lem:finite-distortion-unanimity} gives $F(\widehat\sigma^t)=F(\sigma^0)$ and the term involving $\sigma^{t+1}$ is at least $1/2$.  If $t=n-1$, the symmetric argument gives the term involving $\sigma^t$ as the nonzero term.  Hence, in all cases, there is an index $s\in\set{1,\ldots,n-1}$, with $s\in\set{t,t+1}$, such that $(\sigma^s)^{-(t+1)}=\widehat\sigma^t$ and
\[
   \W_d(F(\sigma^s),F((\sigma^s)^{-(t+1)}))\ge\frac12 .
\]
Extend the candidate metric to a consistent metric for $\sigma^s$ by placing the voters who rank $a\succ b$ at the location of $a$ and the voters who rank $b\succ a$ at the location of $b$.  Then, we have
\[
   \OPT_{d,\sigma^s}
   =
   \frac{\min\{s,n-s\}}{n}
   \le
   \frac12,
\]
and this optimum is positive because $1\le s\le n-1$.  Therefore the normalized deletion distance at this instance is at least
\[
   \frac{\W_d(F(\sigma^s),F((\sigma^s)^{-(t+1)}))}{\OPT_{d,\sigma^s}}
   \ge
   1.
\]
Taking the supremum over instances proves the claim.
\end{proof}
We then discuss the average-sensitivity lower bound.
The following proposition shows that for even $n$, every deterministic finite-distortion rule satisfying a natural two-candidate property---selecting the candidate ranked first by a strict majority whenever such a majority exists---has $\Omega(1)$ average sensitivity.
Deterministic rules satisfying the two-candidate property include Copeland~\citep{ABEPS2018}, Plurality Matching~\citep{GHS2020}, and Plurality Veto~\citep{KizilkayaKempe2022}.
\begin{proposition}[Deterministic Strict-Majority Sensitivity Lower Bound]\label{prop:det-majority-sensitivity-lower}
Let $n=2r$ with $r\ge1$.  Let $F$ be a deterministic ordinal rule with finite metric distortion.  Suppose that, on every two-candidate profile in which one candidate is ranked first by a strict majority of voters, the rule selects that candidate.  Then $\AS_{2,n}(F)\ge1$ holds.
\end{proposition}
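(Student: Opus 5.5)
We construct a single two-candidate instance with an exact tie: $n=2r$, candidates $a,b$, voters $1,\dots,r$ rank $a\succ b$ and voters $r+1,\dots,2r$ rank $b\succ a$. Call this profile $\sigma$. Deleting any voter of the first group leaves $r-1$ voters for $a$ and $r$ for $b$ among $2r-1$ voters, so $b$ has a strict majority. The hypothesis on $F$ then forces $F(\sigma^{-i})=\delta_b$. Symmetrically, deleting any voter of the second group forces $F(\sigma^{-i})=\delta_a$. In the degenerate case $r=1$, deletion leaves one voter, who is a strict majority, so the same conclusion holds. This is also consistent with \Cref{lem:finite-distortion-unanimity}.

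Since $F$ is deterministic, $F(\sigma)$ is either $\delta_a$ or $\delta_b$; by symmetry assume $F(\sigma)=\delta_a$. Normalize $d(a,b)=1$. Then every voter $i$ in the second group satisfies $\W_d(F(\sigma),F(\sigma^{-i}))=0$, while every voter in the first group satisfies $\W_d(\delta_a,\delta_b)=1$. So exactly $r=n/2$ deletions move the output by distance $1$, and the average over voters of the unnormalized deletion distance is $1/2$.

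It remains to choose a consistent metric making $\OPT_{d,\sigma}$ at most $1/2$ and positive. We co-locate the $a\succ b$ voters with $a$ and the $b\succ a$ voters with $b$. This metric is consistent, and both candidates have social cost $r/n=1/2$, so $\OPT_{d,\sigma}=1/2>0$. The normalized average deletion distance is therefore $(1/2)/(1/2)=1$, which gives $\AS_{2,n}(F)\ge1$.

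The only delicate point is making sure the strict-majority hypothesis applies after deletion. Here the parity assumption on $n$ is essential: after removing one voter, $n-1$ is odd, so no tie remains. The case $F(\sigma)=\delta_b$ is handled by swapping the roles of the two groups. Finite distortion is not really needed beyond the guarantee that $F$ is defined on these profiles.
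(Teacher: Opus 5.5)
Your proof is correct and follows essentially the same argument as the paper. Both use the exactly tied profile. The strict-majority hypothesis forces every deletion from the group supporting $F(\sigma)$ to flip the output, which gives $r$ flips. The co-located metric then yields $\OPT_{d,\sigma}=1/2$, so the normalized average deletion distance is $\tfrac1n\cdot r/(1/2)=1$.
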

\begin{proof}
Let the candidates be $a$ and $b$.  Let $\sigma$ be a profile in which a set $S$ of $r$ voters ranks $a\succ b$ and the remaining $r$ voters rank $b\succ a$.  Since $F$ is deterministic, it selects either $a$ or $b$ on this profile.  If $F(\sigma)=\delta_a$, then deleting any voter in $S$ leaves a profile in which $b$ is ranked first by a strict majority, so the output changes to $\delta_b$.  If $F(\sigma)=\delta_b$, the symmetric argument shows that deleting any voter outside $S$ changes the output to $\delta_a$.  Thus at least $r$ voter deletions change the output.

Put the voters in $S$ at the location of $a$ and the remaining voters at the location of $b$, with $d(a,b)=1$.  This metric is consistent with $\sigma$ and satisfies
\[
   \OPT_{d,\sigma}=\frac12.
\]
Each deletion that changes the output contributes Wasserstein distance $1$.  Therefore the normalized average deletion distance at this instance is at least
\[
   \frac1n\cdot
   \frac{r}{\OPT_{d,\sigma}}
   =
   1.
\]
Taking the supremum over instances proves the claim.
\end{proof}
\subsection{Deterministic Differential Privacy}
We give the differential-privacy lower bound for deterministic rules under replacement of one voter's ranking.
\begin{proposition}[Deterministic Privacy Lower Bound]\label{prop:deterministic-privacy-lower}
Fix $n\ge1$.  Let $F$ be a deterministic ordinal rule with finite metric distortion.  If $F$ is $(\varepsilon_{\mathsf{DP}},\delta)$-differentially private under replacement adjacency on all two-candidate $n$-voter profiles, then $\delta\ge1$ holds for any $\varepsilon_{\mathsf{DP}}\ge0$.
\end{proposition}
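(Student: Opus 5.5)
The plan is a hybrid (path) argument along the chain of two-candidate profiles used in the proof of \Cref{prop:det-worst-case-sensitivity-lower}. Fix candidates $a,b$ and voters $\{1,\ldots,n\}$. For $t=0,\ldots,n$, let $\sigma^t$ be the profile in which voters $i\le t$ rank $b\succ a$ and voters $i>t$ rank $a\succ b$.

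By \Cref{lem:finite-distortion-unanimity}, finite distortion forces $F(\sigma^0)=\delta_a$ and $F(\sigma^n)=\delta_b$. Since $F$ is deterministic, each output is a point mass. So there is an index $t\in\{0,\ldots,n-1\}$ with $F(\sigma^t)=\delta_a$ and $F(\sigma^{t+1})=\delta_b$: take the first index at which the output stops being $a$. This gives $F(\sigma^t)\ne F(\sigma^{t+1})$.

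The profiles $\sigma^t$ and $\sigma^{t+1}$ differ only in the ranking of voter $t+1$, so they are replacement-neighboring. Apply the differential privacy inequality to this pair with the event $S=\{a\}$. Since the released winner is deterministic, $\Prob[Y_{\sigma^t}\in S]=1$ and $\Prob[Y_{\sigma^{t+1}}\in S]=0$. The inequality becomes $1\le \e^{\varepsilon_{\mathsf{DP}}}\cdot 0+\delta$, so $\delta\ge1$ for every $\varepsilon_{\mathsf{DP}}\ge0$.

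The case $n=1$ needs no separate treatment: the chain is then just $\sigma^0,\sigma^1$, which are neighbors, and the argument above applies unchanged. There is no real obstacle here. The only point needing care is invoking \Cref{lem:finite-distortion-unanimity} at both endpoints, and that lemma already covers both orientations.
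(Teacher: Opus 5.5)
Your proposal is correct and follows essentially the same argument as the paper: the same chain $\sigma^0,\ldots,\sigma^n$, unanimity at both endpoints via \Cref{lem:finite-distortion-unanimity}, a neighboring pair where the deterministic output changes, and the event of the earlier winner, giving $1\le\delta$. The only cosmetic difference is that you take the first switching index, which makes the event $\{a\}$ explicit. The paper allows any switching index and uses the candidate $c$ selected at $\sigma^t$.
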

\begin{proof}
Let the candidates be $a$ and $b$, and fix the voter set $\Voters=\{1,\ldots,n\}$.  For $t=0,1,\ldots,n$, let $\sigma^t$ be the profile in which each voter $i\le t$ ranks $b\succ a$, and each voter $i>t$ ranks $a\succ b$.  For each $t=0,1,\ldots,n-1$, the profiles $\sigma^t$ and $\sigma^{t+1}$ differ only in voter $t+1$, so they are replacement-neighboring.  By \Cref{lem:finite-distortion-unanimity}, $F(\sigma^0)=\delta_a$ and $F(\sigma^n)=\delta_b$.  Since $F$ is deterministic, there is some $t\in\set{0,\ldots,n-1}$ such that $F(\sigma^t)\ne F(\sigma^{t+1})$.

Let $c$ be the candidate selected at $\sigma^t$.  Since $F$ is deterministic and $F(\sigma^t)\ne F(\sigma^{t+1})$, the event $\set{c}$ satisfies
\[
   F(\sigma^t)(c)=1,
   \qquad
   F(\sigma^{t+1})(c)=0.
\]
The differential-privacy inequality for this event and for the neighboring profiles in the order $(\sigma^t,\sigma^{t+1})$ gives
\[
   1
   =
   F(\sigma^t)(c)
   \le
   \e^{\varepsilon_{\mathsf{DP}}}
   F(\sigma^{t+1})(c)
   +
   \delta
   =
   \delta.
\]
This proves the claim.
\end{proof}
\subsection{Universal Sensitivity}
We show the universal sensitivity lower bound.
The proof uses only the unanimity consequence in \Cref{lem:finite-distortion-unanimity}, so it applies to any rule satisfying the stronger distortion requirement in the main theorems.
\begin{proposition}[Universal $1/n$ Lower Bound]\label{prop:universal-one-over-n-lower}
Every randomized ordinal voting rule $F$ with finite metric distortion has average sensitivity at least $1/(2n)$ on two-candidate profiles with $n\ge2$ voters.
Consequently, every such rule also has worst-case sensitivity at least $1/(2n)$ on two-candidate profiles.
\end{proposition}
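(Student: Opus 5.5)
We follow the same chain of profiles used for the deterministic lower bounds, with candidates $a$ and $b$ and voters $\Voters=\set{1,\ldots,n}$. For $t=0,\ldots,n$, let $\sigma^t$ be the profile in which voters $i\le t$ rank $b\succ a$ and voters $i>t$ rank $a\succ b$. Write $f(t)=F(\sigma^t)(b)$. By \Cref{lem:finite-distortion-unanimity}, $f(0)=0$ and $f(n)=1$. With $d(a,b)=1$ on two candidates, the Wasserstein distance between two lotteries is the absolute difference of their masses on $b$. The profiles $\sigma^t$ and $\sigma^{t+1}$ have a common one-voter deletion $\widehat\sigma^t$, obtained by removing voter $t+1$. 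Writing $h(t)=F(\widehat\sigma^t)(b)$, the triangle inequality gives
\[
   |f(t+1)-f(t)|\le |f(t)-h(t)|+|f(t+1)-h(t)|,
\]
and telescoping gives $\sum_{t=0}^{n-1}|f(t+1)-f(t)|\ge1$.

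Next, we find one profile whose total deletion movement is large enough. A single-deletion argument will not reach $1/(2n)$, so we average. One option is to work directly with the balanced profile $\sigma^{r}$ and account for deletions of all voters. A more robust option is to use the fact that deletions of different voters of the same type give the same resulting profile, by anonymity-free bookkeeping: deleting any voter $i\le t$ from $\sigma^t$ yields a profile with $t-1$ voters ranking $b\succ a$, but only for anonymous rules. Since $F$ need not be anonymous, we stay with deletion of voter $t+1$ specifically. Summing over $t$, some $t^\ast$ has
\[
   |f(t^\ast)-h(t^\ast)|+|f(t^\ast+1)-h(t^\ast)|\ge 1/n,
\]
so for some $s\in\set{t^\ast,t^\ast+1}$ the deletion of voter $t^\ast+1$ from $\sigma^s$ moves at least $1/(2n)$ mass. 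As in \Cref{prop:det-worst-case-sensitivity-lower}, the endpoint cases $s\in\set{0,n}$ are excluded via unanimity, since deleting from a unanimous profile leaves a unanimous profile with the same output. This gives $s\in\set{1,\ldots,n-1}$.

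Then we normalize. We use the consistent metric that places type-$a$ voters at $a$ and type-$b$ voters at $b$, so $\OPT_{d,\sigma^s}=\min\{s,n-s\}/n$, which lies between $1/n$ and $1/2$. The worst-case bound then follows: normalized movement is at least $(1/(2n))/\OPT\ge 1/n$, which in fact exceeds the claim. For average sensitivity, a single voter contributes only $1/n$ of the average. That yields roughly $(1/(2n))\cdot(1/n)/\OPT$, which is too small when $\OPT$ is of order $1/n$ unless we use the clustering freedom.

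The main obstacle is therefore the average bound. The first attempt is to choose the metric to shrink $\OPT$: place all voters except voter $t^\ast+1$ at the location of the optimal candidate where consistency allows. Voters ranking $a\succ b$ go at $a$, and if the optimal side is $a$, the $b$-voters must lie at least as close to $b$. Placing each voter exactly at its top candidate gives $\OPT=\min\{s,n-s\}/n$. If that is insufficient, we instead sum deletion effects over the whole chain. Let $\mathrm{avg}_t$ denote the average normalized movement at $\sigma^t$, which is at least $\frac1n\cdot\frac{|f(t)-h(t)|}{\min\{t,n-t\}/n}$. Since $\sum_t(\ldots)\ge1$ and $\min\{t,n-t\}\le n/2$, some $t$ has $n\cdot\mathrm{avg}_t\ge \frac{1}{n}\cdot\frac{2}{1}\cdot\frac12$, giving $\mathrm{avg}_t\ge 1/(2n)$ after tracking constants carefully. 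Finally, $\AS\le\Sens$ transfers the bound to worst-case sensitivity.
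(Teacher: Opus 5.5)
Your worst-case argument is correct and even gives $\Sens_{2,n}(F)\ge 1/n$. The average-sensitivity bound, however, is the main claim of the statement, and it does not follow from a single chain. Your last step loses a factor of $n$, not a constant.

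At $\sigma^t$ your chain controls only the deletions of voters $t$ and $t+1$. After averaging over all $n$ deletions and dividing by $\OPT_{d,\sigma^t}=\min\{t,n-t\}/n$, each of these contributes only $|f(t)-h(t)|/\min\{t,n-t\}$ or $|f(t)-h(t-1)|/\min\{t,n-t\}$. The telescoping therefore yields at best $\sum_{t=1}^{n-1}\min\{t,n-t\}\,\mathrm{avg}_t\ge 1$, so $\max_t \mathrm{avg}_t\ge 1/\sum_t\min\{t,n-t\}=\Theta(1/n^2)$. Your own displayed inequality $n\cdot\mathrm{avg}_t\ge 1/n$ says exactly $\mathrm{avg}_t\ge 1/n^2$.

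Re-choosing the metric cannot recover the factor $n$. Every consistent metric with $d(a,b)=1$ forces each voter to be at distance at least $1/2$ from its less-preferred candidate, so $\OPT\ge\min\{s,n-s\}/(2n)$. The remaining $n-2$ deletions at $\sigma^t$ would have to be charged as well. As you noted, without anonymity they are not equivalent to deleting voter $t$ or $t+1$, and one fixed chain gives no control over them.

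The missing idea, which is how the paper handles non-anonymity, is to average over all chains. For $S\subseteq\Voters$, let $D_i(S)$ be the change in $b$-mass when voter $i$ is deleted from the profile whose $b$-supporters are $S$. Take a uniformly random permutation $\pi$ of the voters and set $S_k=\{\pi_1,\ldots,\pi_k\}$. Conditionally on $S_k=S$, the voter switched at step $k$ is uniform on $S$. Conditionally on $S_{k-1}$, it is uniform on $\Voters\setminus S_{k-1}$. The expected deletion terms along the path are therefore $\frac1k\sum_{i\in S_k}D_i(S_k)$ and $\frac1{n-k+1}\sum_{i\notin S_{k-1}}D_i(S_{k-1})$. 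These involve the full deletion sums that enter average sensitivity.

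Now suppose every nonempty proper $S$ had $\sum_i D_i(S)<\min\{|S|,n-|S|\}/(2n)$. Then each of the $2(n-1)$ nontrivial expected terms is below $1/(2n)$, while the endpoint terms vanish by unanimity. This gives $1<(n-1)/n$, a contradiction. Place the voters of the resulting $S$ at their top choices. That instance has normalized average deletion distance $\sum_i D_i(S)/\min\{|S|,n-|S|\}\ge 1/(2n)$, and the worst-case bound follows from $\AS\le\Sens$.
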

\begin{proof}
Let $\Voters$ be the voter set with $|\Voters|=n$, and let $\set{a,b}$ be the candidate set.
If $\Voters'\subseteq\Voters$ is nonempty and $T\subseteq\Voters'$, let $\sigma_{\Voters'}^T$ denote the profile over $\Voters'$ in which exactly the voters in $T$ rank $b\succ a$.
Define
\[
   p_{\Voters'}(T)\coloneqq F(\sigma_{\Voters'}^T)(b).
\]
For profiles on the original voter set $\Voters$, write $\sigma^S\coloneqq\sigma_{\Voters}^S$.
\Cref{lem:finite-distortion-unanimity} gives $p_{\Voters'}(\emptyset)=0$ and $p_{\Voters'}(\Voters')=1$ for every nonempty $\Voters'\subseteq\Voters$.

For $i\in \Voters$ and $S\subseteq\Voters$, define the deletion difference
\[
   D_i(S)\coloneqq
   \left|p_{\Voters}(S)-p_{\Voters\setminus\set{i}}(S\setminus\set{i})\right|,
\]
where $S\setminus\set{i}=S$ when $i\notin S$.  This equals the Wasserstein distance between the two output lotteries when the candidate distance is normalized to $d(a,b)=1$.

For a nonunanimous profile $\sigma^S$, put the voters in $S$ at the location of $b$ and the voters in $\Voters\setminus S$ at the location of $a$, with $d(a,b)=1$.  This metric is consistent with the profile and has
\[
   \OPT_{d,\sigma^S}=\frac{\min\{|S|,n-|S|\}}{n}.
\]
Therefore the normalized average deletion distance at this instance is
\begin{equation}\label{eq:lower-instance-sensitivity}
   \frac1n\sum_{i\in\Voters}
   \frac{\W_d(F(\sigma^S),F((\sigma^S)^{-i}))}{\OPT_{d,\sigma^S}}
   =
   \frac{\sum_{i\in\Voters}D_i(S)}{\min\{|S|,n-|S|\}}.
\end{equation}
It remains to show that some nonempty proper subset $S$ makes the right-hand side at least $1/(2n)$.  Suppose, toward a contradiction, that every nonempty proper subset $S\subset \Voters$ satisfies
\begin{equation}\label{eq:lower-contradiction-assumption}
   \sum_{i\in\Voters}D_i(S)<\frac{\min\{|S|,n-|S|\}}{2n}.
\end{equation}
Choose a uniformly random permutation $\pi$ of $\Voters$, and let $S_k=\set*{\pi_1,\ldots,\pi_k}$ for $k=0,\ldots,n$.  Since $p_{\Voters}(S_0)=0$ and $p_{\Voters}(S_n)=1$, the probabilities along this random path satisfy
\[
   1\le\sum_{k=1}^n |p_{\Voters}(S_k)-p_{\Voters}(S_{k-1})|.
\]
For the step from $S_{k-1}$ to $S_k$, the changed voter is $\pi_k$.  Since $S_k\setminus\set{\pi_k}=S_{k-1}$ and $\pi_k\notin S_{k-1}$, deleting $\pi_k$ from either $\sigma^{S_k}$ or $\sigma^{S_{k-1}}$ gives the same profile on $\Voters\setminus\set{\pi_k}$ with $b$-supporter set $S_{k-1}$.  Thus the triangle inequality, with the common term $p_{\Voters\setminus\set{\pi_k}}(S_{k-1})$ inserted between $p_{\Voters}(S_k)$ and $p_{\Voters}(S_{k-1})$, gives
\[
   |p_{\Voters}(S_k)-p_{\Voters}(S_{k-1})|
   \le
   D_{\pi_k}(S_k)+D_{\pi_k}(S_{k-1}).
\]
Taking expectations over the random permutation yields
\[
   1
   \le
   \sum_{k=1}^n
   \E\!\left[D_{\pi_k}(S_k)\right]
   +
   \sum_{k=1}^n
   \E\!\left[D_{\pi_k}(S_{k-1})\right].
\]
For the first sum, condition on the realized set $S_k=S$.  Given this event, the first $k$ positions form a uniformly random ordering of $S$, so $\pi_k$ is uniform on $S$.  Hence, when $1\le k\le n-1$, the set $S_k$ is a nonempty proper subset of $\Voters$, and \eqref{eq:lower-contradiction-assumption} gives
\[
   \E\!\left[D_{\pi_k}(S_k) \mid S_k\right]
   =
   \frac1k\sum_{i\in S_k}D_i(S_k)
   \le
   \frac1k\sum_{i\in\Voters}D_i(S_k)
   <
   \frac{\min\{k,n-k\}}{2nk}
   \le
   \frac1{2n}.
\]
The term $k=n$ is zero because both the all-$b$ profile and every all-$b$ deletion profile output $b$ with probability one.  Hence the first expected sum is less than $(n-1)/(2n)$.

For the second sum, condition on $S_{k-1}$.  When $2\le k\le n$, the voter $\pi_k$ is uniform in $\Voters\setminus S_{k-1}$, and \eqref{eq:lower-contradiction-assumption} gives
\[
   \E\!\left[D_{\pi_k}(S_{k-1})\mid S_{k-1}\right]
   =
   \frac1{n-k+1}
   \sum_{i\in\Voters\setminus S_{k-1}}D_i(S_{k-1})
   \le
   \frac1{n-k+1}
   \sum_{i\in\Voters}D_i(S_{k-1})
   <
   \frac1{2n}.
\]
The term $k=1$ is zero because both the all-$a$ profile and every all-$a$ deletion profile output $a$ with probability one.  Hence the second expected sum is also less than $(n-1)/(2n)$.  Combining the two estimates gives
\[
   1<\frac{n-1}{n},
\]
a contradiction.  Therefore some nonempty proper subset $S$ satisfies
\[
   \sum_{i\in\Voters}D_i(S)
   \ge
   \frac{\min\{|S|,n-|S|\}}{2n}.
\]
For this two-candidate instance, \eqref{eq:lower-instance-sensitivity} gives normalized average deletion distance at least $1/(2n)$.  Taking the supremum over instances proves the average-sensitivity lower bound.  Since $\AS_{2,n}(F)\le\Sens_{2,n}(F)$, the same lower bound applies to worst-case sensitivity.
\end{proof}
\subsection{Universal Approximate Differential Privacy}
We give the universal lower bound for approximate differential privacy under replacement of one voter's ranking.
\begin{proposition}[Universal Approximate Differential Privacy Lower Bound]\label{prop:privacy-lower}
Fix $n\ge1$.  Let $F$ be a randomized ordinal rule with finite metric distortion.  If $F$ is $(\varepsilon_{\mathsf{DP}},\delta)$-differentially private under replacement adjacency on all two-candidate $n$-voter profiles, then it holds that
\[
   \delta
   \ge
   \left(\sum_{j=0}^{n-1} \e^{j\varepsilon_{\mathsf{DP}}}\right)^{-1}.
\]
In particular, if $\varepsilon_{\mathsf{DP}}\le c/n$ for a constant $c\ge0$, then we have $\delta\ge (\e^{-c})/n$; hence $\varepsilon_{\mathsf{DP}}=O(1/n)$ forces $\delta=\Omega(1/n)$.
\end{proposition}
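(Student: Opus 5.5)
Consider the same chain of two-candidate profiles used in \Cref{prop:deterministic-privacy-lower}: candidates $a,b$, voter set $\Voters=\set{1,\ldots,n}$, and for $t=0,\ldots,n$ let $\sigma^t$ be the profile in which voters $i\le t$ rank $b\succ a$ and the others rank $a\succ b$. Consecutive profiles $\sigma^t,\sigma^{t+1}$ are replacement-neighboring. By \Cref{lem:finite-distortion-unanimity}, $F(\sigma^0)=\delta_a$ and $F(\sigma^n)=\delta_b$. Write $q_t\coloneqq F(\sigma^t)(b)=\Prob[Y_{\sigma^t}\in\set{b}]$, so $q_0=0$ and $q_n=1$.

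Apply the differential-privacy inequality to the event $S=\set{b}$ and the ordered pair $(\sigma^{t+1},\sigma^t)$. This gives the recursion $q_{t+1}\le \e^{\varepsilon_{\mathsf{DP}}}q_t+\delta$ for $t=0,\ldots,n-1$. Unrolling from $q_0=0$ by induction yields
\[
   q_n\le \delta\sum_{j=0}^{n-1}\e^{j\varepsilon_{\mathsf{DP}}}.
\]
The induction step is $q_{t}\le\delta\sum_{j=0}^{t-1}\e^{j\varepsilon_{\mathsf{DP}}}$, which implies $q_{t+1}\le\delta\sum_{j=1}^{t}\e^{j\varepsilon_{\mathsf{DP}}}+\delta$. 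Since $q_n=1$, this rearranges to the claimed bound $\delta\ge(\sum_{j=0}^{n-1}\e^{j\varepsilon_{\mathsf{DP}}})^{-1}$.

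For the consequence, suppose $\varepsilon_{\mathsf{DP}}\le c/n$. Each term satisfies $\e^{j\varepsilon_{\mathsf{DP}}}\le \e^{(n-1)c/n}\le \e^{c}$ for $0\le j\le n-1$. Hence the sum is at most $n\e^{c}$, and $\delta\ge \e^{-c}/n$, which is $\Omega(1/n)$.

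There is no real obstacle here. The only points needing care are the direction of the privacy inequality, since the DP condition is applied in the direction that bounds the later profile by the earlier one, and the fact that the base case $q_0=0$ comes from finite distortion via unanimity. The case $n=1$ is consistent: the bound reads $\delta\ge1$, matching the single step from $q_0=0$ to $q_1=1$.
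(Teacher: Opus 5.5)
Your proposal is correct and follows essentially the same argument as the paper. You use the same chain of profiles $\sigma^0,\ldots,\sigma^n$, the same unanimity endpoints from finite distortion, the same recursion $q_{t+1}\le \e^{\varepsilon_{\mathsf{DP}}}q_t+\delta$ on the event $\set{b}$, and the same bound $\sum_{j=0}^{n-1}\e^{j\varepsilon_{\mathsf{DP}}}\le n\e^{c}$ for the consequence.
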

\begin{proof}
Let the candidates be $a$ and $b$, and fix the voter set $\Voters=\{1,\ldots,n\}$.  For $t=0,1,\ldots,n$, let $\sigma^t$ be the profile in which each voter $i\le t$ ranks $b\succ a$, and each voter $i>t$ ranks $a\succ b$.  For each $t=0,1,\ldots,n-1$, the profiles $\sigma^t$ and $\sigma^{t+1}$ differ only in voter $t+1$, so they are replacement-neighboring.  By \Cref{lem:finite-distortion-unanimity}, $F(\sigma^0)=\delta_a$ and $F(\sigma^n)=\delta_b$.

Put $p_t\coloneqq F(\sigma^t)(b)$.  Then $p_0=0$ and $p_n=1$.  Applying approximate differential privacy to the released-winner event $\set{b}$ for the neighboring pair $(\sigma^{t+1},\sigma^t)$ gives
\[
   p_{t+1}\le \e^{\varepsilon_{\mathsf{DP}}}p_t+\delta.
\]
Iterating this recursion from $p_0=0$ gives
\[
   p_n\le \delta\sum_{j=0}^{n-1}\e^{j\varepsilon_{\mathsf{DP}}}.
\]
Since $p_n=1$, the claimed lower bound on $\delta$ follows.
If $\varepsilon_{\mathsf{DP}}\le c/n$, then $\sum_{j=0}^{n-1}\e^{j\varepsilon_{\mathsf{DP}}}\le n\e^c$, which gives the stated $(\e^{-c})/n$ lower bound.
\end{proof}
\section{Additional Related Work}\label[appendix]{app:related-work}
\paragraph{Sensitivity for randomized algorithms and metric voting.}
\citet{VarmaYoshida} initiated a systematic study of average sensitivity for graph algorithms, measuring the average movement of an output distribution under uniformly random deletion of one input element.  Subsequent work has studied average sensitivity in spectral clustering~\citep{PengYoshida2020}, Euclidean $k$-clustering~\citep{YoshidaIto2022}, dynamic programming~\citep{KumabeYoshidaDP}, knapsack problems~\citep{KumabeYoshidaKnapsack}, decision-tree learning~\citep{HaraYoshida2023}, hierarchical $k$-median clustering~\citep{LiHeBaiPeng2025}, and geometric algorithms~\citep{EbbensYoshida2026}.  The main stability notion in this paper follows the stronger worst-case deletion-sensitivity viewpoint of \citet{YoshidaZhou2021}, who study edge and vertex deletion for maximum matching, with deterministic changes measured by symmetric difference and randomized changes measured by Wasserstein distance over matchings.  The closest Gibbs-distribution analogue is \citet{YoshidaZhang2026}, who obtain low-sensitivity matching algorithms by sampling from Gibbs distributions over matchings.
We adapt this worst-case viewpoint to metric voting: the deleted object is a voter, the Wasserstein ground distance is given by the candidate metric, and the normalization is by the original metric-voting optimum.  Interestingly, for fixed $m$, our results exhibit a stronger deterministic--randomized contrast than the one usually visible from existing sensitivity results.  Specifically, the Gibbs rule has $O(1/n)$ worst-case sensitivity, whereas \cref{prop:det-worst-case-sensitivity-lower} gives $\Omega(1)$ worst-case sensitivity for finite-distortion deterministic rules.  In contrast, lower bounds in the sensitivity literature often apply to randomized algorithms as well, while the worst-case-sensitivity separation of \citet{YoshidaZhou2021} for maximum matching under edge deletion is between $O_\varepsilon(1)$ for randomized algorithms and $\Omega(\log^* n)$ for deterministic algorithms, where $\varepsilon$ is the approximation parameter and $n$ is the number of vertices.  In light of this, metric voting exhibits a stark deterministic--randomized contrast already for two candidates: randomization achieves the one-voter deletion scale, whereas finite-distortion deterministic rules remain at a constant worst-case scale.  Within metric voting, \citet{BergerFeldmanGkatzelisTan2024} study a different robustness axis in the learning-augmented framework, where the performance guarantee depends on the error of a predicted optimal candidate rather than on deletion of one voter.
\paragraph{Differential privacy in mechanism design and social choice.}
Differential privacy was introduced by \citet{DworkMcSherryNissimSmith2006} as an event-wise bound on how much one individual's data can affect output probabilities, with the approximate version treated in standard references such as \citet{DworkRoth2014}.  It has also been used as a design principle in mechanism design, beginning with \citet{McSherryTalwar2007} and including approximately optimal mechanism design via differential privacy~\citep{NissimSmorodinskyTennenholtz2012}.  In voting, \citet{LiuLuXiaZikas2020} analyze deterministic voting rules under distributional differential privacy, where privacy is defined relative to uncertainty in an adversarial observer's prior over individual votes; \citet{LiLiuXiaCaoWang2023} design randomized Condorcet-style rules satisfying standard differential privacy together with voting axioms.  Other related work has studied privacy-preserving randomized social choice, local differential privacy as a defense against electoral control by voter deletion, and private rank aggregation~\citep{Torra2020,TaoChenXuShi2022,AlabiGhaziKumarManurangsi2022}. These papers address privacy and preference aggregation from different objectives and models; our focus is the compatibility of approximate differential privacy for a sampled winner under replacement adjacency with a constant improvement over the metric-distortion barrier of $3$.
\paragraph{Metric distortion.}
The distortion framework of \citet{ProcacciaRosenschein2006} measures the loss caused by making social-choice decisions from ordinal rankings rather than the underlying utilities or costs.  In the metric version developed by \citet{ABEPS2018}, voters and candidates lie in an unknown metric space, the social cost of a candidate is its total distance to the voters, and the voting rule sees only the induced rankings.  The early metric-distortion results already identified the central deterministic benchmark: \citet{ABEPS2018} proved a lower bound of $3$ and gave constant-distortion guarantees for familiar deterministic rules, including a distortion-$5$ guarantee for Copeland.  Subsequent work studied deterministic rules from several angles.
\citet{GoelKrishnaswamyMunagala2017} studied lower bounds and fairness properties for deterministic rules.
\citet{MunagalaWang2019} improved the best deterministic upper bound to $2+\sqrt 5$, and \citet{GHS2020} resolved the deterministic case by giving a rule with distortion $3$, matching the lower bound.
Later work gave simpler optimal deterministic rules and further variants~\citep{KizilkayaKempe2022}.
See the survey of \citet{AFSV2021} for an overview of metric distortion and related variants in social choice.

Randomization changes the question from the optimal deterministic constant to whether the barrier $3$ can be beaten in the worst case.  Random Dictatorship is the canonical randomized baseline; for $n$ voters, it has distortion at most $3-2/n$ in the metric setting~\citep{AnshelevichPostl2017}.  Other randomized rules, including the rule of \citet{Kempe2020}, give bounds such as $3-2/m$ for $m$ candidates, but these guarantees still approach $3$ when the number of voters or candidates is allowed to grow.  These results left open whether distortion $2$, the general lower bound for randomized rules in metric spaces~\citep{AnshelevichPostl2017}, was achievable.  This possibility was ruled out by lower bounds above $2$ due to \citet{CharikarRamakrishnan2022} and \citet{PulyassarySwamy2021}.  On the upper-bound side, \citet{CRWW2024} broke the randomized $3$ barrier by combining Maximal Lotteries with complementary rules and proving distortion at most $2.753$.  \citet{Cai2026} then showed that bounded randomness already suffices to beat $3$: at each profile, one can choose a constant-size list of candidates so that the uniform lottery on that list has distortion below $3$ by an absolute margin.  Our construction takes this bounded-randomness existence theorem as the starting point and asks whether a constant improvement over $3$ can be made compatible with worst-case sensitivity and approximate differential privacy.
\section{Limitations and Further Directions}\label[appendix]{app:limitations}
This appendix collects limitations and further directions that are useful for interpreting the main guarantees.
\paragraph{The Logarithmic Dependence on the Number of Candidates.}
The factor $K\log m$ comes from the entropy cost of randomizing over all $K$-lists.  The bounded-randomness theorem guarantees the existence of one $K$-list with $\lambda_\sigma(L)\le B$ among at most $m^K$ list states, but it does not give a rule with controlled sensitivity that identifies that list or a lower bound on the number of lists satisfying the same inequality.  The following two-level example illustrates why the entropy term appears in this analysis.  Suppose that the feasible lists form a two-level energy landscape: one list has energy $B$, and the remaining $m^K-1$ lists have energy $B+\xi$, where $\xi>0$ is fixed.  Under Gibbs weights with inverse temperature $\eta$, the excess expected energy satisfies
\[
\E_\eta[E]-B
=
\xi\frac{(m^K-1)\e^{-\eta \xi}}
{1+(m^K-1)\e^{-\eta \xi}}.
\]
Thus, making the expected energy at most $B+\xi/2$ requires
\[
\eta\ge \frac{1}{\xi}\log(m^K-1)
=
\Omega(K\log m).
\]
This example captures a limitation of an analysis that only controls the number of feasible lists and their energy gap from the minimum.  Still, the example leaves room for sharper bounds that use additional structure of feasible lists or of the distortion objective, such as a selection procedure with controlled sensitivity for a list with $\lambda_\sigma(L)\le B$, a proof that the sublevel set $\Set*{L}{\lambda_\sigma(L)\le B+O(1)}$ has large measure, or a different distribution over lists that avoids paying entropy over all $m^K$ states.
\paragraph{Sensitivity and Differential Privacy.}
Worst-case sensitivity and differential privacy control different effects of changing one voter.  Sensitivity bounds normalized Wasserstein movement of the candidate lottery under deletion, while differential privacy bounds event probabilities for the released winner under replacement.  The former is a metric transport bound, whereas the latter is an event-wise probability bound.  The two definitions alone do not imply either direction; an implication would need additional assumptions connecting candidate distances and output events.  We thus prove the two guarantees separately, although both proofs rely on the same low-ratio stability estimates for fixed lists.  Clarifying the exact relationship between these guarantees, or developing a unified stability notion that captures both, remains open.  \Cref{prop:privacy-lower} shows that, when $\varepsilon_{\mathsf{DP}}=O(1/n)$, an additive privacy parameter of order $1/n$ is unavoidable in general.  If $\delta=\Theta(1/n)$, our upper bound gives $\varepsilon_{\mathsf{DP}}=O((\log m+\log n)/n)$, while the lower bound applies in the stricter regime $\varepsilon_{\mathsf{DP}}=O(1/n)$; closing this logarithmic gap also remains open.
\paragraph{Constants and Computation.}
The constant inherited from bounded randomness is extremely small, and rational rounding can make the list size enormous.  The rule is therefore best viewed as an existential construction.  For fixed $K$, the Gibbs energies can be computed by finitely many linear programs, and the resulting Gibbs weights can be approximated to arbitrary numerical precision; \Cref{sec:computability} gives the formulation.  Improving the bounded-randomness constants, reducing the list size, or finding a more compact representation of the biased-metric ratio value are natural open problems.
\end{document}